\documentclass{ws-rmp1}

\usepackage[english]{babel}
\usepackage[initials,nobysame]{amsrefs}
%
%
%
%
%
%

\newcommand{\N}{\mathbb{N}}
\newcommand{\R}{\mathbb{R}}
\newcommand{\C}{\mathbb{C}}
\newcommand{\Rd}{\mathbb{R}^d}
\newcommand{\Rdn}{\mathbb{R}^{dn}}
\newcommand{\RD}{\mathbb{R}^D}
\newcommand{\ZD}{\mathbb{Z}^D}
\newcommand{\Zd}{\mathbb{Z}^d}
\newcommand{\Zdn}{\mathbb{Z}^{dn}}

\newcommand{\dE}{\mathrm{d}E}

\newcommand{\abf}{\mathbf{a}}
\newcommand{\bbf}{\mathbf{b}}
\newcommand{\ubf}{\mathbf{u}}
\newcommand{\vbf}{\mathbf{v}}
\newcommand{\wbf}{\mathbf{w}}
\newcommand{\xbf}{\mathbf{x}}
\newcommand{\ybf}{\mathbf{y}}

\newcommand{\id}{\boldsymbol{1}}
\newcommand{\Prob}{\mathbb{P}}
\newcommand{\Ex}{\mathbb{E}}

\newcommand{\eps}{\varepsilon}

\newcommand{\calF}{\mathcal{F}}
\newcommand{\calI}{\mathcal{I}}

\newcommand{\Hhat}{\widehat{H}}
\DeclareMathOperator{\dist}{dist}
\DeclareMathOperator{\diam}{diam}
\DeclareMathOperator{\md}{d}
\DeclareMathOperator{\Tr}{Tr}
\DeclareMathOperator{\supp}{supp}
\DeclareMathOperator{\Imag}{Im}
\DeclareMathOperator{\Real}{Re}

\begin{document}

\markboth{Fauser, Warzel}
{Multiparticle localization for disordered systems}

%
\catchline{}{}{}{}{}
%

\title{Multiparticle localization for disordered systems on continuous space via the fractional moment method}

\author{MICHAEL FAUSER}

\address{Zentrum Mathematik, TU M\"unchen, Boltzmannstrasse 3, 85747 Garching, Germany\\ 
\email{fauser@ma.tum.de} }

\author{SIMONE WARZEL}

\address{Zentrum Mathematik, TU M\"unchen, Boltzmannstrasse 3, 85747 Garching, Germany\\ 
\email{warzel@ma.tum.de} }

\maketitle


\begin{abstract}
We investigate spectral and dynamical localization of a quantum system of $ n $ particles on $ \R^d $ which are subject to a random potential and interact through a pair potential which may have infinite range. We establish two conditions which ensure spectral and dynamical localization near the bottom of the spectrum of the $ n $-particle system: i)~localization is established in the regime of weak interactions supposing one-particle localization, and ii)~localization is also established under a Lifshitz-tail type condition on the sparsity of the spectrum. In case of polynomially decaying interactions, we provide an upper bound on the number of particles up to which these conditions apply.

\end{abstract}

\keywords{Multiparticle random operator; localization.}

\ccode{Mathematics Subject Classification 2000: 47B80, 82B44}

\section{Introduction}

Rigorous mathematical analysis of random operators has been a vital field of mathematical physics for the last few decades. For one-particle systems localization in the spectral and dynamical sense has been established rigorously in various regimes of energy and disorder both for lattice and continuum models \cites{GMP,KS,FS,AM,BK,Sto01,GK13}.
Despite of its physical relevance, the question of localization in disordered many-particle systems, however, is much less understood. So far, the mathematical analysis of interacting disordered systems with a 
macroscopic number of particles has been mostly restricted to questions of the existence and properties of thermodynamic quantities \cites{Ven,BL,KM11} -- with the exception of a proof of Bose-Einstein condensation for the disordered Lieb-Liniger model in one dimension \cite{SYZ12}. 

In the course of the last years, several  localization results have been obtained for systems of a fixed number of interacting particles. Persistence of localization has first been established for lattice systems, using multiscale \cites{CS09-1,CS09-2} or fractional moment~\cite{AW09} techniques. The multiscale analysis has been adapted to continuum multiparticle systems \cite{CBS11}. In addition, an improved version of the multiscale analysis, commonly referred to as ``bootstrap multiscale analysis'' has recently been extended to multiparticle systems as well, both for lattice \cite{KN13-1} and continuum models \cite{KN13-2}.

The aim of the present work is to complement these results and present an adaption of the multiparticle fractional-moment method developed in \cite{AW09} to the continuum. The corresponding multiparticle random Schr\"odinger operator has an alloy-type external potential and includes an interaction potential with rapidly decaying two-body interaction terms. We focus on localization in energy regimes at the bottom of the spectrum and pay particular attention to how the strength and the decay of the interaction between different particles affects the localization estimates.

Our main results  are divided into two different types of decay of the interaction. In the first case, we assume that the interaction between particles located at points $x$ and $y$ decays (sub)exponentially, i.e., as $e^{-\mu |x-y|^\zeta}$ with $ \mu > 0 $ and $0<\zeta\leq1$. This yields strong dynamical localization with decay estimates of the same order, albeit with respect to the ``Hausdorff distance'' $\dist_H$ and not the regular Euclidean metric on the configuration space of the particles. In the second case, we only assume fast polynomial decay which allows us to establish localization for systems consisting of at most a maximal number of particles. Although the interaction decays only polynomially, the localization estimates decay as $e^{-\mu \dist_H(\xbf,\ybf)^\zeta}$, with $ \mu > 0 $ and $\zeta\in(0,1]$ getting smaller as the particle number increases.

In our analysis it is essential to switch between different notions of localization, namely fractional moment localization and eigenfunction correlator localization. Section \ref{section:fmec} discusses these two terms and their relation in a fairly general way that is of interest in itself and might be of use in other settings as well. The main goal in that section is a proof that fractional moment localization and eigenfunction correlator localization are equivalent up to the localization length and a small change in the interval of validity.

The main part of our analysis is placed in Section \ref{section:mploc}. We prove the inductive step that localization for up to $n-1$ particles implies localization for $n$ particles. We proceed in several steps, first considering partially interactive systems and then using this to prove a ``rescaling inequality'' for the fully interacting systems. Sufficiently good estimates on an initial length scale will then yield localization for the $n$-particle system. Finally, in Section \ref{section:proofofmainresults}, we will use these results to prove the main theorems that are presented in Section \ref{section:mainresults}.

\subsection{The model}\label{section:model}

We start with a one-particle Hamiltonian
\begin{equation}
H^{(1)}(\omega) = H_0^{(1)} + V^{(1)}(\omega),
\end{equation}
acting in $L^2(\Rd)$, which is the sum of a deterministic term
$ H_0^{(1)} = -\Delta_d + V_0 $, 
consisting of the $d$-dimensional Laplacian and a background potential $V_0$, and an alloy-type random potential
\begin{equation}
V^{(1)}(\omega,x) = \sum_{\zeta\in\Zd} \eta_\zeta(\omega)U(x-\zeta).
\end{equation}
The  $n$-particle operator $H^{(n)}$, acting on the tensor product space $(L^2(\Rd))^n=L^2(\Rdn)$, has the form
\begin{equation}
H^{(n)} = \sum_{j=1}^n (H_0^{(1)}(j) + V^{(1)}(j)) + \alpha_W W^{(n)} = H_0^{(n)} + V^{(n)} + \alpha_W W^{(n)},
\end{equation}
where $H_0^{(1)}(j)=\id\otimes\cdots\otimes H_0^{(1)}\otimes\cdots\otimes\id$ with $H_0^{(1)}$ acting on the variable $x_j$. The same holds for $V^{(1)}(j)$. Denoting the $n$-particle configuration by $\xbf=(x_1,\ldots,x_n)\in\Rdn$, the random potential reads 
\begin{equation} V^{(n)}(\omega,\xbf)  = \sum_{\zeta \in \mathbb{Z}^d} \eta_\zeta(\omega) N_\zeta(\xbf) \quad \mbox{with}\quad  N_\zeta(\xbf) = \sum_{j=1}^n U(x_j - \zeta) \, .
\end{equation} 
The additional operator $W^{(n)}$ denotes the potential corresponding to interactions between different particles. Its strength is controlled by $\alpha_W\geq 0$. 
We make the following assumptions:
\begin{enumerate}
  \item[(I)] The background potential $V_0:\Rd\rightarrow\R$ is bounded and $\Zd$-periodic.
  \item[(II)] The single site potential $U:\Rd\rightarrow\R$ is non-negative, bounded and supported in a ball of radius $r_U>0$ around $0$. Furthermore, the added single-site potentials satisfy the ``covering-condition''
  \begin{equation}
  \inf_{x\in\Rd}\sum_{\zeta\in\Zd}U(x-\zeta) > 0.
  \end{equation}
  \item[(III)] The random variables $\eta_\zeta$, $\zeta\in\Zd$, are independent and identically distributed with an absolutely continuous marginal distribution. The corresponding density $\rho$ is bounded and has compact suppport. Without loss of generality, $\inf\supp\rho=0$.
  \item[(IV)] The interaction potential $W^{(n)}$ has the form
  \begin{equation}
  W^{(n)}(\xbf) = \sum_{j<k}w(x_j-x_k) \, . 
  \end{equation}
A bound on the two-particle interaction potential $w$ is given by $|w(x_j-x_k)|\leq w_b(|x_j-x_k|)$ with a monotonely decreasing function $w_b$ satisfying $w_b(r)\leq C(1+r)^{-1}$ for some $C< \infty $.
\end{enumerate}

As the variables $\eta_\zeta$, $\zeta\in\Zd$, are independent and identically distributed, the Hamiltonians $H^{(n)}(\omega)$ form an ergodic family of operators with respect to the translations $(x_1,\ldots,x_n)\mapsto(x_1+a,\ldots,x_n+a)$, $a\in\Zd$. The non-randomness of the spectrum is a standard consequence (cf. \cites{Ki89,CaLa90}).
\begin{proposition}
The spectrum of $H^{(n)}$ is almost surely a non-random set $ \Sigma^{(n)} $.
\end{proposition}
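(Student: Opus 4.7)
The plan is to apply the standard ergodicity argument for random self-adjoint operators; see e.g.~\cites{Ki89,CaLa90}. First, I would establish covariance of $H^{(n)}$ under diagonal lattice translations. For $a\in\Zd$, let $U_a$ be the unitary on $L^2(\Rdn)$ defined by $(U_a\psi)(x_1,\ldots,x_n)=\psi(x_1-a,\ldots,x_n-a)$, and let $\theta_a$ denote the measure-preserving shift on the probability space $\Omega$ given by $(\theta_a\omega)_\zeta=\omega_{\zeta-a}$. Since $-\Delta_d$ and $W^{(n)}$ (which depends only on the differences $x_j-x_k$) are translation invariant, $V_0$ is $\Zd$-periodic by assumption~(I), and the alloy-type potential $V^{(n)}(\omega,\xbf)=\sum_\zeta \eta_\zeta(\omega)N_\zeta(\xbf)$ transforms covariantly under the combined shifts on $\Omega$ and $\Rdn$, a direct computation would yield
$$U_a\,H^{(n)}(\omega)\,U_a^{-1}=H^{(n)}(\theta_a\omega),\qquad a\in\Zd.$$

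Next, I would invoke ergodicity of the base dynamics. By assumption~(III) the random variables $(\eta_\zeta)_{\zeta\in\Zd}$ are i.i.d., so the product measure on $\Omega$ is $(\theta_a)$-invariant and the $\Zd$-action is ergodic (e.g.~by Kolmogorov's zero-one law applied to the tail $\sigma$-algebra). Together with the covariance identity, unitary equivalence gives $\sigma(H^{(n)}(\theta_a\omega))=\sigma(H^{(n)}(\omega))$ for every $a\in\Zd$ and every $\omega$, so that the random closed set $\sigma(H^{(n)}(\cdot))$ is itself $\theta$-invariant.

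Finally, to pin down a deterministic spectrum, I would enumerate the open intervals of $\R$ with rational endpoints as $\{I_k\}_{k\in\N}$ and set $A_k:=\{\omega:\sigma(H^{(n)}(\omega))\cap I_k\neq\emptyset\}$. Each $A_k$ is measurable by strong-resolvent measurability of $\omega\mapsto H^{(n)}(\omega)$ and is $\theta$-invariant by the previous step; the $0$--$1$ law therefore gives $\Prob(A_k)\in\{0,1\}$. On the intersection of the countably many full-measure events $A_k$ (for those $k$ with $\Prob(A_k)=1$) and $\Omega\setminus A_k$ (for those $k$ with $\Prob(A_k)=0$), the spectrum coincides with the deterministic closed set
$$\Sigma^{(n)}\;:=\;\R\setminus\bigcup_{k:\,\Prob(A_k)=0}I_k.$$
The only genuine point of care is verifying the strong-resolvent measurability of the random Hamiltonian, but this is standard given the explicit form of the coefficients under assumptions~(I)--(IV), so no real obstacle arises.
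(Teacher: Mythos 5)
Your argument is correct and is precisely the standard ergodicity argument (covariance under diagonal $\Zd$-translations, ergodicity of the i.i.d.\ shift, and Pastur's rational-interval/zero--one-law argument) that the paper itself invokes by simply citing \cites{Ki89,CaLa90} without giving details. Nothing is missing; the measurability point you flag at the end is indeed the only technical item to check, and it is routine for these coefficients.
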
 
We denote the almost sure infimum of the spectrum by
\begin{equation}
E_0^{(n)}:=\inf\Sigma^{(n)}.
\end{equation}
For later purpose, we also note the following:
\begin{proposition}\label{prop:gs}
If $j_1+\ldots+j_l=n$, then
  \begin{equation}
  E_0^{(n)}\leq \sum_{i=1}^l E_0^{(j_i)}.
  \end{equation}
\end{proposition}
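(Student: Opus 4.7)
The plan is a Rayleigh--Ritz variational argument with a product trial state, combined with the independence of the random potential over disjoint spatial windows. Fix $\delta > 0$. For each $i = 1, \ldots, l$, since $\inf\sigma(H^{(j_i)}(\omega)) = E_0^{(j_i)}$ almost surely by the preceding proposition, I pick one such realization $\omega^{(i)}$ and, by Rayleigh--Ritz together with the density of $C_c^\infty(\R^{dj_i})$ in the form domain, choose a compactly supported unit vector $\psi_i^* \in C_c^\infty(\R^{dj_i})$ satisfying $\langle \psi_i^*, H^{(j_i)}(\omega^{(i)})\psi_i^*\rangle < E_0^{(j_i)} + \delta$. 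Let $K_i^* \subset \Zd$ denote the finite set of lattice points within distance $r_U$ of the support of $\psi_i^*$; then $\omega \mapsto \langle \psi_i^*, H^{(j_i)}(\omega)\psi_i^*\rangle$ depends only on $(\eta_\zeta(\omega))_{\zeta \in K_i^*}$ and is affine (in particular continuous) in those variables.

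Next I would translate: choose integer vectors $\abf_i \in \Zd$ large enough that the sets $K_i^* + \abf_i$ are pairwise disjoint and at mutual distance at least $L$, and set $\tilde\psi_i(\xbf) := \psi_i^*(x_1 - \abf_i, \ldots, x_{j_i} - \abf_i)$. The $\Zd$-periodicity of $V_0$ from (I) gives $\langle \tilde\psi_i, H^{(j_i)}(\omega)\tilde\psi_i\rangle = \langle \psi_i^*, H^{(j_i)}(\omega')\psi_i^*\rangle$ with $\omega'_\zeta = \eta_{\zeta + \abf_i}(\omega)$. By assumption (III) and the independence of the $\eta_\zeta$'s over the disjoint windows $K_i^* + \abf_i$, the event $A$ on which $(\eta_\zeta(\omega))_{\zeta \in K_i^* + \abf_i}$ lies simultaneously, for every $i$, in a small neighborhood of the target configurations derived from $\omega^{(i)}$ has positive probability, and on $A$ one then has $\langle \tilde\psi_i, H^{(j_i)}(\omega)\tilde\psi_i\rangle < E_0^{(j_i)} + 2\delta$ for all $i$ at once.

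The product state $\Psi := \tilde\psi_1 \otimes \cdots \otimes \tilde\psi_l$ now gives the decomposition
\[
\langle \Psi, H^{(n)}(\omega)\Psi\rangle = \sum_{i=1}^l \langle \tilde\psi_i, H^{(j_i)}(\omega)\tilde\psi_i\rangle + \alpha_W \sum_{i<i'} Q_{ii'}(\Psi),
\]
in which the within-group two-body interactions have been absorbed into the first sum and the cross-group terms $Q_{ii'}$ come from $w(x_j - x_k)$ with $j$ and $k$ in different groups. Assumption (IV) bounds $|Q_{ii'}| \leq w_b(L - c)$ with $c$ depending only on the diameters of the $\psi_i^*$, so this remainder tends to $0$ as $L \to \infty$. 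Therefore on $A$,
\[
\inf\sigma(H^{(n)}(\omega)) \leq \sum_{i=1}^l E_0^{(j_i)} + 2l\delta + \alpha_W \binom{n}{2} w_b(L - c),
\]
and since $\inf\sigma(H^{(n)})$ equals the constant $E_0^{(n)}$ almost surely, this bound on a positive-probability event must hold for $E_0^{(n)}$ itself. Letting $L \to \infty$ and then $\delta \to 0$ yields the claim.

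The main obstacle is the clean interface between the probabilistic and variational steps: one has to justify that compactly supported trial states with energy near $E_0^{(j_i)}$ can be selected in a way compatible with a positive-probability family of disorder configurations in a finite window, and that the independence of the $\eta_\zeta$'s across the disjoint translated windows $K_i^* + \abf_i$ truly allows these local configurations to be realized jointly, so that the event $A$ above is nonempty in the measure-theoretic sense.
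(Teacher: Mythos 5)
Your proof is correct and follows the same overall strategy as the paper: a variational (Rayleigh--Ritz) argument with a product trial state obtained by translating approximate ground states of the blocks far apart, so that the cross-block interaction terms, controlled via assumption (IV), become negligible. The only genuine difference is in how you handle the probabilistic part of the translation step. The paper's sketch asserts that, almost surely, one can choose a translate on which the random potential approximately reproduces the original local configuration — implicitly an ergodic or Borel--Cantelli argument. You instead fix target local configurations $(\eta_\zeta(\omega^{(i)}))_{\zeta\in K_i^*}$ for each block, translate the blocks to \emph{disjoint} windows $K_i^*+\abf_i$, and invoke the independence of the $\eta_\zeta$ across those windows to get a positive-probability event $A$ on which every block simultaneously sees a near-target configuration; you then pass to the deterministic quantity $E_0^{(n)}$. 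This avoids any recurrence-type argument and is arguably cleaner, at the modest cost of invoking the non-randomness of $E_0^{(n)}$ explicitly. One small imprecision worth noting: $K_i^*$ should be the set of $\zeta\in\Zd$ within distance $r_U$ of the union, over $m=1,\ldots,j_i$, of the $m$-th coordinate projections of $\supp\psi_i^*$ to $\Rd$ (not of $\supp\psi_i^*\subset\R^{dj_i}$ itself), but this is clearly what is meant and does not affect the argument.
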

 In the case of a repulsive interaction potential $w\geq 0$, this is in fact an equality. We give an easily generalizable sketch of the proof of the inequality $E_0^{(2)}\leq E_0^{(1)}+E_0^{(1)}$:
Choose $\phi^{(1)}\in C_c^\infty(\Rd)$ such $\langle\phi^{(1)},H^{(1)}\phi^{(1)}\rangle\approx E_0^{(1)}$ and $\|\phi^{(1)}\|=1$. Now let $\psi^{(1)}$ be a spatially translated version of $\phi^{(1)}$ with a support that is far away from the support of $\phi^{(1)}$, so that $\langle(\psi^{(1)}\otimes\phi^{(1)}),W^{(2)}(\psi^{(1)}\otimes\phi^{(1)})\rangle\approx0$. Due to the assumptions on $V^{(1)}$, one can almost surely choose this translation in such a way that $\langle\psi^{(1)},V^{(1)}\psi^{(1)}\rangle\approx\langle\phi^{(1)}V^{(1)}\phi^{(1)}\rangle$. Consequently, $\langle(\psi^{(1)}\otimes\phi^{(1)}),H^{(2)}(\psi^{(1)}\otimes\phi^{(1)})\rangle\approx 2E_0^{(1)}$ with probability one, which implies the assertion.\\

The results we prove are expressed in estimates on the spatial decay of certain quantities. This decay is measured with respect to the pseudo-metric
\begin{equation}
\dist_H(\xbf,\ybf) = \max\big\{\max_j\min_k|x_j-y_k|,\max_k\min_j|x_j-y_k|\big\}
\end{equation}
for configurations $\xbf,\ybf\in\Rdn$, which is in fact the Hausdorff distance of the sets $\{x_1,\ldots,x_n\}$ and $\{y_1,\ldots,y_n\}$ in $\Rd$.
The norm $|\cdot|$ we use on $\Rd$ and on $\Rdn$ is the $\max$-norm. Balls around a point $x\in\Rd$ with a radius $r>0$ are denoted by $\Lambda_r(x):=\{y\in\Rd\,|\,|y-x|<r\}$, whereas balls around configurations $\xbf\in\Rdn$ are denoted by $B_r(\xbf):=\{\ybf\in\Rdn\,|\,|\ybf-\xbf|<r\}$. The characteristic function $\id_{B_{1/2}(\xbf)}$ of $B_{1/2}(\xbf)$ is denoted by $\chi_\xbf$.

Most of the analysis in the following sections deals with restrictions of $H^{(n)}$ to finite volumes. We will always impose Dirichlet boundary conditions, which will be denoted by $H^{(n)}_{\Omega^n}$ if the Hamiltonian is restricted to $\Omega^n$.

\subsection{Main results}\label{section:mainresults}

The central tool in our analysis is the concept of fractional moment localization, which we define as follows:

\begin{definition}
A bounded interval $I$ is a regime of $n$-particle fractional moment localization of order $\gamma\in(0,1]$ if there exist $C,\mu\in (0,\infty)$ and $s\in(0,1)$ such that
\begin{equation}
\sup_{\substack{\Omega\subset\Rd \\ \mathrm{open,}\\ \mathrm{bounded}}}\sup_{\substack{\Real z\in I \\ 0<|\Imag z|<1}}\Ex\big[\|\chi_\xbf(H_{\Omega^n}^{(n)}-z)^{-1}\chi_\ybf\|^s\big] \leq Ce^{-\mu\dist_H(\xbf,\ybf)^\gamma}
\end{equation}
for all $\xbf,\ybf\in\Rdn$.
\end{definition}
As we will see at the end of Section \ref{section:fmec}, this implies dynamical localization in $I$ in the sense that
\begin{equation}
\Ex\big[\sup_{t\in\R}\|\chi_\xbf e^{-itH^{(n)}}P_I(H^{(n)})\chi_\ybf\|\big] \leq Ae^{-\nu\dist_H(\xbf,\ybf)^\gamma}
\end{equation}
for some $A,\nu>0$ and all $\xbf,\ybf\in\Rdn$, where we denote by $P_I(H^{(n)})$ the spectral projection of $H^{(n)}$ to the interval $I$. In particular this implies absence of continuous spectrum in $I$ and  (sub)exponential decay of the eigenfunctions with eigenvalues in~$I$, cf. Theorem \ref{thm:locinfvolume}.

In the following, we state our main results. They are divided into two different cases depending on the decay of the two-particle interaction. 

\subsubsection*{Case 1: (Sub)exponentially decaying interaction}

We provide criteria for fractional-moment localization near the bottom of the almost-sure spectrum in the regime of weak interaction and  Lifshitz tails.
\begin{theorem}\label{thm:main:case1}
Assume that $H^{(n)}$ satisfies \textnormal{(I)-(IV)} and that $w_b(r)\leq c_we^{-\mu_w r^\gamma_w}$ for some $c_w,\mu_w>0$, $\gamma_w\in(0,1]$ and all $r\geq0$.
\begin{enumerate}
  \item[\textnormal{(i)}] Let $I^{(1)}=[E_0^{(1)},E_0^{(1)}+\eta^{(1)}]$ be a bounded interval of length $\eta^{(1)}>0$ that is a regime of one-particle fractional moment localization of order $\gamma_w\in(0,1]$.
For all $\eta^{(n)}\in(0,\eta^{(1)})$ there is $\alpha_0^{(n)}>0$ such that for any $\alpha_W\in [0,\alpha_0^{(n)}]$ the interval $I^{(n)}=[E_0^{(n)},E_0^{(n)}+\eta^{(n)}]$ is a regime of $n$-particle fractional moment localization of order $\gamma_w$.
  
  \item[\textnormal{(ii)}] Suppose that for any $k\in\{1,\ldots,n\}$ there exist $\xi^{(k)}>22dk$ and $L^{(k)}>0$ such that
\begin{equation}\label{eq:Lifshitz}
\Prob\big(E_0(H_{B_L(\xbf)}^{(k)}) \leq E_0^{(n)}+ L^{-1}\big) \leq L^{-\xi^{(k)}} 
\end{equation}
for all $\xbf\in\R^{dk}$ and $L\geq L^{(k)}$. Then there is $\eta^{(n)}>0$ such that $I^{(n)}=[E_0^{(n)},E_0^{(n)}+\eta^{(n)}]$ is a regime of $n$-particle fractional moment localization of order $\gamma_w$.
\end{enumerate}
\end{theorem}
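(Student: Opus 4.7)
The plan is to induct on the particle number $n$, following the strategy outlined in the introduction. The base case $n=1$ of (i) is the hypothesis; the base case of (ii) is derived from (\ref{eq:Lifshitz}) with $k=1$ via a Combes--Thomas argument: on the complement of an event of probability $\leq L^{-\xi^{(1)}}$ the spectrum of $H^{(1)}_{\Lambda_L(x)}$ sits at distance $\geq L^{-1}$ from any $E\in[E_0^{(1)},E_0^{(1)}+L^{-1}/2]$, so Combes--Thomas yields (sub)exponential decay of the one-particle resolvent in operator norm, while the a priori $L^s$-Wegner bound following from the bounded density in (III) handles the exceptional event. This produces an interval of one-particle fractional moment localization of order $\gamma_w$, which feeds the inductive step.

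For the inductive step, assume fractional moment localization of order $\gamma_w$ on an interval at the bottom of $\Sigma^{(k)}$ for every $k\leq n-1$. I would distinguish two kinds of $n$-particle configurations in a box $B_L(\xbf_0)$. For \emph{partially interactive} configurations -- those whose particles decompose into two nonempty clusters $J\sqcup J^c$ separated by a distance of order $L$ -- the restricted Hamiltonian $H^{(n)}_{B_L(\xbf_0)}$ coincides, up to the cross-cluster part of $W^{(n)}$ (pointwise bounded by $\alpha_W n^2 w_b(cL)$, hence (sub)exponentially small in $L$), with a tensor sum of a $|J|$-particle and an $(n-|J|)$-particle operator. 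Proposition \ref{prop:gs} places the relevant energies still near the bottom of each sub-spectrum, so the inductive hypothesis applies to each factor, and a resolvent-identity perturbation in the small cross-cluster term transfers the decay with the (sub)exponential bound $w_b(r)\leq c_w e^{-\mu_w r^{\gamma_w}}$ absorbing the loss. For \emph{fully interactive} configurations, whose particle cluster has diameter $O(L)$, I would derive a rescaling inequality in the spirit of \cite{AW09}: the geometric resolvent identity at an intermediate scale $\ell\ll L$, combined with independence of the alloy potential on disjoint subboxes and the a priori $L^s$-Wegner bound from (III), controls the fractional moment on scale $L$ by a sum of products of fractional moments on scale $\ell$; iterating on a geometric sequence of scales then yields the desired $e^{-\mu\dist_H(\xbf,\ybf)^{\gamma_w}}$ decay, provided a suitable initial-scale estimate is available.

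The initial-scale estimate is the point where (i) and (ii) diverge. For (i), taking $\alpha_W=0$ factorizes the resolvent of $H^{(n)}_{B_{L_0}(\xbf_0)}$ into $n$ one-particle resolvents, so the $n$-particle fractional moment at $(\xbf,\ybf)$ reduces, via a H\"older-type bound in $s$, to products of one-particle fractional moments whose decay is controlled by the hypothesis on $I^{(1)}$; a resolvent-identity perturbation in $\alpha_W W^{(n)}$ preserves the estimate for $\alpha_W$ below some threshold $\alpha_0^{(n)}$, at the price of a slightly smaller $\eta^{(n)}<\eta^{(1)}$. For (ii), (\ref{eq:Lifshitz}) applied at $k=n$ directly yields that $H^{(n)}_{B_{L_0}(\xbf_0)}$ has a spectral gap $\geq L_0^{-1}$ above its ground state with probability $\geq 1-L_0^{-\xi^{(n)}}$; a Combes--Thomas bound on this good event, combined with the $L_0^{O(dn)}$-Wegner bound on the bad event, provides the initial decay, the threshold $\xi^{(k)}>22dk$ being precisely what is needed for the volume factors accumulating through the iteration to be dominated. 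The main obstacle is the partially interactive case: preserving the exponent $\gamma_w$ across the cluster decoupling relies on a delicate interplay between the decay of $w$ and the inductive fractional moment bound for the subsystems, and it is exactly this interplay that degrades when $w$ decays only polynomially, forcing the weaker conclusions of the paper's Case 2.
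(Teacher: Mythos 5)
Your overall architecture matches the paper's: induction on $n$, a dichotomy between clustered (fully interactive) and split (partially interactive) configurations, a rescaling inequality iterated over scales, and initial-scale input from either a perturbation in $\alpha_W$ or from \eqref{eq:Lifshitz} at $k=n$ via Combes--Thomas plus a Wegner bound on the exceptional event. The quantitative threshold you identify ($\xi>22dn$ arising from $\xi>2(q'+3dn)$ with $q'>8dn$) is also the one the paper uses.

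However, there is a genuine gap at the step you treat as routine: passing from the inductive fractional-moment bounds for the subsystems to a fractional-moment bound for the tensor sum $H^{(J,K)}=H^{(\#J)}\otimes\id+\id\otimes H^{(\#K)}$ (and likewise your claim that for $\alpha_W=0$ the $n$-particle resolvent ``factorizes into $n$ one-particle resolvents''). The resolvent of a tensor sum is \emph{not} a product of resolvents of the factors, and $\Ex\big[\|\chi_\xbf(H^{(J,K)}-z)^{-1}\chi_\ybf\|^s\big]$ does not reduce by any H\"older-type manipulation to products of subsystem fractional moments; any direct representation (e.g.\ $\sum_{E_J}P_{E_J}\otimes(H^{(\#K)}-(z-E_J))^{-1}$ or a contour integral) forces you to control sums over eigenvalues of one factor, i.e.\ eigenfunction correlators. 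What does tensorize are the spectral projections, $P_E(H^{(J,K)})=\sum_{E_J+E_K=E}P_{E_J}\otimes P_{E_K}$. This is why the paper's Theorem~\ref{thm:fmboundnonint} takes a detour: the inductive fractional-moment bounds are first converted into eigenfunction-correlator bounds (Theorem~\ref{thm:generalfmimpliesec}, which itself requires the locality and conditional Wegner assumptions (L), (GW)), the subsystem correlators are multiplied, with Proposition~\ref{prop:gs} guaranteeing that $E_J+E_K\in I^{(n)}$ forces $E_J\in I^{(\#J)}$ and $E_K\in I^{(\#K)}$, and the result is converted back into a fractional-moment bound (Theorem~\ref{thm:generalecimpliesfm}, via Combes--Thomas and a Wegner estimate). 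This back-and-forth is the content of all of Section~\ref{section:fmec}; it is also the true source of the interval shrinkage $\eta^{(n)}<\eta^{(n-1)}<\dots<\eta^{(1)}$ (the EC-to-FM direction requires $\dist(J,\partial I)>0$, harmless only at the lower endpoint because it is the bottom of the spectrum), which you instead attribute to the perturbation in $\alpha_W$. A further point your sketch omits: the output of the subcluster step must be decay with respect to the clustered distance $\dist_H^{(J,K)}\geq\dist_H$, not merely $\dist_H$; this stronger decay is exactly what controls the sum over intermediate configurations $\wbf$ with $\dist_H^{(J,K)}(\xbf,\wbf)\geq L^\alpha/(4n)$ when the cross-cluster interaction $W^{(J,K)}$ is reinserted by the resolvent identity in the fully interactive case.
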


One-particle fractional moment localization, which constitutes the assumption of the first part of the theorem, has been pioneered in \cites{AENSS}. In particular,  the authors show  under an additional assumption on the single-site distribution that any bounded interval $ I^{(1)} $ is a regime of one-particle fractional moment localization of order $1$ provided the disorder strength is sufficiently large. They also provide criteria on the density of states which imply this token. This criterion has been refined and simplified for the bottom of the spectrum in \cite{BNSS} . 
In one dimension $ d = 1 $, one-particle fractional moment localization up to arbitrary energies has been established in \cite{HSS10}. 

In the multi-dimensional case $ d > 1 $, even if one-particle localization is established at higher energies (e.g. at band edges), one cannot generally expect multi-particle localization in energy regimes separated from the bottom the spectrum. This is due to the possible appearance of extended states in the one-particle model at lower energies. 

As for the assumption in the second part of the theorem, finite-volume estimates of the form~\eqref{eq:Lifshitz}
are well-known in the case $n=1$, as they appear in the analysis of Lifshitz tails, cf.~\cite{Sto01}, and are used in the analysis in \cites{AENSS,BNSS}. In an $n$-particle system with repulsive interaction ($w\geq0$), the estimate follows directly from the one-particle case due to the fact that $E_0(H_{B_L(\xbf)}^{(n)})\geq \sum_j E_0(H_{\Lambda_L(x_j)}^{(1)})$ and $E_0^{(n)}=nE_0^{(1)}$. (This fact has already been used in \cites{Ekanga1,KN13-2}.) However, if the interaction is (partially) repulsive, a proof becomes more involved. One way to achieve this is to employ the techniques that are used in the proof of the inequality in the one-particle case, which is possible at least under certain assumptions on the interaction potential.

\subsubsection*{Case 2: Polynomially decaying interaction}

The analogue of Theorem~\ref{thm:main:case1} in case of polynomial decay is
\begin{theorem}\label{thm:main:case2}
Assume that $H^{(n)}$ satisfies \textnormal{(I)-(IV)} and that $w_b(r)\leq c_wr^{-p_w}$ for some $c_w,p_w>0$ and all $r\geq0$.
\begin{enumerate}
  \item[\textnormal{(i)}] Let $n<(p_w+8d)/(48d)$ and $I^{(1)}=[E_0^{(1)},E_0^{(1)}+\eta^{(1)}]$ be a bounded interval of length $\eta^{(1)}>0$ that is a regime of one-particle fractional moment localization of order $\beta^{(1)}\in(0,1]$. 
For all $\eta^{(n)}\in(0,\eta^{(1)})$ there is $\alpha_0^{(n)}>0$ auch that, if $\alpha_W\in [0,\alpha_0^{(n)}]$, then $I^{(n)}=[E_0^{(n)},E_0^{(n)}+\eta^{(n)}]$ is a regime of $n$-particle fractional moment localization of order $\beta^{(n)}=\beta^{(1)}/(1+(n-1)\beta^{(1)})$.
  \item[\textnormal{(ii)}] Suppose $n<(p_w+8d)/(48d)$ and that for any $k\in\{1,\ldots,n\}$ there exist $\xi^{(k)}>22dk$ and $L^{(k)}>0$ such that
\begin{equation}
\Prob\big(E_0(H_{B_L(\xbf)}^{(k)}) \leq E_0^{(n)}+ L^{-1}\big) \leq L^{-\xi^{(k)}} 
\end{equation}
for all $\xbf\in\R^{dk}$ and $L\geq L^{(k)}$. Then there exists $\eta^{(n)}\in(0,\eta^{(1)})$ such that $I^{(n)}=[E_0^{(n)},E_0^{(n)}+\eta^{(n)}]$ is a regime of $n$-particle fractional moment localization of order $\beta^{(n)}=\beta^{(1)}/(1+(n-1)\beta^{(1)})$.
\end{enumerate}
\end{theorem}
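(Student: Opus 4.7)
The plan is to prove the theorem by induction on the particle number $n$, using the machinery developed in Section~\ref{section:mploc} as the engine. The case $n=1$ is the hypothesis on $I^{(1)}$ (with $\beta^{(1)}$ being any order at which one-particle localization holds). For the inductive step, we assume $k$-particle fractional moment localization of order $\beta^{(k)}=\beta^{(1)}/(1+(k-1)\beta^{(1)})$ on $I^{(k)}$ for every $k<n$ and deduce $n$-particle localization of order $\beta^{(n)}$. Parts (i) and (ii) then differ only in how the initial length-scale input is produced, exactly as in the proof of Theorem~\ref{thm:main:case1}; the genuinely new work is to replace the (sub)exponentially decaying interaction by a polynomially decaying one throughout the rescaling step, and to track the resulting deterioration of the localization order.

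Following the blueprint of Section~\ref{section:mploc}, I would first consider partially interactive Hamiltonians, where the $n$ particles split into two clusters $J\sqcup J^c$ with $|J|,|J^c|<n$ and the cross-cluster part of $W^{(n)}$ is removed. For such systems the resolvent factorises through the cluster resolvents, so the induction hypothesis transfers directly and yields fractional-moment decay of order $\beta^{(n-1)}$ in the intrinsic Hausdorff distance of each cluster. The rescaling inequality then estimates the full $n$-particle fractional moment at scale $L$ by a partially interactive contribution (controlled by the inductive assumption, hence decaying as $\exp(-\mu L^{\beta^{(n-1)}})$) plus a remainder that measures the cross-cluster interaction. This is where polynomial decay enters: for configurations that admit no wide cluster separation, a pigeonhole argument in the Hausdorff distance extracts a cross-cluster separation of order $L/n$, so $w_b$ contributes at most $c_w L^{-p_w}$. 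Balancing $\exp(-\mu L^{\beta^{(n-1)}})$ against $L^{-p_w}$ and optimising in $L$ yields precisely the recursion $\beta^{(n)}=\beta^{(n-1)}/(1+\beta^{(n-1)})=\beta^{(1)}/(1+(n-1)\beta^{(1)})$.

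The initial length-scale estimate is supplied separately in each part. In (i), smallness of $\alpha_W$ together with a Neumann/perturbation expansion around the non-interacting Hamiltonian $\sum_j H_0^{(1)}(j)+\sum_j V^{(1)}(j)$ reduces the $n$-particle fractional moment on a box of a sufficiently large fixed scale $L_0$ to a product of one-particle moments, which decay by the hypothesis of order $\beta^{(1)}$; choosing $\alpha_0^{(n)}$ small enough relative to $L_0$ gives the required input. In (ii), the Lifshitz-tail estimate \eqref{eq:Lifshitz} combined with a Combes--Thomas bound on the resolvent on the event that $H^{(k)}_{B_L(\xbf)}$ has no spectrum below $E_0^{(n)}+L^{-1}$ yields, via the eigenfunction-correlator equivalence of Section~\ref{section:fmec}, the required a priori fractional-moment decay at a sufficiently large initial scale for every subsystem of $k\leq n$ particles.

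The main obstacle, and the source of the constraint $n<(p_w+8d)/(48d)$, is making the rescaling iteration converge. Each application of the rescaling inequality picks up volume factors from Wegner-type estimates on boxes of scale $L$ of the form $L^{cd}$ (with a concrete $c$ arising from the combinatorics of configurations in $\Rdn$, the smoothing of the single-site distribution, and the $\max$-norm balls $B_L$), and must be compensated by the polynomial gain $L^{-p_w}$. Iterating this $n-1$ times and keeping track of the accumulated exponents leads to a condition of the form $p_w>48dn-8d$, which is the stated bound. The bulk of the technical work is therefore the bookkeeping of these constants through the partially-interactive step and the rescaling inequality; once that is in place, the proof closes along exactly the same lines as Theorem~\ref{thm:main:case1}, with the only substantive change being that the order $\gamma_w$ is replaced at every stage by the decreasing sequence $\beta^{(k)}$.
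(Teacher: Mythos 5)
Your overall architecture (induction on $n$, partially interacting clusters, the rescaling inequality, and the two separate initial-scale inputs for (i) and (ii)) matches the paper, which indeed proves this theorem by running the proof of Theorem~\ref{thm:main:case1} again and only modifying the rescaling step. However, the two points that are genuinely new here --- the origin of the recursion $\beta^{(n)}=\beta^{(n-1)}/(1+\beta^{(n-1)})$ and the origin of the bound on $n$ --- are both misidentified in your proposal, and the derivations you sketch for them would not go through. First, the recursion does \emph{not} come from ``balancing $\exp(-\mu L^{\beta^{(n-1)}})$ against $L^{-p_w}$ and optimising in $L$'': note that $\beta^{(n)}$ in the statement is independent of $p_w$, and no optimisation of a stretched exponential against a polynomial produces the formula $\beta^{(n-1)}/(1+\beta^{(n-1)})$. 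The correct mechanism lives in the free intermediate-scale exponent $\alpha$ of Theorem~\ref{thm:rescalingineq}: the dichotomy is on $\diam(\xbf)\gtrless L^{\alpha}$ (not on a separation of order $L/n$ as you write --- that would correspond to $\alpha=1$, for which the method yields nothing), so the partially interacting term only decays like $e^{-\nu L^{\alpha\beta^{(n-1)}}}$, while the interaction-tail term advances the length scale only by $O(L^{\alpha})$ and therefore caps the achievable order at $1-\alpha$ in Corollary~\ref{cor:rescalingimpliesdecay}(i). Maximising $\min\{\alpha\beta^{(n-1)},1-\alpha\}$ over $\alpha$ forces $\alpha^{(n)}=(1+\beta^{(n-1)})^{-1}$ and yields the recursion; this optimisation step is absent from your argument.

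Second, the constraint $n<(p_w+8d)/(48d)$ is not produced by ``iterating $n-1$ times and keeping track of accumulated exponents.'' Nothing accumulates across induction levels: the condition is the single requirement of Corollary~\ref{cor:rescalingimpliesdecay}(i) at level $n$, namely $\alpha p_w s>(5+\alpha)nd$, which demands that the polynomial gain $\big(w_b(L^{\alpha}/4n)\big)^s\sim L^{-\alpha p_w s}$ from the cross-cluster interaction beat the volume factor $L^{(5+\alpha)nd}$ in the third term of the rescaling inequality. Since the explicit formula for $\alpha^{(n)}$ gives $\alpha^{(n)}\geq 1/2$ for all $n$, fixing $s=1/4$ turns this into the stated linear-in-$n$ bound on $p_w$. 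To repair the proposal you would need to (a) carry the exponent $\alpha$ explicitly through the cluster-separation and interaction-cutoff steps rather than working at separation $L/n$, (b) derive the cap $\beta\leq 1-\alpha$ from the fact that the interaction-tail term only advances the scale by $O(L^{\alpha})$, and (c) verify the single volume-versus-decay inequality at each level with the level-dependent $\alpha^{(n)}$.
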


In contrast to Theorem \ref{thm:main:case1}, this theorem yields a bound on fractional moments that decays significantly faster than the interaction potential. The price we pay, however, is that we can prove this only up to a maximal number of particles and that the decay becomes slower as $n$ increases. The theorem does not exclude the possibility that there is no localization for large numbers of particles.\\

\subsection{Comparison with existing results}
Most results on localization for multi-particle systems deal with  lattice models. Aside from the pioneering works~\cites{AW09,CS09-1,CS09-2}, 
progress has been made in the lattice case on the energy regime in which localization is established and the regularity of the distribution of the random variables \cites{Ekanga1,Ekanga11,Ekanga2}, see also the recent monograph \cite{CS14} and references therein. As in the one-particle case, the multi-scale approach~\cites{CS09-1,CS09-2,CS14} has the advantage of being able to accommodate more singular distributions of the random variables whereas the fractional moment method \cite{AW09} allows one to establish exponential dynamical localization and not only polynomial decay or sub-exponential decay. The latter is obtained through the bootstrap multi-scale method which has only recently been generalised to the multi-particle setup \cite{KN13-1}.

For systems on continuous space, only two results are available so far. The paper~\cite{CBS11} uses the multi-scale method to establish polynomial dynamical localization for a multi-particle model with alloy-type random potential similar to the one studied in this paper. 
Aside from the difference in the decay rate, they establish the analogue of Theorem~\ref{thm:main:case1}(ii), i.e., localization in a sufficiently small interval above the bottom of the spectrum. 
As mentioned above, their method allows them to treat more singular distributions of the random variables. In the recent work~\cite{KN13-2} 
the polynomial decay has been improved to sub-exponential decay using the bootstrap method. 

In contrast to \cites{CBS11,KN13-2}, which allow for non-negative interactions of finite-range only, we investigate interactions which may have infinite range. For lattice models, (complete) localization in the presence of subexponentially decaying interactions has been considered 
before~\cite{Ch12}. Our novel point here is to study also the case of (fast) polynomial decay. 

\section{Fractional moment and eigenfunction correlator localization}\label{section:fmec}

The focus of this section is the notion of \textit{fractional moment localization} and its relation to other forms of localization. The main results in Theorems \ref{thm:main:case1} and \ref{thm:main:case2} are bounds on the spatial decay of fractional moments of the resolvent. It is well known from the one-particle case  \cite{AENSS} that estimates of this type imply pure point spectrum with exponentially decaying eigenfunctions as well as a strong form of dynamical localization. These results have been proved to carry over to multiparticle systems on the lattice in \cite{AW09}. An important term in this context are eigenfunctions correlators, which will be used in our analysis as well. We define \textit{eigenfunction correlator localization} as follows:

\begin{definition}
An interval $I\subset\R$ is a regime of $n$-particle eigenfunction correlator localization of order $\gamma\in(0,1]$ if there exist $C,\mu>0$ such that
\begin{equation}
\Ex\bigg[\sum_{E\in\sigma(H_{\Omega^n}^{(n)})\cap I}\|\chi_\xbf P_E(H_{\Omega^n}^{(n)})\chi_\ybf\|\bigg] \leq Ce^{-\mu\dist_H(\xbf,\ybf)^\gamma}
\end{equation}
for all bounded, open sets $\Omega\subset\Rdn$.
\end{definition}
As we will see, fractional moment localization indeed implies eigenfunction correlator localization, a fact that is well-known from the one-particle theory. In the multiparticle setting, however, the right choice of the metric is crucial \cite{AW09}. We will address this problem in our discussion below. An essential element in the multiparticle analysis of \cite{AW09}  was that exponential decay of the eigenfunction correlator implies exponential decay of fractional moments as well, so that the two terms are in a sense equivalent. In contrast to their analysis, we do not consider complete localization, but localization in a bounded interval I. As it turns out, we cannot prove that (sub)exponential decay of the eigenfunction correlator in a bounded interval implies (sub)exponential decay of fractional moments of the resolvent with energies in the interval $I$. Instead, one needs to restrict the energies to a slightly smaller interval $J\subset I$ that does not extend to the endpoints of $I$.\\

The analysis of different notions of localization and their relation to each other is not an issue specific to the fractional moment method, see \cite{GT12} 
and references therein.
Since this is a topic of interest beyond the question of multi-particle localization and we will present a new technique for relating the eigenfunction correlator to the resolvent, we state our results in a fairly general way. 
The basic setup we impose for this part of the analysis deals with the following three basic objects:
\begin{enumerate}
  \item[(H)] The operators $ H(\omega)$ with $ \omega \in \mathbf\Omega $ and $(\mathbf\Omega,\calF,\Prob)$ some probability space form a measurable family of self-adjoint operators in $L^2(\RD)$  (cf.~\cite{CaLa90}*{Sec.~V.1}).
  \item[(O)] $\mathfrak{O}$ is a family of open subsets of $\RD$ and for any $\Omega\in\mathfrak{O}$ there is a self-adjoint measurable restriction $H_\Omega$ of $H$ with domain $\mathcal{D}(H_\Omega)\subset L^2(\Omega)$. Furthermore, the spectrum of $H_\Omega(\omega)$ is only pure point for all $\omega\in\mathbf\Omega$.
  \item[(M)] $\md$ is a pseudo-metric on $\RD$ satisfying
  \begin{equation}
  \sup_{\xbf\in\ZD}\sum_{\ybf\in\ZD}e^{-\mu\md(\xbf,\ybf)^\gamma}<\infty
  \end{equation}
  for all $\mu>0$ and $\gamma\in(0,1]$.
\end{enumerate}  
For our $ n $-particle model presented in the previous section this applies to the following choices:
\begin{description}
\item[\textnormal{ad (H):}]  $D=dn$ and $H=H^{(n)}$.
\item[\textnormal{ad (O):}] 
 The family $\mathfrak{O}$ consists of all sets of the form $\Omega^n$ with an open, bounded set $\Omega\subset\Rd$. The restriction to sets of this form is physically reasonable and furthermore necessary in our analysis, as we make use of the tensor-product structure of $L^2(\Omega^n)=(L^2(\Omega))^n$ and the corresponding identity for the $n$-particle Hamiltonian without interaction.
\item[\textnormal{ad (M):}] The Hausdorff distance $\dist_H$ is the pseudo-metric $\md$ and it is easily checked that the summability condition is fulfilled (cf.~\cite{AW09}*{App. A}). 
The choice of this particular metric is also tied to assumption (GW) below. E.g. for the Euclidean metric the latter does not hold.  
In case $n=1$, of course, $\dist_H$ is just the metric induced induced by the max-norm.
\end{description}

\noindent 
Further assumptions on these basic objects, which we need in order to establish a relation of the eigenfunction correlator to the fractional moment of the resolvent, are: 
  \begin{enumerate}
    \item[(L)] For any $\Omega,\Omega'\in\mathfrak{O}$ with $\Omega'\subset\Omega$ there is a cutoff function $\xi_{\Omega,\Omega'} \in C^\infty(\RD) $, with $0\leq\xi_{\Omega,\Omega'}\leq 1$ and 
  \begin{equation}
  \xi_{\Omega,\Omega'}(\xbf)=
  \begin{cases}
  1 &\text{ if }\xbf\in\Omega'\setminus\Gamma_{\Omega,\Omega'}\\
  0 &\text{ if }\xbf\notin\Omega\setminus\Omega'
  \end{cases},
  \end{equation}
where $\Gamma_{\Omega,\Omega'}=\{\xbf\in\Omega'\,|\,\dist(\xbf,\Omega\cap\partial\Omega')\leq 2\}$ and $ \dist(\cdot,\cdot) $ denotes the maximum distance on $ \RD $, such that for almost all $\omega\in\mathbf\Omega$:
  \begin{enumerate}
  \item[(i)] For any $\phi\in\mathcal{D}(H_\Omega)$:\\ $\xi_{\Omega,\Omega'}\phi\in\mathcal{D}(H_{\Omega'}),\mathcal{D}(H_\Omega)$ and $H_{\Omega'}(\omega)\xi_{\Omega,\Omega'}\phi=H_\Omega(\omega)\xi_{\Omega,\Omega'}\phi$,
  \item[(ii)] $\displaystyle \id_{\Omega\setminus\Gamma_{\Omega,\Omega'}}[H_\Omega(\omega),\xi_{\Omega,\Omega'}]=0$,
  \item[(iii)] For any bounded interval $I$ there is $C_I<\infty$, which is independent of $\Omega$, $\Omega'$ and $ \omega$, such that for any eigenvalue $E\in\sigma(H_\Omega(\omega))\cap I$
  \begin{equation}
  \|[H_\Omega(\omega),\xi_{\Omega,\Omega'}]P_E(H_\Omega(\omega))\| \leq C_I.
  \end{equation}
  \end{enumerate}
  \item[(GW)] For any $\xbf,\ybf\in\RD$ with a minimum distance with respect to $\md$ and for all $\Omega\in\mathfrak{O}$, there is a bounded set $\Omega_{\xbf,\ybf}\in\mathfrak{O}$, $\Omega_{\xbf,\ybf}\subseteq\Omega$, and a sub-$\sigma$-algebra $\calF_{\xbf,\ybf}\subset\calF$ such that for either $(\ubf,\wbf)=(\xbf,\ybf)$ or $(\ubf,\wbf)=(\ybf,\xbf)$:
  \begin{enumerate}
  \item[(i)] $\supp\chi_\ubf\cap\Omega$ is a subset of $\Omega_{\xbf,\ybf}\setminus\Gamma_{\Omega,\Omega_{\xbf,\ybf}}$ and for all
  $ \vbf \in \Gamma_{\Omega,\Omega_{\xbf,\ybf}}$ we have  
  $$
  \md(\ubf,\vbf)\geq \frac{\md(\xbf,\ybf)}{4} \, .
  $$
  Furthermore  the number of unit balls in the $\max$-norm needed to cover $\Gamma_{\Omega,\Omega_{\xbf,\ybf}}$  is polynomially bounded with respect to $\md(\xbf,\ybf)$.
  \item[(ii)] The operator $H_{\Omega_{\xbf,\ybf}}$ is measurable with respect to the $\sigma$-algebra $\calF_{\xbf,\ybf}$.
  \item[(iii)] For any bounded interval $I\subset\R$ there is $C_I>0$ such that
  \begin{equation}\label{eq:GLW}
  \Ex\big[\Tr(\chi_\wbf P_J(H_\Omega))\,\big|\,\calF_{\xbf,\ybf}] \leq C_I|J|,
  \end{equation}
  for any interval $J\subset I$. The constant $C_I>0$ is independent of $\xbf$, $\ybf$ and $\Omega$.
  \end{enumerate}
  \end{enumerate}
  As these assumptions are not that intuitive, let us immediately add a few comments on their relation to our $n$-particle model.
\begin{description}
\item[\textnormal{ad (L):}] This is an assumption of locality of the operator $H$. In the case of the $n$-particle system we choose $\xi_{\Omega,\Omega'}$ such that it satisfies
\begin{equation}\label{eq:cuttoff}
\xi(\xbf) = \begin{cases}
1\qquad\text{if }\xbf\in\Omega'\text{ and }\dist(\xbf,(\partial\Omega')\cap\Omega)\geq 2 \\
0\qquad\text{if } \xbf\in\Omega\text{ and }(\xbf\notin\Omega'\text{ or }\dist(\xbf,(\partial\Omega')\cap\Omega)\leq 1).
\end{cases}
\end{equation}
With this choice, (i) is satisfied for the operator $H=H^{(n)}$. The commutator appearing in (ii) equals zero outside of the support of $\nabla\xi_{\Omega,\Omega'}$. As this support, however, is a subset of $\Gamma_{\Omega,\Omega'}$, the identity (ii) follows. The inequality in (iii) is basically a consequence of the boundedness of the operator $\nabla P_E(H_\Omega)$. In order to make the constant $C_I$ independent of $\Omega$, $\Omega'$ and $E$, the chosen cutoff functions $\xi_{\Omega,\Omega'}$ need to have $L^\infty$-bounds on their first and second derivatives that are independent of $\Omega$ and $\Omega'$. For later purpose, we also note that (L) implies a geometric resolvent equation (cf. \cite{AENSS}*{Lemma 4.2}), i.e.
\begin{equation}
\xi_{\Omega,\Omega'}(H_\Omega-z)^{-1} = (H_{\Omega'}-z)^{-1}\xi_{\Omega,\Omega'}+(H_{\Omega'}-z)^{-1}[H_\Omega,\xi_{\Omega,\Omega'}](H_\Omega-z)^{-1} \label{eq:geomres}
\end{equation}
for any $z\in\C\setminus\R$.

\item[\textnormal{ad (GW)}:] The geometric assumption (i) is intertwined with the probabilistic assumptions (ii) and (iii). For the  $n$-particle model, we choose the appearing terms as follows. If $L=\dist_H(\xbf,\ybf)$, then there is a particle $j\in\{1,\ldots,n\}$ such that $L=\min_k|w_j-u_k|$ with $(\ubf,\wbf)\in\{(\xbf,\ybf),(\ybf,\xbf)\}$. The set $\Omega_{\xbf,\ybf}$ is then chosen as $\big(\bigcup_{k=1}^n \Lambda_{L/2}(u_k)\big)^n$, cf.~Figure~\ref{fig:GLW}. Provided $L$ is sufficiently large, (i) is satisfied. The sub-$\sigma$-algebra $\calF_{\xbf,\ybf}$ is generated by the random variables $\eta_\zeta$, $\zeta\in\Zd\setminus\Lambda_{1+r_U}(w_j)$. Due to this choice, the potential on $\Omega_{\xbf,\ybf}$ is measurable with respect to $\calF_{\xbf,\ybf}$. Lemma~\ref{lemma:exptrproj} then ensures the validity of~(iii).

\end{description}

\begin{figure}[h]
\begin{center}
\includegraphics[width=.7\textwidth]{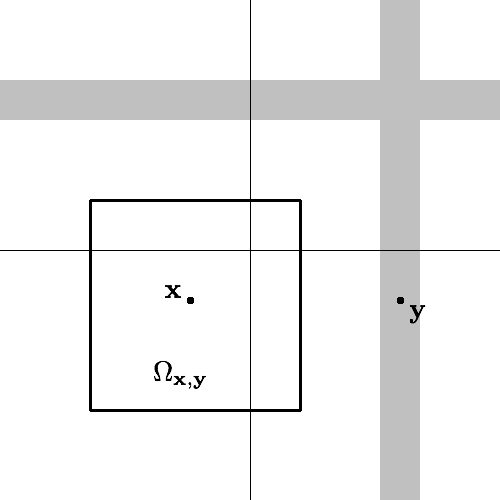}
\end{center}
\caption{Illustration of Assumption~\textnormal{(GW)} in case $ n = 2 $ and $ d= 1 $. For the sketched configurations $ \xbf = (x_1,x_2)$ and $\ybf = (y_1,y_2)$   in the configuration space $ \mathbb{R}^2 $ the Hausdorff distance equals $|x_2-y_1| $.  In this example, it is necessary to choose $ \ubf = \xbf  $ and $ \wbf = \ybf $, since we need to find random variables $ \mathcal{I}' \subset \mathcal{I} $ such that  i)~the 
corresponding potential $N_{ \mathcal{I}'} = \sum_{\zeta \in \mathcal{I}'} N_\zeta $ satisfies the covering condition on 
$B_1(\xbf) $ or  $ B_1(\ybf) $  and ii)~the support of $ N_{ \mathcal{I}'}  $ does not intersect $\Omega_{\xbf,\ybf}  $. 
This is achieved here by random variables $\mathcal{I}_{y_1} $ in the vicinity of $ y_1 $. The shaded area indicates the support of $ N_{\mathcal{I}_{y_1}} $.}\label{fig:GLW}
\end{figure}

\noindent Finally, for one part of our result we only need the following assumptions:
\begin{enumerate}
    \item[(W)] For any bounded interval $I\subset\R$, there exists $C_I>0$ such that
  \begin{equation}
  \sup_{\xbf\in\RD}\sup_{\Omega\in\mathfrak{O}}\Ex\big[\Tr(\chi_\xbf P_J(H_\Omega))] \leq C_I|J|
  \end{equation}
  for any interval $J\subset I$.
  \item[(CT)] For any $\xbf,\ybf\in\RD$, all $\Omega\in\mathfrak{O}$, almost all $\omega\in\mathbf{\Omega}$ and all $z\in\C\setminus\sigma(H_\Omega(\omega))$ 
  \begin{multline}
\|\chi_\xbf(H_\Omega(\omega)-z)^{-1}\chi_\ybf\| \\ \leq \frac{C_0}{\dist(z,\sigma(H_\Omega(\omega)))}\exp\left(\frac{-\mu_0\dist(z,\sigma(H_\Omega(\omega)))}{1+|z|+\dist(z,\sigma(H_\Omega(\omega)))}\md(\xbf,\ybf)\right)
\end{multline}
with uniform constants $C_0,\mu_0>0$.
\end{enumerate}

\noindent
To conclude, let us also  comment on these assumption in relation to our $n $-particle model.
\begin{description}
\item[\textnormal{ad (W)}:] This assumption is a standard Wegner estimate (cf. \cite{Sto01}), which is proven here in \ref{lemma:exptrproj}. We note that it in particular implies the bound
\begin{equation}
  \sup_{\xbf\in\RD}\sup_{\Omega\in\mathfrak{O}}\Ex\big[\Tr(\chi_\xbf f(H_\Omega))] \leq C_I\|f\|_{L^1(I)}
\end{equation}
for any function $f\in L^1(I)$, a fact that we will also use for the similar estimate (GW)(iii).

\item[\textnormal{ad (CT)}] This is a Combes-Thomas estimate in a specific form adjusted to our purposes. As $H^{(n)}(\omega)$ is a Schr\"odinger operator with a bounded potential,  this assumption is  valid thanks  to the results of, e.g., \cites{CT73,GK03}. 

\end{description}

Our first main result in this section is:
\begin{theorem}\label{thm:generalecimpliesfm}
In the setting \textnormal{(H)}, \textnormal{(O)}, \textnormal{(M)}, assume that \textnormal{(W)} and \textnormal{(CT)} hold. Then for any bounded interval $I\subset\R$ and any interval $J\subset I$ with $\dist(J,\partial I)>0$
if there exist $C,\mu\in(0, \infty)$ and $\gamma\in(0,1]$ such that
\begin{equation}
\sup_{\Omega\in\mathfrak{O}}\Ex\bigg[\sum_{E\in\sigma(H_\Omega)\cap I}\|\chi_\xbf P_E(H_\Omega)\chi_\ybf\|\bigg] \leq Ce^{-\mu\md(\xbf,\ybf)^\gamma}
\end{equation}
for all $\xbf,\ybf\in\RD$, then for any $s\in(0,1)$ there exist $A,\nu\in(0, \infty)$ and such that
\begin{equation}
\sup_{\Omega\in\mathfrak{O}}\sup_{\substack{\Real z\in J \\ 0<|\Imag z|< 1}}\Ex\big[\|\chi_\xbf(H_\Omega-z)^{-1}\chi_\ybf\|^s\big] \leq Ae^{-\nu\md(\xbf,\ybf)^\gamma}
\end{equation}
for all $\xbf,\ybf\in\Rd$.
\end{theorem}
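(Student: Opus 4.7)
My plan is to decompose the resolvent using a smooth spectral cutoff $\phi\in C_c^\infty(\R)$ with $\phi\equiv 1$ on $J$, $\supp\phi\subset I$, $0\le\phi\le 1$,
\[
(H_\Omega-z)^{-1} \;=\; (H_\Omega-z)^{-1}\phi(H_\Omega)\;+\;(H_\Omega-z)^{-1}\bigl(1-\phi(H_\Omega)\bigr),
\]
and treat the two summands by different techniques.

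For the second summand, $h(E):=(1-\phi(E))/(E-z)$ lies in $C_b^\infty(\R)$ (no singularity, since $\phi\equiv 1$ in a neighborhood of $\Real z\in J$) with derivatives uniform in $z$ for $\Real z\in J$, $|\Imag z|<1$. I would apply the Helffer--Sj\"ostrand formula $h(H_\Omega)=\pi^{-1}\int_\C\bar\partial\tilde h(\zeta)(H_\Omega-\zeta)^{-1}\,d^2\zeta$ with a compactly supported almost-analytic extension $\tilde h$ vanishing near $J$, and then insert (CT). The high-order vanishing of $\bar\partial\tilde h$ on the real axis absorbs the $|\Imag\zeta|^{-1}$ blow-up of the resolvent, yielding a \emph{deterministic} bound $\|\chi_\xbf(H_\Omega-z)^{-1}(1-\phi(H_\Omega))\chi_\ybf\|\le C_1 e^{-c_1\md(\xbf,\ybf)}$, uniform in $\omega,\Omega$, and the admitted $z$.

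For the first summand, use the spectral decomposition
\[
\chi_\xbf(H_\Omega-z)^{-1}\phi(H_\Omega)\chi_\ybf \;=\; \sum_{E\in\sigma(H_\Omega)\cap I}\frac{\phi(E)}{E-z}\,\chi_\xbf P_E(H_\Omega)\chi_\ybf,
\]
take operator norms, and combine the $s$-subadditivity of $x\mapsto x^s$ on $[0,\infty)$ with a H\"older step in the spectral sum (exponents $1/s$ and $1/(1-s)$). Pushing $\chi_\xbf$-localization into the H\"older split via $\|\chi_\xbf P_E\chi_\ybf\|\le\|\chi_\xbf P_E\chi_\xbf\|^{1/2}\|\chi_\ybf P_E\chi_\ybf\|^{1/2}$ and applying H\"older a second time in $\omega$, the expected $s$-moment factors into (i) the eigenfunction-correlator factor, bounded by $(Ce^{-\mu\md(\xbf,\ybf)^\gamma})^s$ by assumption, and (ii) a localized Wegner factor $\Ex\sum_{E\in I}\|\chi_\xbf P_E\chi_\xbf\||E-z|^{-\alpha}\le C\int_I|E-z|^{-\alpha}\,dE$, which is finite uniformly in $0<|\Imag z|<1$ precisely when $\alpha=s/(1-s)<1$, i.e., $s<1/2$. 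Combining with the deterministic bound on the second summand furnishes the target estimate for $s\in(0,1/2)$ with $\nu$ of order $s\mu$. To reach arbitrary $s\in(0,1)$ I would invoke log-convexity of $t\mapsto\log\Ex\|\chi_\xbf(H_\Omega-z)^{-1}\chi_\ybf\|^t$: writing $s=\theta s_0+(1-\theta)s_1$ with $s_0<1/2$ and $s_1\in(s,1)$, H\"older yields $\Ex\|\cdot\|^s\le(\Ex\|\cdot\|^{s_0})^\theta(\Ex\|\cdot\|^{s_1})^{1-\theta}$, so decay at $s_0$ transfers to $s$ at the cost of a smaller rate. The uniform a priori bound $\Ex\|\cdot\|^{s_1}\le C_{s_1}$ needed here is standard for Schr\"odinger operators with an absolutely continuous single-site distribution, obtained by conditioning on all random variables except the one attached to a unit cell near $\xbf$ or $\ybf$ and integrating $|v-a|^{-s_1}\rho(v)\,dv<\infty$.

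The principal obstacle is the first summand: the naive estimate $\|(H_\Omega-z)^{-1}\phi(H_\Omega)\|\le\|\phi\|_\infty/|\Imag z|$ destroys all uniformity as $\Imag z\to 0$, and the entire argument consists in trading this $|\Imag z|^{-1}$ singularity for integrability of $|E-z|^{-\alpha}$ against a \emph{localized} Wegner-controlled spectral measure. The integrability barrier $s<1/2$ at this step is genuine and is bridged only by the interpolation bootstrap; the requirement $\dist(J,\partial I)>0$ persists throughout, originating in the Helffer--Sj\"ostrand step which needs $\phi$ to be smoothly $1$ on $J$ and supported inside $I$.
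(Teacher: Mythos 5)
Your decomposition is exactly the one the paper uses: split off the spectral cutoff supported away from $J$, kill that piece deterministically via Helffer--Sj\"ostrand plus the Combes--Thomas bound (CT), and reduce the remaining spectral sum over $I$ to the eigenfunction correlator times a Wegner-controlled factor by H\"older. Two points need attention, one of which is a genuine gap in the stated generality.

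First, your H\"older split with exponents $1/s$ and $1/(1-s)$ produces the weight $|E-z|^{-s/(1-s)}$ and hence only covers $s<1/2$; your repair by log-convexity requires a uniform a priori bound $\Ex\big[\|\chi_\xbf(H_\Omega-z)^{-1}\chi_\ybf\|^{s_1}\big]\leq C_{s_1}$ for some $s_1\in(s,1)$, which you justify by conditioning on a single-site random variable and integrating against its density. That argument uses structure (independent site variables with absolutely continuous distribution, a covering condition) that is \emph{not} among the hypotheses (H), (O), (M), (W), (CT) of the theorem, so as written the proof does not reach $s\in[1/2,1)$ in the stated generality. The fix is to choose the conjugate exponents more carefully: pick $q>1$ with $1-s<sq<1$ and $p$ conjugate to $q$; then the Wegner factor carries $|E-z|^{-sq}$ with $sq<1$ for \emph{every} $s\in(0,1)$, the integral $\int_I|E-z|^{-sq}\,\dE$ is uniformly finite, and no interpolation bootstrap is needed. (One also needs $(2-1/(sq))p\geq 1$ together with $\|\chi_\xbf P_E\chi_\ybf\|\leq 1$ to retain a full power of the eigenfunction correlator in the first factor.) Second, a minor but real technical point: a fixed smooth cutoff $\phi$ fed into Helffer--Sj\"ostrand together with (CT) yields only sub-exponential (Gevrey-type) decay, not the genuinely exponential bound $C_1e^{-c_1\md(\xbf,\ybf)}$ you assert; to get exponential decay (needed when $\gamma=1$) the cutoff must be chosen depending on $\md(\xbf,\ybf)$, with derivative bounds of the form $|\phi^{(k)}|\leq c(AN/r)^k$ up to an order $N$ proportional to the distance. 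With these two repairs your argument coincides with the one in the paper.
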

\begin{proof}
As the eigenfunction correlator contains only information about $H_\Omega$ restricted to the spectral subspace associated to an interval $I$, it cannot be straightforwardly used to bound a fractional moment of the full resolvent. Accordingly, we need to deal with contributions from energies outside of the interval in a different way. As we will see, it is convenient to introduce a smooth cutoff function $\zeta$ that satisfies $\zeta(E)=0$ for all $E\in J$, where $J\subset I$ is a slightly smaller interval. We then split the resolvent into two parts,
\begin{equation}
\|\chi_\xbf(H_\Omega-z)^{-1}\chi_\ybf\| \leq \bigg\|\chi_\xbf\frac{(1-\zeta)(H_\Omega)}{H_\Omega-z}\chi_\ybf\bigg\| + \bigg\|\chi_\xbf\frac{\zeta(H_\Omega)}{H_\Omega-z}\chi_\ybf\bigg\|, \label{resspecdecomp}
\end{equation}
the first of which is dealt with by using the eigenfunction correlator bound and the second of which is bounded by means of the following lemma. Its proof requires assumption (CT) and is deferred to \ref{appendix:hsct}.

\begin{lemma}
Assume that \textnormal{(H)}, \textnormal{(O)} and \textnormal{(CT)} hold and let $J\subset\R$ be a bounded interval, $\Omega\in\mathfrak{O}$, $\omega\in\mathbf{\Omega}$ and $\delta>0$. For any $\xbf,\ybf\in\RD$ there is a function $\zeta= \zeta_{\xbf,\ybf} \in C^\infty(\R)$ with $0\leq\zeta\leq 1$ such that $\zeta(x)=0$ if $\dist(x,J)\leq\delta$, $\zeta(x)=1$ if $\dist(x,J)\geq 2\delta$ and that for all $z\in\C\setminus\sigma(H_\Omega(\omega))$
\begin{equation}
\bigg\|\chi_\xbf\frac{\zeta(H_\Omega(\omega))}{H_\Omega(\omega)-z}\chi_\ybf\bigg\| \leq Ce^{-\mu\md(\xbf,\ybf)} \label{ctandhsbound} 
\end{equation}
holds with constants $C,\mu\in(0,\infty)$ that can be chosen independently of $\xbf$, $\ybf$, $z$, $\Omega$ and $\omega$.
\end{lemma}

In our setting, we choose $\delta>0$ such that the eigenfunction correlator bound holds on the larger interval $I=[\inf J-2\delta,\inf J+2\delta]$. Thus, as long as $\Real z\in J$, the second summand in \eqref{resspecdecomp} can be bounded deterministically (and hence in a fractional moment) by \eqref{ctandhsbound}. We note that the cutoff function depends on $\xbf$ and $\ybf$, the reason of which can be seen in the proof of the lemma. However, in our setting, this peculiarity does not pose a problem.

It remains to find a bound on the first term in \eqref{resspecdecomp},
\begin{equation}
\bigg\|\chi_\xbf\frac{(1-\zeta)(H_\Omega)}{H_\Omega-z}\chi_\ybf\bigg\| \leq \sum_{E\in\sigma(H_\Omega)\cap I} \frac{\|\chi_\xbf P_E(H_\Omega)\chi_\ybf\|}{|E-z|} \, .
\end{equation}
As we will see, this term is easier to handle if $\|\chi_\xbf P_E(H_\Omega)\chi_\ybf\|$ is replaced by its square. This modification can be justified as follows.
The inequalities
\begin{align}
&\Ex\bigg[\bigg(\sum_{E\in\sigma(H_\Omega)\cap I} \frac{\|\chi_\xbf P_E(H_\Omega)\chi_\ybf\|}{|E-z|}\bigg)^s\bigg] \notag \\
  &\leq \sum_{\wbf\in\ZD}\Ex\bigg[\bigg(\sum_{E\in\sigma(H_\Omega)\cap I} \mkern-12mu\frac{\|\chi_\xbf P_E(H_\Omega)\chi_\wbf\|^2}{|E-z|}\bigg)^s\bigg]^{\frac{1}{2}}  \Ex\bigg[\bigg(\sum_{E\in\sigma(H_\Omega)\cap I} \mkern-12mu\frac{\|\chi_\wbf P_E(H_\Omega)\chi_\ybf\|^2}{|E-z|}\bigg)^s\bigg]^{\frac{1}{2}}
\end{align}
together with assumption (M) imply that the left side 
decays (sub)exponentially provided both two terms on the right side do so for some $s\in(0,1)$. In order to prove a bound on the terms on the right side, we choose $q>1$ such that $1-s<sq<1$ and let $p>1$ be the conjugate H\"older exponent, i.e. $p^{-1}+q^{-1}=1$. Then a two-fold application of H\"older's inequality yields
\begin{align}
&\Ex\bigg[\bigg(\sum_{E\in\sigma(H_\Omega)\cap I}\frac{\|\chi_\xbf P_E(H_\Omega)\chi_\ybf\|^2}{|E-z|}\bigg)^s\bigg] \notag\\
&\leq \Ex\bigg[\bigg(\sum_{E\in\sigma(H_\Omega)\cap I}   \mkern-12mu\|\chi_\xbf P_E(H_\Omega)\chi_\ybf\|^{\big(2-\frac{1}{sq}\big)p}\bigg)^s\bigg]^{\frac{1}{p}}\Ex\bigg[\bigg(\sum_{E\in\sigma(H_\Omega)\cap I} \mkern-12mu \frac{\|\chi_\xbf P_E(H_\Omega)\chi_\ybf\|^{\frac{q}{sq}}}{|E-z|^q}\bigg)^s\bigg]^{\frac{1}{q}} \notag \\
&\leq \Ex\bigg[\sum_{E\in\sigma(H_\Omega)\cap I}\|\chi_\xbf P_E(H_\Omega)\chi_\ybf\|\bigg]^{\frac{s}{p}}\Ex\bigg[\sum_{E\in\sigma(H_\Omega)\cap I}\frac{\|\chi_\xbf P_E(H_\Omega)\chi_\ybf\|}{|E-z|^{sq}}\bigg]^{\frac{1}{q}},
\end{align}
where we used that $(2-1/sq)p\geq 1$ and $\|\chi_\xbf P_E(H_\Omega)\chi_\ybf\|\leq 1$. The first factor contains the eigenfunction correlator and decays (sub)exponentially in the distance of $\xbf$ and $\ybf$. As for the second term, we first estimate $ \|\chi_\xbf P_E(H_\Omega)\chi_\ybf\| \leq \sqrt{\|\chi_\xbf P_E(H_\Omega)\chi_\xbf\| \|\chi_\ybf P_E(H_\Omega)\chi_\ybf\| } $ and then use the Cauchy-Schwarz inequality with respect to the summation and expectation. It hence remains to 
 estimate 
\begin{align}
& \Ex\bigg[\sum_{E\in\sigma(H_\Omega)\cap I}\frac{\|\chi_\xbf P_E(H_\Omega)\chi_\xbf\|}{|E-z|^{sq}}\bigg] \leq \Ex\bigg[\sum_{E\in\sigma(H_\Omega)\cap I}\frac{\Tr(\chi_\xbf P_E(H_\Omega)\chi_\xbf)}{|E-z|^{sq}}\bigg] \notag \\
&\qquad \leq 2\liminf_{\eta\downarrow0}\int_I |E-z|^{-sq}\Ex\big[\Tr(\chi_\xbf f_\eta(H_\Omega-E)\chi_\xbf)\big]\dE \, , 
\end{align}
and likewise for $ \xbf \leftrightarrow \ybf $. In the last step $f_\eta(\lambda)=f(\lambda/\eta)/\eta$ denotes 
a suitable approximating $\delta$-function with compact support. An upper bound now follows from Assumption (W) (using that, for small $\eta>0$, $\|f_\eta\|_{L^1(I')}=1$ with a slightly enlarged interval $I'\supset I$) and the fact that $sq<1$.
\end{proof}

In order to establish eigenfunction correlator decay using that of the fractional moment  we have
\begin{theorem}\label{thm:generalfmimpliesec}
In the setting \textnormal{(H)}, \textnormal{(O)}, \textnormal{(M)}, assume that \textnormal{(L)} and \textnormal{(GW)} hold. Then for any bounded interval $I\subset\R$ if there exist $A,\mu\in(0,\infty)$, $s\in(0,1)$ and $\gamma\in(0,1]$ such that
\begin{equation}\label{eq:fracmomlocgen}
\sup_{\Omega\in\mathfrak{O}}\sup_{\substack{\Real z\in I \\ 0<|\Imag z|< 1}}\Ex\big[\|\chi_\xbf(H_\Omega-z)^{-1}\chi_\ybf\|^s\big] \leq Ae^{-\nu\md(\xbf,\ybf)^\gamma}
\end{equation}
for all $\xbf,\ybf\in\RD$, then there exist $C,\mu\in(0,\infty)$ such that
\begin{equation}
\sup_{\Omega\in\mathfrak{O}}\Ex\bigg[\sum_{E\in\sigma(H_\Omega)\cap I}\|\chi_\xbf P_E(H_\Omega)\|\|\chi_\ybf P_E(H_\Omega)\|\bigg] \leq Ce^{-\mu\md(\xbf,\ybf)^\gamma}
\end{equation}
for all $\xbf,\ybf\in\RD$.
\end{theorem}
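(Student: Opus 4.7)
The plan is to adapt the strategy of \cite{AW09} to the continuum setting, combining the geometric resolvent equation \eqref{eq:geomres} from (L) with the conditional decoupling provided by (GW). Fix $\xbf,\ybf\in\RD$ at $\md$-distance $L:=\md(\xbf,\ybf)$, apply (GW), and without loss of generality choose the case $(\ubf,\wbf)=(\xbf,\ybf)$. Set $\Omega':=\Omega_{\xbf,\ybf}$, $\calF':=\calF_{\xbf,\ybf}$, $\xi:=\xi_{\Omega,\Omega'}$, and $\Gamma^*:=\{\vbf\in\ZD:B_{1/2}(\vbf)\cap\Gamma_{\Omega,\Omega'}\neq\emptyset\}$. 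By (GW)(i) combined with the construction of $\xi$ we have $\chi_\xbf\xi=\chi_\xbf$, $\md(\xbf,\vbf)\ge L/4$ for every $\vbf\in\Gamma^*$, and $|\Gamma^*|$ is polynomially bounded in $L$. For any eigenpair $(E,\psi_E)$ of $H_\Omega$, (L)(i) gives $H_{\Omega'}(\xi\psi_E)=E\,\xi\psi_E+[H_\Omega,\xi]\psi_E$, so for $\eta>0$, with $R_{\Omega'}(z):=(H_{\Omega'}-z)^{-1}$,
$$\chi_\xbf\psi_E=\chi_\xbf R_{\Omega'}(E+i\eta)\bigl([H_\Omega,\xi]\psi_E-i\eta\,\xi\psi_E\bigr).$$
Inserting the lattice partition of unity $\sum_{\vbf\in\ZD}\chi_\vbf=\mathbf{1}$, using the support property of $[H_\Omega,\xi]$ in $\Gamma_{\Omega,\Omega'}$, and invoking the commutator bound (L)(iii), this produces the deterministic estimate
$$\|\chi_\xbf P_E(H_\Omega)\|\le C_I\sum_{\vbf\in\Gamma^*}\|\chi_\xbf R_{\Omega'}(E+i\eta)\chi_\vbf\|+o_\eta(1),$$
where the error vanishes as $\eta\downarrow0$ on the almost-sure event $\sigma(H_\Omega)\cap\sigma(H_{\Omega'})=\emptyset$. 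Combining with the trivial bound $\|\chi_\xbf P_E\|\le1$ and the elementary inequality $\min(1,x)\le x^s$ for $s\in(0,1)$, $x\ge0$, together with subadditivity $(\sum_\vbf a_\vbf)^s\le\sum_\vbf a_\vbf^s$, promotes this to fractional-moment form $\|\chi_\xbf P_E\|\le C_I^s\sum_{\vbf\in\Gamma^*}\|\chi_\xbf R_{\Omega'}(E+i\eta)\chi_\vbf\|^s$.

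To convert this into the eigenfunction correlator estimate, multiply by $\|\chi_\ybf P_E\|$, sum over $E\in\sigma(H_\Omega)\cap I$, and take expectation while conditioning on $\calF'$. By (GW)(ii) the resolvent $R_{\Omega'}$ is then deterministic, so $f(E):=\|\chi_\xbf R_{\Omega'}(E+i\eta)\chi_\vbf\|^s$ becomes a deterministic function of $E$ once we condition. The remaining randomness, localized near $\ybf$, is controlled through the functional-calculus identity $\sum_E f(E)\,\Tr(\chi_\ybf P_E\chi_\ybf)=\Tr(\chi_\ybf(f\mathbf{1}_I)(H_\Omega)\chi_\ybf)$, whose conditional expectation is bounded by $C_I\int_I f(E)\,dE$ via the integrated form of the conditional Wegner estimate (GW)(iii) (analogous to the use of (W) noted after that assumption). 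Combined with the FM assumption \eqref{eq:fracmomlocgen} applied to $H_{\Omega'}$, which yields $\Ex[f(E)]\le A e^{-\nu(L/4)^\gamma}$ uniformly in $E+i\eta$ with $\Real(E+i\eta)\in I$ and $0<\eta<1$, and balancing the resulting Hölder/Cauchy--Schwarz factors so that one side of the split is controlled by the FM bound while the other side collapses to a $\chi_\ybf$-weighted trace amenable to (GW)(iii), the contribution of each $\vbf\in\Gamma^*$ decays as $e^{-\nu'(L/4)^\gamma}$. Summing over the polynomially many $\vbf\in\Gamma^*$ and absorbing the prefactor into a slight reduction of $\nu'$ yields the claimed decay $Ce^{-\mu\md(\xbf,\ybf)^\gamma}$.

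The main technical obstacle will be the delicate coordination of the random eigenvalue sum $\sum_{E\in\sigma(H_\Omega)\cap I}$ with the deterministic-$z$ fractional-moment bound and the $E$-integrated conditional Wegner estimate. In particular, the naive Cauchy--Schwarz split $\sum_E f(E)\|\chi_\ybf P_E\|\le\sqrt{\sum_E f(E)}\cdot\sqrt{\sum_E f(E)\|\chi_\ybf P_E\|^2}$ leaves an $\Omega$-dependent first factor unless the decay of $f$ is further exploited; the correct remedy is to choose Hölder exponents and auxiliary weights so that every factor eventually carries either a $\chi_\ybf$-projection, allowing an application of (GW)(iii), or a fractional power of the resolvent amenable to \eqref{eq:fracmomlocgen}. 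A further subtlety is the singular behaviour of $R_{\Omega'}(E+i\eta)$ as $E$ approaches $\sigma(H_{\Omega'})$; keeping the regularization $\eta>0$ throughout---as permitted by the range $0<|\Imag z|<1$ in \eqref{eq:fracmomlocgen}---and passing to the limit only at the very end via monotone convergence circumvents this, supplemented by the a.s.\ disjointness of $\sigma(H_\Omega)$ and $\sigma(H_{\Omega'})$ needed to control the $o_\eta(1)$ term from the geometric step.
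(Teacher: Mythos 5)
Your overall strategy is the same as the paper's: apply the commutator identity from (L) to an eigenfunction so as to express $\chi_\ubf P_E(H_\Omega)$ through $\chi_\ubf(H_{\Omega_{\xbf,\ybf}}-E-i\eps)^{-1}\id_{\Gamma_{\Omega,\Omega_{\xbf,\ybf}}}$, raise to the power $s$, condition on $\calF_{\xbf,\ybf}$ so that this resolvent becomes deterministic, dispose of the remaining projection factor by the conditional Wegner estimate (GW)(iii), and finally feed the fractional-moment hypothesis \eqref{eq:fracmomlocgen} into the polynomially many unit balls covering $\Gamma_{\Omega,\Omega_{\xbf,\ybf}}$. However, the step you yourself flag as ``the main technical obstacle'' is a genuine gap, and your proposed remedy is not carried out. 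After conditioning you must control $\sum_{E}f(E)\,\|\chi_\wbf P_E(H_\Omega)\|$ with $f$ deterministic, and your functional-calculus identity only handles $\sum_E f(E)\Tr(\chi_\wbf P_E(H_\Omega)\chi_\wbf)$. Since $\|\chi_\wbf P_E(H_\Omega)\|=\|\chi_\wbf P_E(H_\Omega)\chi_\wbf\|^{1/2}\le\Tr(\chi_\wbf P_E(H_\Omega)\chi_\wbf)^{1/2}$, the \emph{first} power of the projection norm is not dominated by the trace (the trace is typically much smaller than $1$ when the eigenfunction carries little mass near $\wbf$), so the Wegner step does not close as written. The paper's resolution is to prove the bound first for the squared correlator $\sum_E\|\chi_\xbf P_E(H_\Omega)\|^2\|\chi_\ybf P_E(H_\Omega)\|^2$: there $\|\chi_\wbf P_E(H_\Omega)\|^2\le\Tr(\chi_\wbf P_E(H_\Omega)\chi_\wbf)$ makes (GW)(iii) applicable (via an approximate $\delta$-function $f_\eta$ and a $\liminf$ in $\eta$), while $\|\chi_\ubf P_E(H_\Omega)\|^2\le\|\chi_\ubf P_E(H_\Omega)\|^s$ feeds the fractional-moment bound; the unsquared correlator is then recovered by the Cauchy--Schwarz interpolation over an auxiliary lattice sum $\sum_{\wbf\in\ZD}$, exactly as set up in the proof of Theorem \ref{thm:generalecimpliesfm}, using (M). You should adopt this device rather than the unspecified ``H\"older exponents and auxiliary weights''.

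A second, smaller soft spot: to dispose of the $i\eps\,\xi\psi_E$ error you invoke an almost-sure disjointness of $\sigma(H_\Omega)$ and $\sigma(H_{\Omega_{\xbf,\ybf}})$, which you neither prove nor need. The paper keeps $\eps$ fixed throughout, so this error contributes $\eps^s\,\Ex\big[\sum_E\|\chi_\ubf(H_{\Omega_{\xbf,\ybf}}-E-i\eps)^{-1}\|^s\|\chi_\wbf P_E(H_\Omega)\|^2\big]$; the expectation is bounded uniformly in $\eps$ by the same conditioning/Wegner argument combined with the summability condition (M) applied to a lattice cover of $\Omega_{\xbf,\ybf}$, and the whole term then vanishes as $\eps\downarrow0$. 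This is more robust than a spectral disjointness claim and is the route you should take.
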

This theorem is also of interest in the one-particle case for which it constitutes an alternative to the steps taken in~\cite{AENSS}. In this case, the locality requirement~\textnormal{(L)} is clearly satisfied with $ \mathfrak{O} $ the set of all bounded, open sets in $ \mathbb{R}^d $. 
Assumption~\textnormal{(GW)}  with $ d $ the Euclidean distance boils down to the "independence at a distance"  (cf.~\cite{AENSS}) of the  basic random variables together with a Wegner estimate involving local averages only.

\begin{proof}[Proof of Theorem~\ref{thm:generalfmimpliesec}]
Similarly as in the proof of Theorem~\ref{thm:generalecimpliesfm}, it is more convenient prove a bound on
\begin{equation}
\Ex\bigg[\sum_{E\in\sigma(H)\cap I}\|\chi_\xbf P_E(H_\Omega)\|^2\|\chi_\ybf P_E(H_\Omega)\|^2\bigg]. \label{efcorrsq}
\end{equation}
The additional exponent of $2$ can be justified as before. 

We apply Assumption (L) and (GW) with $\Omega'=\Omega_{\xbf,\ybf}$ and infer for any eigenvalue $E\in I$ of $H_\Omega$ and a fixed $\eps\in(0,1)$
\begin{align}
\chi_{\ubf}P_E(H_\Omega) &= \chi_\ubf\xi_{\Omega,\Omega_{\xbf,\ybf}}P_E(H_\Omega) \notag \\
& = \chi_\ubf (H_{\Omega_{\xbf,\ybf}}-E-i\eps)^{-1}(H_{\Omega_{\xbf,\ybf}}-E-i\eps)\xi_{\Omega,\Omega_{\xbf,\ybf}}P_E(H_\Omega) \notag\\
&=\chi_\ubf (H_{\Omega_{\xbf,\ybf}}-E-i\eps)^{-1}([H_\Omega,\xi_{\Omega,\Omega_{\xbf,\ybf}}] - i \eps )\, \xi_{\Omega,\Omega_{\xbf,\ybf}}P_E(H_\Omega)
\end{align}
and hence
\begin{equation}
\|\chi_\ubf P_E(H_\Omega)\| \leq C_I\|\chi_\ubf (H_{\Omega_{\xbf,\ybf}}-E-i\eps)^{-1}\id_{\Gamma_{\Omega,\Omega_{\xbf,\ybf}}}\| + \eps\|\chi_\ubf (H_{\Omega_{\xbf,\ybf}}-E-i\eps)^{-1}\|.
\end{equation}
We insert this into \eqref{efcorrsq} and use that $\|\chi_\ubf P_E(H_\Omega)\|^2\leq\|\chi_\ubf P_E(H_\Omega)\|^s$, such that 
\begin{multline}
\Ex\bigg[\sum_{E\in\sigma(H_\Omega)\cap I}\|\chi_\xbf P_E(H_\Omega)\|^2\|\chi_\ybf P_E(H_\Omega)\|^2\bigg] \\
\leq C_I^s \, \Ex\bigg[\sum_{E\in\sigma(H_\Omega)\cap I}  \|\chi_\ubf(H_{\Omega_{\xbf,\ybf}}-E-i\eps)^{-1}\id_{\Gamma_{\Omega,\Omega_{\xbf,\ybf}}}\|^s\|\chi_\wbf P_E(H_\Omega)\|^2\bigg] \\
+\eps^s\, \Ex\bigg[\sum_{E\in\sigma(H_\Omega)\cap I}\|\chi_\ubf(H_{\Omega_{\xbf,\ybf}}-E-i\eps)^{-1}\|^s\|\chi_\wbf P_E(H_\Omega)\|^2\bigg]. \label{efdecayinefcorrsq}
\end{multline}
Now let $f_\eta(\lambda)=f(\lambda/\eta)/\eta$ be a compactly supported approximating $\delta$-function. Then
\begin{multline}
\Ex\bigg[\sum_{E\in\sigma(H_\Omega)\cap I}  \|\chi_\ubf(H_{\Omega_{\xbf,\ybf}}-E-i\eps)^{-1}\id_{\Gamma_{\Omega,\Omega_{\xbf,\ybf}}}\|^s\|\chi_\wbf P_E(H_\Omega)\|^2\bigg] \\
\leq 2\liminf_{\eta\downarrow0}\int_I \Ex\big[\|\chi_\ubf(H_{\Omega_{\xbf,\ybf}}-E-i\eps)^{-1}\id_{\Gamma_{\Omega,\Omega_{\xbf,\ybf}}}\|^s\Tr(\chi_\wbf f_\eta(H_\Omega-E))\big]\dE  \, . 
\end{multline}
We use that $H_{\Omega_{\xbf,\ybf}}$ is measurable with respect to $\calF_{\xbf,\ybf}$ due to Assumption (GW)(ii) and conclude
\begin{multline}
\Ex\big[\|\chi_\ubf(H_{\Omega_{\xbf,\ybf}}-E-i\eps)^{-1}\id_{\Gamma_{\Omega,\Omega_{\xbf,\ybf}}}\|^s\Tr(\chi_\wbf f_\eta(H_\Omega-E))\big] \\
= \Ex\big[\|\chi_\ubf(H_{\Omega_{\xbf,\ybf}}-E-i\eps)^{-1}\id_{\Gamma_{\Omega,\Omega_{\xbf,\ybf}}}\|^s\Ex\big[\Tr(\chi_\wbf f_\eta(H_\Omega-E))\big|\calF_{\xbf,\ybf}\big]\big] \\
\leq C_I\, \Ex\big[\|\chi_\ubf(H_{\Omega_{\xbf,\ybf}}-E-i\eps)^{-1}\id_{\Gamma_{\Omega,\Omega_{\xbf,\ybf}}}\|^s\big].
\end{multline}
We finally use the decay estimate on fractional moments and infer
\begin{equation}
\Ex\big[\|\chi_\ubf(H_{\Omega_{\xbf,\ybf}}-E-i\eps)^{-1}\id_{\Gamma_{\Omega,\Omega_{\xbf,\ybf}}}\|^s\big] \leq Ce^{-\mu\md(\xbf,\ybf)^\gamma}.
\end{equation}
This is possible since we assumed in (GW)(i) that the set $\Gamma_{\Omega,\Omega_{\xbf,\ybf}}$ can be covered by unit balls around points $\vbf$ satifying $\md(\vbf,\ubf)\geq\md(\xbf,\ybf)/4$ and the number of which is bounded polynomially in $\md(\xbf,\ybf)$. This yields a (sub)exponential bound on the first term in \eqref{efdecayinefcorrsq}. The second term vanishes in the limit $\eps\downarrow0$.
\end{proof}

The preceding theorems do not contain any localization results for the infinite volume operator $H$ (unless $\RD\in\mathfrak{O}$). In order to conclude such results from fractional moment bounds as in \eqref{eq:fracmomlocgen}, we need an additional assumption:
\begin{enumerate}
\item[(E)] There is a sequence of sets $\Omega_L\in\mathfrak{O}$, $ L \in \mathbb{N} $, which exhaust $\RD$ in the sense that for all $\xbf\in\RD$ there exists $L_\xbf$ such that $B_{1/2}(\xbf)\subset\Omega_L$ for all $L\geq L_\xbf$, and almost surely:
\begin{enumerate}
\item[(i)] $H_{\Omega_L}$ converges to $H$ in strong resolvent sense.
\item[(ii)] $\id_{\Omega_L}P_J(H)$ is compact for all $L$ and all bounded intervals $J\subset\R$.
\end{enumerate}
\end{enumerate}
This assumption is evidently satisfied in the case of the operator $H^{(n)}$, for which on may choose $\Omega_L:=(\Lambda_L(0))^n$.

\begin{theorem}\label{thm:locinfvolume}
In the setting \textnormal{(H)}, \textnormal{(O)} and \textnormal{(M)}, assume \textnormal{(E)} and that \textnormal{(L)} and \textnormal{(GW)} also hold for $\Omega=\RD$. Suppose $I\subset\R$ is a bounded interval and that there exist $A,\mu\in(0,\infty)$, $s\in(0,1)$ and $\gamma\in(0,1]$ such that
\begin{equation}
\sup_{\Omega\in\mathfrak{O}}\sup_{\substack{\Real z\in I \\ 0<|\Imag z|< 1}}\Ex\big[\|\chi_\xbf(H_\Omega-z)^{-1}\chi_\ybf\|^s\big] \leq Ae^{-\nu\md(\xbf,\ybf)^\gamma}
\end{equation}
for all $\xbf,\ybf\in\RD$. Then the following holds:
\begin{enumerate}
\item[\textnormal{(i)}] There exist $C,\mu\in(0,\infty)$ such that for all $\xbf,\ybf\in\RD$
\begin{equation}
\Ex\big[\sup_{|f|\leq 1}\|\chi_\xbf f(H)P_I(H)\chi_\ybf\|\big] \leq Ce^{-\mu\md(\xbf,\ybf)^\gamma},
\end{equation}
where the supremum is taken over all measurable functions on $I$ that are in modulus uniformly bounded by one.
\item[\textnormal{(ii)}] The operator $H$ has almost surely only pure point spectrum in $I$ and
\begin{equation}
\Ex\bigg[\sum_{E\in\sigma(H)\cap I}\|\chi_\xbf P_E(H)\|\|\chi_\ybf P_E(H_\Omega)\|\bigg] \leq Ce^{-\mu\md(\xbf,\ybf)^\gamma} \label{eq:efcorrRD}
\end{equation}
for some $C,\mu\in(0,\infty)$ and all $\xbf,\ybf\in\RD$.
\item[\textnormal{(iii)}] Let $g:\ZD\rightarrow(1,\infty)$ be a function with $\sum_{\xbf\in\ZD}g(\xbf)^{-1}=1$. Then all normalized eigenfunctions $\phi$ of $H$ with eigenvalue $E\in I$ satisfy
\begin{equation}
\|\chi_\xbf\phi\| \leq A(\omega)\frac{g(\xbf_\phi)}{\sqrt{\alpha_\phi}}e^{-\mu'\md(\xbf_\phi,\xbf)^\gamma} \label{eq:efbound}
\end{equation}
for all $ \xbf \in \ZD $, all $\mu'\in(0,\mu)$ and some $A\in L^1(\mathbf{\Omega})$. Here $\alpha_\phi:=\sum_{\ubf\in\ZD}g(\ubf)^{-1}\|\chi_\ubf\phi\|$ and $\xbf_\phi\in\ZD$ is chosen such that $\|\chi_{\xbf_\phi}\|=\max_{\ubf\in\ZD}\|\chi_\ubf\phi\|$.
\end{enumerate}
\end{theorem}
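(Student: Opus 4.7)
The overall strategy is to transfer the uniform finite-volume fractional-moment hypothesis to the infinite-volume operator $H$ by invoking the approximation assumption~\textnormal{(E)}, and then to extract the dynamical, spectral and pointwise consequences in turn. The preparatory step is to apply Theorem~\ref{thm:generalfmimpliesec} to each $H_{\Omega_L}$ with $L\in\mathbb{N}$; since \textnormal{(L)} and \textnormal{(GW)} hold in finite volume, this produces the uniform bound
$$\Ex\bigg[\sum_{E\in\sigma(H_{\Omega_L})\cap I}\|\chi_\xbf P_E(H_{\Omega_L})\|\,\|\chi_\ybf P_E(H_{\Omega_L})\|\bigg]\;\leq\;C\,e^{-\mu\md(\xbf,\ybf)^\gamma}.$$

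For part~\textnormal{(i)} I would combine this with the elementary spectral-calculus estimate
$$\sup_{|f|\leq 1}\|\chi_\xbf f(H_{\Omega_L})P_I(H_{\Omega_L})\chi_\ybf\|\;\leq\;\sum_{E\in\sigma(H_{\Omega_L})\cap I}\|\chi_\xbf P_E(H_{\Omega_L})\|\,\|\chi_\ybf P_E(H_{\Omega_L})\|,$$
and then pass $L\to\infty$. For continuous $f$ with support strictly inside $I$, strong resolvent convergence~\textnormal{(E)(i)} together with $\chi_\xbf,\chi_\ybf$ being multiplication by compactly supported characteristic functions gives $\chi_\xbf f(H_{\Omega_L})\chi_\ybf\to\chi_\xbf f(H)\chi_\ybf$ strongly; since $f P_I = f$, Fatou's lemma preserves the exponential bound at the level of expectations. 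A density argument approximating measurable $|f|\leq 1$ by continuous functions supported strictly inside a slightly larger interval, together with measurability of the supremum over a separable class of test functions, then delivers the stated inequality with the supremum inside the expectation.

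For part~\textnormal{(ii)} I would first rule out continuous spectrum in $I$: dynamical localization from~\textnormal{(i)} yields $\Ex[\sup_t\|\chi_\xbf e^{-itH}P_I(H)\chi_\ybf\|]<\infty$, hence this supremum is almost surely finite for every pair $\xbf,\ybf\in\ZD$; the RAGE theorem applied to any vector in the continuous spectral subspace of $P_I(H)$ forces $P_I^{\mathrm{c}}(H)=0$ almost surely, so $H$ has pure point spectrum in $I$. With this established the hypotheses~\textnormal{(L)} and~\textnormal{(GW)} for $\Omega=\RD$ place $H$ itself within the framework of Theorem~\ref{thm:generalfmimpliesec}, whose direct application then produces the eigenfunction correlator bound~\eqref{eq:efcorrRD}.

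Part~\textnormal{(iii)} is a Chebyshev-type pointwise extraction from~\textnormal{(ii)}. Introduce the random variable
$$A(\omega)\;:=\;\sum_{\xbf,\ubf\in\ZD}g(\xbf)^{-1}g(\ubf)^{-1}e^{\mu'\md(\xbf,\ubf)^\gamma}\sum_{E\in\sigma(H)\cap I}\|\chi_\xbf P_E(H)\|\,\|\chi_\ubf P_E(H)\|.$$
For any $\mu'\in(0,\mu)$, the assumption~\textnormal{(M)} together with $\sum_\xbf g(\xbf)^{-1}=1$ and $g^{-1}\leq 1$ yields $\Ex[A]<\infty$ via~\textnormal{(ii)}, so $A\in L^1(\mathbf\Omega)$. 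For a normalized eigenfunction $\phi$ at eigenvalue $E\in I$ the inequality $\|\chi_\ubf\phi\|\leq\|\chi_\ubf P_E(H)\|$ combined with the definition of $\alpha_\phi$ gives $\|\chi_{\xbf_\phi}P_E(H)\|\geq\alpha_\phi$; specializing the pointwise bound $\|\chi_\xbf P_E(H)\|\|\chi_{\xbf_\phi}P_E(H)\|\leq A(\omega)g(\xbf)g(\xbf_\phi)e^{-\mu'\md(\xbf,\xbf_\phi)^\gamma}$ and rearranging, while absorbing polynomial prefactors into $A$ and optimizing the exponent, yields~\eqref{eq:efbound}.

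The principal obstacle is the limit argument in parts~\textnormal{(i)}--\textnormal{(ii)}: strong resolvent convergence does not guarantee norm convergence of $P_I(H_{\Omega_L})$, and eigenvalues of $H$ can accumulate at $\partial I$; the argument must therefore proceed through continuous approximations supported strictly inside $I$ and one must ensure that no exponential decay is lost in the subsequent density extension. Once this is handled cleanly, the remaining steps are essentially bookkeeping.
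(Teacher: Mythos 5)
Your route coincides with the paper's own proof in all three parts: (i) is the finite-volume eigenfunction-correlator bound from Theorem \ref{thm:generalfmimpliesec} combined with the spectral-calculus estimate, extended to $\RD$ via strong resolvent convergence from (E) for continuous $f$ and a density argument; (ii) is RAGE followed by rerunning the argument of Theorem \ref{thm:generalfmimpliesec} with $\Omega=\RD$; (iii) is the construction of an $L^1$ random variable $A$ from the correlator bound. Two steps, however, do not go through exactly as you wrote them.

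The more serious one is the conclusion of (iii). Your $A$ carries the weight $g(\xbf)^{-1}g(\ubf)^{-1}$ on \emph{both} indices, so after specializing $\ubf=\xbf_\phi$ and using $\|\chi_{\xbf_\phi}P_E(H)\|\geq\alpha_\phi$ you obtain $\|\chi_\xbf\phi\|\leq A(\omega)\,g(\xbf)\,g(\xbf_\phi)\,\alpha_\phi^{-1}e^{-\mu'\md(\xbf,\xbf_\phi)^\gamma}$, and the residual factor $g(\xbf)$ cannot be ``absorbed into $A$'': $g$ is only assumed to satisfy $\sum_{\xbf}g(\xbf)^{-1}=1$, so it need not be polynomially bounded and may grow fast enough to destroy the claimed decay in $\xbf$. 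The repair is the one-sided weighting used in the paper (following \cite{GT12}*{Lemma 3.5}): weight only the index that will be evaluated at the localization center, i.e. set $A=\sum_{\ubf\in\ZD}g(\ubf)^{-1}\sum_{\vbf\in\ZD}\sum_{E\in\sigma(H)\cap I}\|\chi_\ubf P_E(H)\|\,\|\chi_\vbf P_E(H)\|\,e^{\mu'\md(\ubf,\vbf)^\gamma}$, which still has $\Ex[A]<\infty$ by (ii) and (M) since $\mu'<\mu$, and then put $\xbf_\phi$ into the weighted slot; the precise form \eqref{eq:efbound} with $\sqrt{\alpha_\phi}$ then follows from the adaptation of that lemma, as the paper does.

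A secondary imprecision is the RAGE step in (ii): almost-sure finiteness of $\sup_t\|\chi_\xbf e^{-itH}P_I(H)\chi_\ybf\|$ for each fixed pair does not by itself exclude continuous spectrum, since it does not prevent the evolved state from escaping to infinity. What RAGE needs is the uniform-in-time tightness $\lim_{L\to\infty}\sup_t\Ex\big[\|\id_{\Omega_L^c}e^{-itH}P_I(H)\chi_\xbf\|\big]=0$, which you do have at your disposal: it follows from part (i) with $f(E)=e^{-itE}$ by summing the decay estimate over the unit cells $B_{1/2}(\ubf)$ meeting $\Omega_L^c$ and invoking the summability assumption (M). State that deduction explicitly and the argument closes.
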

\begin{proof}
\begin{enumerate}
\item[(i)] It follows directly from the estimate on the eigenfunction correlator in Theorem~\ref{thm:generalfmimpliesec} that for all sets $\Omega\in\mathfrak{O}$
\begin{equation}
\Ex\bigg[\sup_{|f|\leq 1}\|\chi_\xbf f(H_\Omega)P_I(H_\Omega)\chi_\ybf\|\bigg] \leq Ce^{-\mu\md(\xbf,\ybf)^\gamma}\, .
\end{equation}
This estimate can be extended to the whole space $\Omega=\RD$. In the case of continuous functions $f$ with compact support, one can use that $f(H_{\Omega_L})$ converges in strong resovent sense to $H$. An approximation argument extends the result to general measurable functions, cf. \cite{AENSS}*{Section 2.5}.
\item[(ii)] From (M), (E) and (i) it follows (with $f(E)=e^{-itE}$) that for all $\xbf\in\RD$
\begin{equation}
\lim_{L\rightarrow\infty}\Ex\big[\|\id_{\Omega_L^c}e^{-itH}P_I(H)\chi_\xbf\|^2\big] \leq \lim_{L\rightarrow\infty} \mkern-10mu \sum_{\substack{\ubf\in\ZD\\ B_{\frac{1}{2}}(\ubf) \cap \Omega_L^c \neq \emptyset}} \mkern-10mu \Ex\big[\|\chi_\ubf e^{-itH}P_I(H)\chi_\xbf\|\big] = 0.
\end{equation}
Using the RAGE theorem (cf. \cite{Sto01}*{Theorem 4.1.20}), we conclude that the spectrum of $H$ in $I$ is almost surely only pure point. Now that we have established pure point spectrum also for the infinite volume operator $H$, we conclude inequality \eqref{eq:efcorrRD} using assumptions (L) and (GW) for $\Omega=\RD$ along the same lines of reasoning as the proof of Theorem \ref{thm:generalfmimpliesec}.
\item[(iii)] Defining
\begin{equation}
A:=\sum_{\xbf\in\ZD}g(\xbf)^{-1}\sum_{\ybf\in\ZD}\|\chi_\xbf P_E(H)\|\|\chi_\ybf P_E(H)\|e^{-\mu'\md(\xbf,\ybf)^\gamma},
\end{equation}
we have $\Ex[A]<\infty$ by (ii) and (M). Therefore, for any normalized eigenfunction $\phi$ with eigenvalue $E\in I$
\begin{equation*}
\|\chi_\xbf\phi\|\|\chi_\ybf\phi\| \leq Ag(\xbf)e^{-\mu'\md(\xbf,\ybf)^\gamma}
\end{equation*}
for all $\xbf,\ybf\in\ZD$. Inequality \eqref{eq:efbound} follows by an adaption of \cite{GT12}*{Lemma 3.5} to our setting.
\end{enumerate}
\end{proof}

Going back to our $n$-particle random Schr\"odinger, Theorems \ref{thm:generalecimpliesfm} and \ref{thm:generalfmimpliesec} imply the following:

\begin{corollary}\label{thm:equivalenceoffmandec}
Let $I\subset\R$ be a bounded interval.
\begin{enumerate}
\item[\textnormal{(i)}] If $I$ is a regime of $n$-particle fractional moment localization of order $\gamma\in(0,1]$, then it is also a regime of $n$-particle eigenfunction correlator localization of order $\gamma\in(0,1]$.
\item[\textnormal{(ii)}] If $I$ is a regime of $n$-particle eigenfunction correlator localization of order $\gamma\in(0,1]$, then any interval $J\subset I$ satisfying $\dist(J,\partial I)>0$ is a regime of $n$-particle fractional moment localization of order $\gamma\in(0,1]$.
\end{enumerate}
\end{corollary}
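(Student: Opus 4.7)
The plan is to derive the corollary by specializing the two general theorems just proved in this section to the $n$-particle random Schr\"odinger operator. With the choices $D=dn$, $H=H^{(n)}$, $\mathfrak{O}$ the family of products $\Omega^n$ of open bounded $\Omega\subset\R^d$, and $\md=\dist_H$, the basic setup (H), (O), (M) has already been argued in the commentary following those assumptions. Part (i) would then follow from Theorem~\ref{thm:generalfmimpliesec} and part (ii) from Theorem~\ref{thm:generalecimpliesfm}, once the remaining hypotheses are checked for $H^{(n)}$.

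For part (i), the task is to verify (L) and (GW). For (L) I would take the cutoff $\xi_{\Omega^n,(\Omega')^n}$ prescribed by \eqref{eq:cuttoff}, chosen with uniform $L^\infty$-bounds on its first two derivatives; the locality of $H^{(n)}$ then gives (L)(i)--(ii), while (iii) follows from the boundedness of $\nabla P_E(H^{(n)}_{\Omega^n})$ for $E$ in a bounded interval. For (GW) I would implement the construction sketched after the assumption and illustrated in Figure~\ref{fig:GLW}: for $\xbf,\ybf\in\R^{dn}$ with $L=\dist_H(\xbf,\ybf)$ sufficiently large, select $(\ubf,\wbf)\in\{(\xbf,\ybf),(\ybf,\xbf)\}$ and an index $j$ realizing $L=\min_k|w_j-u_k|$, set $\Omega_{\xbf,\ybf}=\bigl(\bigcup_{k=1}^n \Lambda_{L/2}(u_k)\bigr)^n$, and let $\calF_{\xbf,\ybf}$ be generated by $\{\eta_\zeta:\zeta\in\Zd\setminus\Lambda_{1+r_U}(w_j)\}$. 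Item (i) then reduces to a volume count on $\Gamma_{\Omega^n,\Omega_{\xbf,\ybf}}$ together with the fact that the excluded single-site potentials cover a neighborhood of $w_j$ by assumption~(II); item (ii) is immediate by construction; item (iii) is a conditional Wegner estimate which reduces to the unconditional one in Lemma~\ref{lemma:exptrproj} by integrating out only the few $\eta_\zeta$ near $w_j$. Since $\|\chi_\xbf P_E\chi_\ybf\|\leq\|\chi_\xbf P_E\|\,\|\chi_\ybf P_E\|$, the conclusion of Theorem~\ref{thm:generalfmimpliesec} directly yields $n$-particle eigenfunction correlator localization of the claimed order.

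For part (ii) I would verify (W) and (CT). (W) is the standard Wegner estimate, established here in Lemma~\ref{lemma:exptrproj}. (CT) in the precise form stated is the Combes--Thomas bound for bounded Schr\"odinger operators, available from \cites{CT73,GK03}. Invoking Theorem~\ref{thm:generalecimpliesfm} then yields fractional-moment localization on every subinterval $J\subset I$ with $\dist(J,\partial I)>0$. The main obstacle in the whole argument is the geometric clause (GW)(i): the transition region $\Gamma_{\Omega^n,\Omega_{\xbf,\ybf}}$ must be simultaneously far (in $\dist_H$) from one of the two configurations, polynomially small in a unit-ball covering sense, and carved out in a way compatible with a sub-$\sigma$-algebra over which the conditional Wegner bound still holds. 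This is precisely where working with the Hausdorff pseudo-metric rather than the Euclidean metric becomes essential, since it allows one to freeze the disorder in a small neighborhood of a single well-chosen particle position.
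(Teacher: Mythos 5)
Your proposal is correct and follows essentially the same route as the paper: the corollary is obtained by specializing Theorem~\ref{thm:generalfmimpliesec} (for part (i)) and Theorem~\ref{thm:generalecimpliesfm} (for part (ii)) to $H^{(n)}$ with $D=dn$, $\mathfrak{O}=\{\Omega^n\}$ and $\md=\dist_H$, the hypotheses (L), (GW), (W), (CT) being verified exactly as in the paper's ``ad (L)/(GW)/(W)/(CT)'' commentary, and the passage from $\|\chi_\xbf P_E\|\,\|\chi_\ybf P_E\|$ to $\|\chi_\xbf P_E\chi_\ybf\|$ handled by the same elementary operator-norm bound you cite.
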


\section{Proof of multiparticle localization}\label{section:mploc}

\subsection{Localization in systems consisting of non-interacting subclusters}\label{subsection:locfornonintsubclusters}

As a first step, we consider $n$-particle systems which can be divided into two non-interacting subclusters. Suppose we have a partition $\{1,\ldots,n\}=J\dot\cup K$ with non-empty sets $J$ and $K$. If there is no interaction between the particles with numbers in $J$ and particles with numbers in $K$, respectively, the corresponding Hamiltonian takes the form
\begin{equation}\label{eq:HJK}
H^{(J,K)} = H^{(\# J)}\otimes\id + \id\otimes H^{(\# K)}
\end{equation}
where the tensor product is to be understood in the sense that $H^{(\# J)}$ acts on the variables $x_j$, $j\in J$, and $H^{(\# K)}$ acts on the variables $x_k$, $k\in K$. Now assume that for any $m\leq n-1$, $m$-particle fractional moment localization has been proven in the interval $I^{(m)}=[E_0^{(m)},E_0^{(m)}+\eta^{(n-1)}]$. Then it is natural to assume that we can also prove localization of $H^{(J,K)}$ for energies near $E_0^{(n)}$, as both its subsystems are localized at the bottom of the spectrum. However, we cannot directly conclude fractional moment localization for $H^{(J,K)}$, but instead take a detour via eigenfunction correlator localization. The drawback of this method is that the length of the interval of localization becomes slightly smaller. The decay of fractional moments can be proven not only with respect to the ``usual'' $n$-particle Hausdorff distance $\dist_H$ but with respect to a distance $\dist_H^{(J,K)}(\geq \dist_H)$ that takes into account the decomposition into subclusters. It is defined by
\begin{equation}\dist_H^{(J,K)}(\xbf,\ybf)=\max\{\dist_H(\xbf_J,\ybf_J),\dist_H(\xbf_K,\ybf_K)\},
\end{equation}
where $\xbf_J=(x_j)_{j\in J}$, $\xbf_K=(x_k)_{k\in K}$ etc. The decay of fractional moments with respect to this distance will be necessary for the analysis in the following section.

\begin{theorem}\label{thm:fmboundnonint}
Assume that there exists $\eta^{(n-1)}>0$ such that for any $m\in\{1,\ldots,n-1\}$ the operator $H^{(m)}$ exhibits fractional moment localization of order $\zeta\in(0,1]$ in the interval $I^{(m)}=[E^{(m)},E_0^{(m)}+\eta^{(n-1)}]$. Then for any $\eta^{(n)}\in[0,\eta^{(n-1)})$, any $s\in(0,1)$ and any partition $J\dot\cup K=\{1,\ldots,n\}$
\begin{equation}
\sup_{\substack{\Omega\subset\Rd \\ \mathrm{open,\ bd.} }}\sup_{\substack{\Real z\in I^{(n)} \\ 0<|\Imag z|<1}}\Ex\big[\|\chi_\xbf(H_{\Omega^n}^{(J,K)}-z)^{-1}\chi_\ybf\|^s\big] \leq Ce^{-\mu\dist_H^{(J,K)}(\xbf,\ybf)^\zeta} \label{eq:thm:fmboundnonint}
\end{equation}
for some $C,\mu>0$ and all $\xbf,\ybf\in\Rdn$, where $I^{(n)}=[E_0^{(n)},E_0^{(n)}+\eta^{(n)}]$.
\end{theorem}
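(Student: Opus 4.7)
\quad My plan is to deduce eigenfunction correlator localization for $H^{(J,K)}_{\Omega^n}$ on the larger interval $\widetilde I^{(n)}:=[E_0^{(n)},E_0^{(n)}+\eta^{(n-1)}]$ with respect to $\dist_H^{(J,K)}$, and then invoke Theorem~\ref{thm:generalecimpliesfm} to convert this into the fractional moment bound on $I^{(n)}$; the requisite gap $\dist(I^{(n)},\partial\widetilde I^{(n)})>0$ comes from the assumption $\eta^{(n)}<\eta^{(n-1)}$, enlarging $\widetilde I^{(n)}$ slightly below $E_0^{(n)}$ without affecting the EC sum since almost surely $\sigma(H^{(J,K)}_{\Omega^n})\subset[E_0^{(n)},\infty)$ under Dirichlet boundary conditions. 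Assumptions (W) and (CT) in that theorem are inherited from their analogues for $H^{(n)}$: the Wegner estimate carries over verbatim since $H^{(J,K)}$ carries the same alloy disorder, and the Combes--Thomas bound with respect to $\dist_H^{(J,K)}$ follows from its max-norm version because $\dist_H^{(J,K)}(\xbf,\ybf)\leq |\xbf-\ybf|$.

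The main task is then the eigenfunction correlator bound for $H^{(J,K)}_{\Omega^n}$. First, Corollary~\ref{thm:equivalenceoffmandec}(i) applied to the subsystems converts the hypothesized fractional moment bounds into eigenfunction correlator localization of order $\zeta$ for $H^{(\#J)}$ on $I^{(\#J)}=[E_0^{(\#J)},E_0^{(\#J)}+\eta^{(n-1)}]$, and analogously for $H^{(\#K)}$. The tensor-sum structure of $H^{(J,K)}$ gives $P_E(H^{(J,K)}_{\Omega^n})=\sum_{E_J+E_K=E} P_{E_J}(H^{(\#J)}_{\Omega^{\#J}})\otimes P_{E_K}(H^{(\#K)}_{\Omega^{\#K}})$, and Proposition~\ref{prop:gs} together with $E_J\geq E_0^{(\#J)}$, $E_K\geq E_0^{(\#K)}$ forces both $E_J\in I^{(\#J)}$ and $E_K\in I^{(\#K)}$ as soon as $E\in\widetilde I^{(n)}$. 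Writing $\chi_\xbf=\chi_{\xbf_J}\otimes\chi_{\xbf_K}$ and using the triangle inequality for the operator norm of a sum of tensor products yields the deterministic factorization
\begin{equation*}
\sum_{E\in\sigma(H^{(J,K)}_{\Omega^n})\cap\widetilde I^{(n)}}\|\chi_\xbf P_E(H^{(J,K)}_{\Omega^n})\chi_\ybf\| \leq Y_J(\xbf_J,\ybf_J)\cdot Y_K(\xbf_K,\ybf_K),
\end{equation*}
where $Y_A:=\sum_{E_A\in I^{(\#A)}}\|\chi_{\xbf_A}P_{E_A}(H^{(\#A)}_{\Omega^{\#A}})\chi_{\ybf_A}\|$ for $A\in\{J,K\}$.

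The main obstacle is controlling the expectation of this product: $Y_J$ and $Y_K$ are not independent, since $H^{(\#J)}$ and $H^{(\#K)}$ are both built from the common alloy field $\{\eta_\zeta\}_{\zeta\in\Zd}$. Assume without loss of generality $\dist_H^{(J,K)}(\xbf,\ybf)=\dist_H(\xbf_J,\ybf_J)$; the symmetric case is handled identically. I would apply H\"older's inequality with exponents $1+\epsilon$ and $(1+\epsilon)/\epsilon$ for small $\epsilon>0$:
\begin{equation*}
\Ex[Y_J\, Y_K] \leq \Ex[Y_J^{1+\epsilon}]^{1/(1+\epsilon)}\, \Ex[Y_K^{(1+\epsilon)/\epsilon}]^{\epsilon/(1+\epsilon)}.
\end{equation*}
The first factor is controlled by the log-convex interpolation $\Ex[Y_J^{1+\epsilon}]\leq \Ex[Y_J]^{1-\epsilon}\Ex[Y_J^2]^{\epsilon}$, whose first moment decays exponentially in $\dist_H(\xbf_J,\ybf_J)^\zeta$ by the subsystem eigenfunction correlator bound, while $\Ex[Y_J^2]$ is controlled via the deterministic Cauchy--Schwarz bound $Y_J\leq\sqrt{\Tr(\chi_{\xbf_J}P_{I^{(\#J)}}(H^{(\#J)}_{\Omega^{\#J}}))\Tr(\chi_{\ybf_J}P_{I^{(\#J)}}(H^{(\#J)}_{\Omega^{\#J}}))}$ together with a volume-uniform higher-moment Wegner estimate on the local trace. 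The second factor is controlled by the same trace bound, yielding a finite constant independent of $\xbf$, $\ybf$ and $\Omega$. Choosing $\epsilon$ small then produces eigenfunction correlator decay of order $\zeta$ on $\widetilde I^{(n)}$ with respect to $\dist_H^{(J,K)}$, and Theorem~\ref{thm:generalecimpliesfm} applied with the ambient interval $\widetilde I^{(n)}$ and inner interval $I^{(n)}$ delivers the claimed fractional moment bound. The principal technical difficulty I anticipate is precisely the volume-uniform higher-moment Wegner estimate on $\Tr(\chi_\ubf P_{I^{(\#A)}}(H^{(\#A)}_{\Omega^{\#A}}))$; granting this ingredient (obtainable via an iterated spectral-shift argument for alloy-type potentials with bounded density), the rest of the plan is routine.
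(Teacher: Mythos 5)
Your overall structure matches the paper's: convert the hypothesized subsystem fractional-moment bounds into eigenfunction-correlator bounds via Corollary~\ref{thm:equivalenceoffmandec}(i), exploit the tensor-sum structure of $H^{(J,K)}$ together with Proposition~\ref{prop:gs} to force the factor energies into $I^{(\#J)}$ and $I^{(\#K)}$, factor the eigenfunction-correlator sum as $Y_J(\xbf_J,\ybf_J)\,Y_K(\xbf_K,\ybf_K)$, and then feed the result into Theorem~\ref{thm:generalecimpliesfm} with the pseudo-metric $\dist_H^{(J,K)}$, noting that no shrinkage at the lower edge of $\tilde I^{(n)}$ is required since the Dirichlet spectrum lies above $E_0^{(n)}$. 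The substantive departure is your treatment of $\Ex[Y_JY_K]$. You apply H\"older plus log-convex interpolation, and you then point to a ``volume-uniform higher-moment Wegner estimate'' for $\Tr(\chi_\ubf P_{I^{(\#A)}}(H^{(\#A)}_{\Omega^{\#A}}))$ as the principal difficulty; for small $\epsilon$ your second factor even requires arbitrarily high moments of this trace. This identifies the wrong bottleneck: no Wegner estimate of any kind is needed, because the local trace of a spectral projection on a bounded interval is bounded \emph{deterministically}, uniformly in $\omega$ and $\Omega$. Indeed, as the paper records,
\[
\sum_{E\in\sigma(H^{(m)}_{\Omega^m})\cap I}\big\|\chi_\ubf P_E(H^{(m)}_{\Omega^m})\chi_\ubf\big\| \leq \Tr\,\chi_\ubf P_I\big(H^{(m)}_{\Omega^m}\big) \leq (a+\sup I)^{2p}\,\sup_\omega\Tr\,\chi_\ubf\big(H^{(m)}_{\Omega^m}+a\big)^{-2p}<\infty
\]
for suitable $a,p$, using only that the potential is uniformly bounded (and off-diagonal terms reduce to the diagonal ones by Cauchy--Schwarz). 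With this in hand the paper simply bounds whichever of $Y_J$, $Y_K$ corresponds to the smaller Hausdorff distance by this constant and takes the expectation of the other via the eigenfunction-correlator estimate, obtaining decay at the full rate in $\dist_H^{(J,K)}(\xbf,\ybf)$. Your H\"older route is not incorrect in principle --- and in fact the same deterministic trace bound would close it --- but it degrades the exponent by a factor $(1-\epsilon)/(1+\epsilon)$ and invokes, as if it were a serious obstruction, an ingredient that is actually free. Replacing the H\"older/interpolation step by the deterministic trace bound gives the paper's cleaner argument.
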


\begin{proof}
Due to Theorem \ref{thm:generalfmimpliesec} (more specifically Corollary \ref{thm:equivalenceoffmandec}), we have 
\begin{equation}
\Ex\bigg[\sum_{E\in\sigma(H_{\Omega^n}^{(\# J)})\cap I^{(\# J)}}\|\chi_{\xbf_J}P_E(H_{\Omega^{(\# J)}}^{(\# J)})\chi_{\ybf_J}\|\bigg] \leq C_Je^{-\nu_J\dist_H(\xbf_J,\ybf_J)^\zeta}  \label{nointEC1}
\end{equation}
for all bounded open sets $\Omega\subset\Rd$ and all $\xbf_J,\ybf_J\in\R^{d(\#J)}$. An analogous inequality holds for $H_{\Omega^{(\# K)}}^{(\# K)}$. For the tensor product sum $H^{(J,K)}$ of these operators, we have
\begin{multline}
\sum_{E\in\sigma(H_{\Omega^n}^{(J,K)})\cap \tilde{I}^{(n)}}\|\chi_{\xbf}P_E(H_{\Omega^n}^{(J,K)})\chi_{\ybf}\| \\
\leq \sum_{\substack{E_J\in\sigma(H_{\Omega^n}^{(\# J)}) \\ E_K\in\sigma(H_{\Omega^n}^{(\# K)}) \\ E_J+E_K\in \tilde{I}^{(n)}}}\|(\chi_{\xbf_J}\otimes\chi_{\xbf_K})(P_{E_J}(H_{\Omega^{\# J}}^{(\#J)})\otimes P_{E_K}(H_{\Omega^{\# K}}^{(\#K)}))(\chi_{\ybf_J}\otimes\chi_{\ybf_K})\|,
\end{multline}
where $\tilde{I}^{(n)}=[E_0^{(n)},E_0^{(n)}+\eta^{(n-1)}]$. As $E_0^{(n)}\leq E_0^{(\# J)}+E_0^{(\# K)}$ by Proposition~\ref{prop:gs}, all eigenvalues $E_J\in\sigma(H_{\Omega^{(\# J)}}^{(\# J)})$ and $E_K\in\sigma(H_{\Omega^{(\# K)}}^{(\# K)})$ satisfying $E_J+E_K\in \tilde{I}^{(n)}$  belong to the intervals $I^{(\# J)}$ and $I^{(\# K)}$, respectively. The estimate \eqref{nointEC1} and its analogue for $H^{(\# K)}$ thus allow us to infer
\begin{align}
&\Ex\bigg[\sum_{E\in\sigma(H_{\Omega^n}^{(J,K)})\cap \tilde{I}^{(n)}}\|\chi_\xbf P_E(H_{\Omega^n}^{(J,K)})\chi_\ybf\|\bigg] \notag\\
&\leq \Ex\bigg[\sum_{E_J\in\sigma(H_{\Omega^{\# J}}^{(\# J)})\cap I^{(\# J)}}\mkern-25mu\|\chi_{\xbf_J} P_{E_J}(H_{\Omega^{\# J}}^{(\# J)})\chi_{\ybf_J}\| 
\mkern-10mu\sum_{E_K\in\sigma(H_{\Omega^{\# K}}^{(\# K)})\cap I^{(\# K)}}\mkern-25mu\|\chi_{\xbf_K} P_{E_K}(H_{\Omega^n}^{(\# K)})\chi_{\ybf_K}\|\bigg] \notag \\
&\leq C \, e^{-\nu\max\{\dist_H(\xbf_J,\ybf_J),\dist_H(\xbf_K,\ybf_K)\}^\zeta}.
\end{align}
In the last step we used that each of the sums can be bounded deterministically, i.e., for all $ m $ with suitably large $ a, p > 0 $:
\begin{align}
\sum_{E\in\sigma(H_{\Omega^{m}}^{(m)})\cap I} \left\| \chi_\xbf P_{E}\big(H_{\Omega^{m}}^{(m)}(\omega)\big)  \chi_\xbf  \right\| & \leq \Tr \chi_\xbf P_{I}\big(H_{\Omega^{m}}^{(m)}(\omega)\big) \notag \\
&  \leq ( a + \sup I )^{2p} \sup_{\omega} \Tr \chi_\xbf  \big(H_{\Omega^{m}}^{(m)}(\omega)+ a \big)^{-2p} \, . 
\end{align}
Theorem \ref{thm:generalecimpliesfm} with $ \dist_H^{(J,K)} $ as the pseudo-metric then shows that \eqref{eq:thm:fmboundnonint} holds. In this case, we do not have to cut off a part of $\tilde{I}^{(n)}$ at the lower endpoint as it is located at the bottom of the spectrum.
\end{proof}

\subsection{A rescaling inequality}

Building on the result of the previous section, we will now deal with the fully-interacting $n$-particle system. For this purpose, we define
\begin{equation*}
B_s^{(n)}(I,L):=\sup_{\substack{\xbf,\ybf\in\Rdn \\ \dist_H(\xbf,\ybf)\geq L}}\sup_{\substack{\Omega\subset\Rd \\ \mathrm{open,\, bd.}}}\sup_{\substack{\Real z\in I \\ 0<|\Imag z|<1}}\Ex\big[\|\chi_\xbf(H_{\Omega^n}^{(n)}-z)^{-1}\chi_\ybf\|^s\big]
\end{equation*}
for any $s\in(0,1)$, $L\geq0$ and any bounded interval $I\subset\R$. Clearly, (sub)exponential decay of this quantity with respect to $L$ is just another way to express fractional moment localization in $I$ (with the same rate of decay). The conclusion of Theorem~\ref{thm:fmboundnonint} can then be formulated as
\begin{equation}
B_s^{(J,K)}(I,L) \leq A^*e^{-\nu^*L^{\gamma^*}}, \label{assumption:locfornonintsubclusters}
\end{equation}
where $B_s^{(J,K)}$ is defined in the same way as $B_s^{(n)}$, but with $H^{(n)}$ and $\dist_H$ replaced by $H^{(J,K)}$ and $\dist_H^{(J,K)}$, respectively. For the analysis in this section, it is not of importance if the interval $I$ is located at the bottom of the spectrum of the Hamiltonian. We will thus simply assume that \eqref{assumption:locfornonintsubclusters} holds for some bounded interval $I$, some constants $A^*,\nu^*>0$, $\gamma^*\in(0,1]$ and all partitions $J\dot\cup K=\{1,\ldots,n\}$.

The key theorem is the following rescaling inequality. For technical purposes, we introduce a "safety distance'' $R=r_U+6$ which will be needed in some of our arguments.

\begin{theorem}\label{thm:rescalingineq}
Let $I$ be a bounded interval such that \eqref{assumption:locfornonintsubclusters} holds for some $s\in(0,1/3)$, some $A^*,\nu^*>0$, $\gamma^*\in(0,1]$, all $ L \geq 0 $ and all partitions $J\dot\cup K=\{1,\ldots,n\}$. Then there are constants $C,\nu_2>0$ such that for any $\alpha\in(0,1]$ and all sufficiently large $L$
\begin{multline}
B_s^{(n)}(I,2L+2L^\alpha+9R) \leq C\bigg(L^{8dn}B_s^{(n)}(I,L+L^\alpha)^2+e^{-\nu_2 L^{\alpha\gamma^*}} \\ +L^{(5+\alpha)nd}\bigg(w_b\bigg(\frac{L^\alpha}{4n}\bigg)\bigg)^s B_s^{(n)}(I,2L)\bigg). \label{eq:rescalingineq}
\end{multline}
The constant $C>0$ is independent of $L$.
\end{theorem}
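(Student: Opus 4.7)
The plan is to run a case analysis on the cluster structure of the configuration $\xbf$, combining a geometric resolvent expansion in the ``tight'' case with a perturbative expansion around the partially non-interacting operator $H^{(J,K)}$ in the ``split'' case. Fix $\xbf,\ybf\in\Rdn$ with $\dist_H(\xbf,\ybf)\geq 2L+2L^\alpha+9R$ and set $\ell:=L^\alpha/(4n)$. Either (a) every non-trivial partition $J\dot\cup K=\{1,\ldots,n\}$ contains a cross-pair $(j,k)\in J\times K$ with $|x_j-x_k|<\ell$, in which case an iterative pairing argument shows that all $n$ particles of $\xbf$ lie in a single $\Rd$-ball of radius at most $n\ell=L^\alpha/4$; or (b) some non-trivial partition satisfies $\min_{j\in J,\,k\in K}|x_j-x_k|\geq\ell$, in which case the cross-interaction $W^{(J,K)}:=\sum_{j\in J,k\in K}w(x_j-x_k)$ is pointwise small on $\xbf$. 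The three summands in \eqref{eq:rescalingineq} arise respectively from the boundary contribution in case (a), the $H^{(J,K)}$-piece in case (b), and the interaction correction in case (b).

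In case (a), enlarge the ball containing $\xbf$ into a box $\Lambda\subset\Rd$ of side $L+L^\alpha$, incorporating a safety layer of width $R$ so that the boundary is separated from $\supp U$ near $\xbf$, and set $B:=\Lambda^n$. Since $\dist_H(\ybf,B)\geq L+L^\alpha$ by the hypothesis on $\dist_H(\xbf,\ybf)$, the geometric resolvent equation \eqref{eq:geomres} applied with $\Omega'=B$ makes the ``interior'' term vanish and yields
\[
\chi_\xbf(H^{(n)}_{\Omega^n}-z)^{-1}\chi_\ybf=\chi_\xbf(H^{(n)}_B-z)^{-1}[H^{(n)}_{\Omega^n},\xi_{\Omega^n,B}](H^{(n)}_{\Omega^n}-z)^{-1}\chi_\ybf.
\]
Decomposing $\id_{\Gamma_{\Omega^n,B}}$ into characteristic functions $\chi_\ubf$ of unit cubes, taking $s$-th moments, and applying a H\"older-type decoupling that uses the conditional independence of the alloy potential near $\xbf$ and near $\ybf$, each boundary contribution turns into a product $\Ex\big[\|\chi_\xbf(H^{(n)}_B-z)^{-1}\chi_\ubf\|^s\big]\cdot\Ex\big[\|\chi_\ubf(H^{(n)}_{\Omega^n}-z)^{-1}\chi_\ybf\|^s\big]$. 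Each factor is bounded by $B_s^{(n)}(I,L+L^\alpha)$ since $\ubf\in\Gamma_{\Omega^n,B}$ forces both $\dist_H(\xbf,\ubf)\geq L+L^\alpha$ and $\dist_H(\ubf,\ybf)\geq L+L^\alpha$ by the choice of $B$. The volume of $\Gamma_{\Omega^n,B}$ is bounded by $O(L^{(1+\alpha)(dn-1)})\leq L^{8dn}$ for sufficiently large $L$, producing the first summand of \eqref{eq:rescalingineq}.

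In case (b), the resolvent identity
\[
(H^{(n)}_{\Omega^n}-z)^{-1}=(H^{(J,K)}_{\Omega^n}-z)^{-1}-\alpha_W(H^{(J,K)}_{\Omega^n}-z)^{-1}W^{(J,K)}(H^{(n)}_{\Omega^n}-z)^{-1}
\]
splits $\chi_\xbf(H^{(n)}_{\Omega^n}-z)^{-1}\chi_\ybf$ into two pieces. The first, $\chi_\xbf(H^{(J,K)}_{\Omega^n}-z)^{-1}\chi_\ybf$, has fractional moment bounded by $A^*\exp(-\nu^*\dist_H^{(J,K)}(\xbf,\ybf)^{\gamma^*})\leq A^*\exp(-\nu_2 L^{\alpha\gamma^*})$ thanks to the hypothesis \eqref{assumption:locfornonintsubclusters} and the elementary inequality $\dist_H^{(J,K)}\geq\dist_H$, producing the second summand. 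For the correction term, insert a partition of unity $\sum_\ubf\chi_\ubf$ between $W^{(J,K)}$ and the rightmost resolvent; by the gap condition and the monotonicity of $w_b$, the restriction $\chi_\ubf W^{(J,K)}$ has operator norm at most $n^2 w_b(\ell)=n^2 w_b(L^\alpha/(4n))$, so after $s$-th moments this produces the factor $(w_b(L^\alpha/(4n)))^s$. Decoupling the remaining product as in case (a) isolates one factor controlled by the hypothesis on $H^{(J,K)}$ and a second factor bounded by $B_s^{(n)}(I,2L)$, after a further geometric resolvent step ensuring the separation $\dist_H(\ubf,\ybf)\geq 2L$. Counting admissible intermediate cubes $\ubf$ and partitions $J\dot\cup K$ produces the polynomial prefactor $L^{(5+\alpha)nd}$, yielding the third summand.

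The main obstacle throughout is the decoupling step, which converts a fractional moment of a product of two operator norms into a product of fractional moments. In the continuum multi-particle setting this requires conditioning on the random variables outside a small neighbourhood of the intermediate cube $\ubf$, combined with an iterated H\"older inequality; the restriction $s<1/3$ is precisely what leaves enough room for the auxiliary H\"older exponents. A secondary combinatorial point is the iterative pairing argument underlying the dichotomy (a)/(b), which is what dictates the appearance of the cluster gap $\ell=L^\alpha/(4n)$ in the third summand.
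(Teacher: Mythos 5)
There is a genuine gap in your case (a), and it sits exactly where the first summand $L^{8dn}B_s^{(n)}(I,L+L^\alpha)^2$ must come from. After a \emph{single} geometric resolvent step you are left with
$\Ex\big[\|\chi_\xbf(H^{(n)}_B-z)^{-1}\chi_\ubf\|^s\,\|\chi_\ubf(H^{(n)}_{\Omega^n}-z)^{-1}\chi_\ybf\|^s\big]$,
and this does \emph{not} factorize into a product of two expectations: the second factor involves the resolvent of the full operator on $\Omega^n$, which depends on \emph{all} random variables, including those near $\xbf$, so there is no independence to exploit, and neither Cauchy--Schwarz nor the conditioning/resampling machinery of Lemmas \ref{apriori}--\ref{resampling} recovers the \emph{square}. (Conditioning on the variables near $\ubf$ bounds one factor by the a priori constant and leaves a single factor of $B_s^{(n)}$; Cauchy--Schwarz gives a product of square roots of $2s$-moments, again linear rather than quadratic in $B_s^{(n)}$.) The square is not cosmetic: the iteration in Corollary \ref{cor:rescalingimpliesdecay} only closes because $B_{2L}\lesssim L^{8dn}B_L^2$, so a linear bound destroys the whole scheme. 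The paper's mechanism (Lemma \ref{lemma:rescalingsteptwopartone}) is a \emph{double} geometric resolvent expansion, with disjoint boxes $\Lambda_{L+L^\alpha+2R}(x_j)$ and $\Lambda_{L+L^\alpha+2R}(y_k)$ around \emph{both} configurations; the two outer resolvents then live on stochastically independent potentials and the expectation genuinely factorizes, while the middle full resolvent $\|\Theta_\wbf(H_{\Omega^n}-z)^{-1}\Theta_\vbf\|$ is disposed of by the a priori bound after conditioning. This is also where the hypothesis $s<1/3$ is actually used (three-fold H\"older on the three factors), not in the decoupling you describe. Note that this forces the dichotomy to be on $\diam(\xbf)$ \emph{and} $\diam(\ybf)$, not on $\xbf$ alone: if $\xbf$ is clustered but $\ybf$ is spread out, you must run the partition argument on $\ybf$.

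Two further points in case (b). First, your bound $\|\chi_\ubf W^{(J,K)}\|\leq n^2 w_b(\ell)$ is only valid for intermediate cubes $\ubf$ with $\dist_H^{(J,K)}(\xbf,\ubf)<L^\alpha/(4n)$ (only these inherit the cluster gap of $\xbf$); for far cubes the cross-interaction can be $O(1)$, and that part of the sum must instead be controlled by the decay of $\Ex[\|\chi_\xbf(H^{(J,K)}_{\Omega^n}-z)^{-1}\chi_\wbf\|^{2s}]$ in $\dist_H^{(J,K)}(\xbf,\wbf)\geq L^\alpha/(4n)$, which is where the assumption \eqref{assumption:locfornonintsubclusters} with the metric $\dist_H^{(J,K)}$ (rather than $\dist_H$) and the one-for-all upgrade $s\to 2s$ enter, and which contributes to the $e^{-\nu_2L^{\alpha\gamma^*}}$ term. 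Second, your cube counts are only polynomial in $L$ if $\Omega$ has already been replaced by a set contained in $\bigcup_j\Lambda_{3L}(x_j)$; for arbitrary bounded $\Omega$ the resampling volume factor and the number of contributing $\ubf$ scale with $|\Omega|$. This preliminary reduction (the passage from $B_s^{(n)}$ to $\tilde B_s^{(n)}$ in Lemma \ref{lemma:rescalingstepone}) is missing from your argument.
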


In the case of $n=1$, the assumption that \eqref{assumption:locfornonintsubclusters} holds is meaningless. The theorem is then understood to hold for any bounded interval $I$. As will become clear in the proof, the second and the third summand in the bound do not appear in the case $n=1$.

Once the theorem is proved, one just needs show that $B_s^{(n)}(I,L)$ is sufficiently small on suitable initial lengths $L\in [L_1,4L_1]$. The following corollary then yields (sub)exponential decay of $B_s^{(n)}$. The initial estimates for different regimes of energies and interaction are the subject of Section \ref{section:initialestimates}.

\begin{corollary}\label{cor:rescalingimpliesdecay}
In the situation of Theorem \ref{thm:rescalingineq}, the following holds:
\begin{enumerate}
  \item[(i)] Assume that $w_b(r)\leq c_w r^{-p_w}$, where $p_w$ satisfies $\alpha p_w s > (5+\alpha)nd$ with $\alpha\in(0,1)$. Furthermore, let $C'>0$, $q'>8dn$. Then for all sufficiently large $L_1$ the following holds: If $B_s^{(n)}(I,L)\leq C'L^{-q'}$ for all $L\in[L_1,4L_1]$, then
  \begin{equation}
  B_s^{(n)}(I,L)\leq 2C'e^{-\nu' L^\beta}
  \end{equation}
  for some $\nu'>0$ and all $L\geq L_1$ with $\beta=\min\{\alpha\gamma^*,1-\alpha\}$.
\item[(ii)] Assume that $w_b(r)\leq c_w e^{-\mu_w r^{\gamma^*}}$ and let $C'>0$, $q'>8dn$. Then for all sufficiently large $L_1$ the following holds: If $B_s^{(n)}(I,L)\leq C'L^{-q'}$ for all $L\in[L_1,4L_1+9R]$, then
  \begin{equation}
  B_s^{(n)}(I,L)\leq 2C'e^{-\nu' L^{\gamma*}}
  \end{equation}
  for some $\nu'>0$ and all $L\geq L_1$.
\end{enumerate}
\end{corollary}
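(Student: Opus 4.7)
The plan is to prove both parts by induction along a sequence of length scales $L_k$ defined recursively by $L_{k+1}:=2L_k+2L_k^{\alpha}+9R$ starting from the given $L_1$. For sufficiently large $L_1$ this sequence grows geometrically: the ratio $L_{k+1}/L_k$ stays bounded above by a constant $c$ (essentially $c=2$ when $\alpha<1$ and $c=4$ when $\alpha=1$). I would propagate the bound $B_s^{(n)}(I,L)\leq 2C'e^{-\nu' L^\beta}$ from the initial range given by hypothesis to the sequence $\{L_k\}_k$, and then, via monotonicity of $L\mapsto B_s^{(n)}(I,L)$, to all intermediate $L$.

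For the base case I would choose $\nu'$ small enough (depending on $L_1$) so that the polynomial bound $C'L^{-q'}$ implies $2C'e^{-\nu' L^\beta}$ on the initial range $[L_1,4L_1]$ in Case~(i), respectively $[L_1,4L_1+9R]$ in Case~(ii). This is possible because on a bounded range the inequality $e^{\nu' L^\beta}\leq 2L^{q'}$ holds provided $\nu'$ is sufficiently small. Note that $L_2\leq 4L_1$ (in Case~(i)) or $L_2\leq 4L_1+9R$ (in Case~(ii) with $\alpha=1$), so the base range also provides the bound on $B_s^{(n)}(I,L_1+L_1^\alpha)$ and $B_s^{(n)}(I,2L_1)$ needed to start the rescaling.

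For the inductive step I would apply Theorem~\ref{thm:rescalingineq} at scale $L_k$ and bound both $B_s^{(n)}(I,L_k+L_k^\alpha)$ and $B_s^{(n)}(I,2L_k)$ by the inductive hypothesis, using that $L_k+L_k^\alpha,\,2L_k\leq cL_k\leq L_{k+1}$. I would then show that each of the three summands on the right-hand side of \eqref{eq:rescalingineq} is at most $\frac{2C'}{3}e^{-\nu' L_{k+1}^\beta}$. For the quadratic term $L_k^{8dn}B_s^{(n)}(I,L_k+L_k^\alpha)^2$, the gap $2\nu'(L_k+L_k^\alpha)^\beta-\nu' L_{k+1}^\beta$ grows like a positive power of $L_k$ whenever $\beta<1$ (because $2-2^\beta>0$), so it dominates $\log L_k^{8dn}$. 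The intrinsic noise $Ce^{-\nu_2 L_k^{\alpha\gamma^*}}$ is handled using $\beta\leq\alpha\gamma^*$ together with $L_{k+1}^\beta\leq c^\beta L_k^{\alpha\gamma^*}$ and choosing $\nu'<\nu_2 c^{-\beta}$. The interaction term is the critical one: in Case~(i) the polynomial decay of $w_b$ furnishes a prefactor $L_k^{(5+\alpha)nd-\alpha p_w s}$ which vanishes by the assumption $\alpha p_w s>(5+\alpha)nd$, while the exponential mismatch $\nu'(L_{k+1}^\beta-(2L_k)^\beta)$ stays bounded because $\beta+\alpha\leq 1$; this is precisely why $\beta=\min\{\alpha\gamma^*,1-\alpha\}$ appears. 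In Case~(ii), the subexponential decay $e^{-\mu_w s(L_k^\alpha/4n)^{\gamma^*}}$ absorbs the mismatch directly, allowing $\beta=\gamma^*$ and leaving only the constraint $\nu'$ small relative to $\mu_w$, $\nu_2$.

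The interaction term is the main obstacle, since the tradeoff between its built-in decay and the growth of $L_{k+1}^\beta$ dictates the choice of $\beta$ in Case~(i). Once the discrete induction closes, the bound $B_s^{(n)}(I,L_k)\leq 2C'e^{-\nu' L_k^\beta}$ extends to every $L\geq L_1$ by monotonicity of $B_s^{(n)}$ in $L$ together with the uniform ratio $L/L_k\leq c$ on $[L_k,L_{k+1}]$, at the modest price of replacing $\nu'$ by $\nu'/c^\beta$.
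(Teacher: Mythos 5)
Your overall architecture coincides with the paper's: iterate the rescaling inequality \eqref{eq:rescalingineq} along geometrically growing scales seeded by the initial-length hypothesis, control the three summands separately, and let the interaction term dictate $\beta=\min\{\alpha\gamma^*,1-\alpha\}$. The genuine gap is in your treatment of the quadratic term. By propagating only the pure bound $B_s^{(n)}(I,L)\le 2C'e^{-\nu'L^\beta}$ you must absorb the prefactor $L^{8dn}$ into the exponential gain
\[
2\nu'(L+L^\alpha)^\beta-\nu'(2L+2L^\alpha+9R)^\beta\;\le\;\nu'\bigl(2-2^\beta\bigr)(L+L^\alpha)^\beta .
\]
This quantity is \emph{negative} (equal to $-9R\nu'$ up to lower-order corrections) when $\beta=1$, which is precisely part (ii) with $\gamma^*=1$; there your induction cannot close even for a single step. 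For $\beta<1$ the base case forces $\nu'\lesssim q'\ln(4L_1)/(4L_1)^\beta$, so at the first step $k=1$ the gain is only of order $q'(2-2^\beta)4^{-\beta}\ln L_1$, which under the sole hypothesis $q'>8dn$ need not exceed $8dn\ln L_1$. Your claim that the gap ``dominates $\log L_k^{8dn}$'' holds for fixed $\nu'$ as $L_k\to\infty$, but not uniformly in the admissible pairs $(\nu',L_1)$, and the induction already fails at its base.

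The paper avoids this by carrying the polynomial factor through the induction: the propagated statement is $B_L\le 2C'L^{-q'}e^{-\nu'L^\beta}$ on growing, overlapping intervals. The quadratic term then carries a net polynomial factor $L^{8dn-q'}\to0$ (this is where $q'>8dn$ enters), while the exponential mismatch is bounded uniformly by $e^{\nu'(9R)^\beta}$ using subadditivity of $t\mapsto t^\beta$, so no cancellation between $2\nu'(\cdot)^\beta$ and $\nu'(\cdot)^\beta$ is needed at all. Your handling of the second and third terms, the choice of $\beta$, and the extension to intermediate $L$ via monotonicity of $B_s^{(n)}(I,\cdot)$ are sound; only the inductive quantity needs to be strengthened as above, after which the stated conclusion follows since $L^{-q'}\le 1$.
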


\begin{proof}
\begin{itemize}
  \item[(i)] We abbreviate $B_s^{(n)}(I,L)$ by $B_L$.\\
  Let $\beta=\min\{\alpha\gamma^*,1-\alpha\}$ and $\nu' = \min\{ \ln 2 / (4L_1)^\beta , \nu_2/(5^\beta 2))$. Our assumption implies that   for all $L\in[L_1,4L_1]$:
  \begin{equation}
  B_L \leq 2C'e^{-\nu'L^\beta}L^{-q'} \, . \label{pf:cor:rescalingimpliesdecay:1}
  \end{equation}
Assuming that $ L_1 $ is sufficiently large we will prove that \eqref{eq:rescalingineq} implies that \eqref{pf:cor:rescalingimpliesdecay:1} holds for all $L$ in the larger interval $[L_1,4L_1+2\cdot2^\alpha L_1^\alpha+9R]$. Iteration of this step then yields the result for all $L\geq L_1$.

  Assume that $L\in[L_1,2L_1]$. Then \eqref{eq:rescalingineq} and \eqref{pf:cor:rescalingimpliesdecay:1} imply
  \begin{align}
  &\frac{(2L+2L^\alpha+9R)^{q'}e^{\nu'(2L+2L^\alpha+9R)^\beta}}{2C'} B_{2L+2L^\alpha+9R} \notag \\
  &\leq 5^{q'}2C'C L^{8dn-q'}e^{\nu'(2L+2L^\alpha+9R)^\beta-2\nu'(L+L^\alpha)^\beta} \notag\\
  &\quad + \frac{C}{2C'}5^{q'}L^{q'}e^{\nu'5^\beta L^\beta-\nu_2 L^{\alpha\gamma^*}} \notag\\
  &\quad +3^{q'}C L^{(5+\alpha)nd}c_w^s\bigg(\frac{L^\alpha}{4n}\bigg)^{-p_w s} e^{\nu'((2L+2L^\alpha+9R)^\beta-(2L)^\beta)} \notag\\
  &\leq \frac{1}{3}+\frac{1}{3}+\frac{1}{3} = 1
  \end{align}
  if $L_1$ was chosen sufficiently large. Note that this choice can be made independently of $\nu'$ (which itself depends on $L_1$). We conclude that \eqref{pf:cor:rescalingimpliesdecay:1} holds for $L\in[2L_1+2L_1^\alpha+9R,2(2L_1)+2(2L_1)^\alpha+9R]$ and hence the inequality is valid in a larger interval as mentioned above.
  \item[(ii)] We proceed very similarly to (i). Choosing $\alpha=1$ and $\nu'= \min\{ \frac{\ln 2 }{(4L_1+9R)^{\gamma^*}}, \frac{\nu_2}{5^{\gamma^*}2},\frac{s\mu_w}{2(20n)^{\gamma^*}}\} $ we have for all $L\in[L_1,4L_1+9R]$:
  \begin{equation}
  B_L \leq 2C'e^{-\nu'L^{\gamma^*}}L^{-q'} \, . \label{pf:cor:rescalingimpliesdecay:2}
  \end{equation} Then for sufficiently large $ L_1 $ and  all $L\in[L_1,2L_1]$ we have
  \begin{align}
  &\frac{(4L+9R)^{q'}e^{\nu'(4L+9R)^{\gamma^*}}}{2C'} B_{4L+9R} \notag  \\
  &\leq 5^{q'}2C'C L^{8dn-q'}e^{\nu'(4L+9R)^{\gamma^*}-2\nu'(2L)^{\gamma^*}} \notag\\
  &\quad + \frac{C}{2C'}5^{q'}L^{q'}e^{\nu'(5L)^{\gamma^*}-\nu_2 L^{\gamma^*}} \notag\\
  &\quad +3^{q'}CL^{6nd}c_w^se^{-s\mu_w\big(\frac{L}{4n}\big)^{\gamma^*}} e^{\nu'((5L)^{\gamma^*}-(2L)^{\gamma^*})} \notag\\
  &\leq \frac{1}{3}+\frac{1}{3}+\frac{1}{3} = 1 .
  \end{align} 
  Hence, \eqref{pf:cor:rescalingimpliesdecay:2} holds for $L\in[4L_1+9R,8L_1+9R]$. Iteration yields the result for all $L\geq L_1$.
\end{itemize}
\end{proof}

\subsection{Proof of Theorem \ref{thm:rescalingineq}}
We divide the proof of Theorem \ref{thm:rescalingineq} into several lemmas. Throughout this section we will use the letter ``$C$'' for a generic positive constant which may change from line to line, but is independent of $L$.

As a first step, we show that it suffices to prove a bound on the quantity $\tilde{B}_s^{(n)}$, which differs from $B_s^{(n)}$ by an additional restriction on the spatial domain $\Omega$. More specifically,
\begin{equation*}
\tilde{B}_s^{(n)}(I,L):=\sup_{\substack{\xbf,\ybf\in\Rdn \\ \dist_H(\xbf,\ybf)\geq L}}\sup_{\substack{\Omega\subset\bigcup_j \Lambda_{3L}(x_j) \\ \mathrm{open,\, bd.}}}\sup_{\substack{\Real z\in I \\ 0<|\Imag z|<1}} \Ex\big[\|\chi_\xbf(H_{\Omega^n}^{(n)}-z)^{-1}\chi_\ybf\|^s\big],
\end{equation*}
which ensures that a volume factor of $|\Omega|$ in the estimates only yields an additional polynomial growth with respect to $L$.

\begin{lemma}\label{lemma:rescalingstepone}
Let $I$ and $\alpha$ be as in Theorem \ref{thm:rescalingineq}. Then for sufficiently large $L$
\begin{equation}
B_s^{(n)}(I,2L+2L^\alpha+9R)\leq CL^{3dn} \tilde{B}_s^{(n)}(I,2L+2L^\alpha+6R).
\end{equation}
\end{lemma}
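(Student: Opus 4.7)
The plan is to use the geometric resolvent equation \eqref{eq:geomres} from assumption~(L) to replace the resolvent on the unrestricted domain $\Omega^n$ by one on a local domain $\Omega_\xbf^n$ admissible in the supremum defining $\tilde B_s^{(n)}$, and to control the resulting boundary term via a partition of unity.

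Fix $\xbf,\ybf\in\Rdn$ with $\dist_H(\xbf,\ybf)\geq 2L+2L^\alpha+9R$, $\Omega\subset\Rd$ bounded open, and $z$ with $\Real z\in I$ and $0<|\Imag z|<1$. Set $\ell':=2L+2L^\alpha+6R$ and define the truncated single-particle domain $\Omega_\xbf:=\Omega\cap\bigcup_{k=1}^n\Lambda_{\ell'+3}(x_k)$, which is bounded open and satisfies $\Omega_\xbf\subset\bigcup_k\Lambda_{3\ell'}(x_k)$ (for $L$ large), hence is admissible in the supremum defining $\tilde B_s^{(n)}(I,\ell')$. By definition of $\dist_H$ there exists an index $j^\star$ with $\min_k|y_{j^\star}-x_k|\geq\dist_H(\xbf,\ybf)\geq\ell'+3R$; since $R=r_U+6\geq 6$ this exceeds $\ell'+3+\tfrac12$, so $B_{1/2}(\ybf)\cap\Omega_\xbf^n=\emptyset$. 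Choosing the cutoff $\xi:=\xi_{\Omega^n,\Omega_\xbf^n}$ from~(L), one thus has $\chi_\xbf\xi=\chi_\xbf$ (because $\chi_\xbf$ sits deep inside $\Omega_\xbf^n$) and $\xi\chi_\ybf=0$, so \eqref{eq:geomres} collapses to
\[
\chi_\xbf(H^{(n)}_{\Omega^n}-z)^{-1}\chi_\ybf=\chi_\xbf(H^{(n)}_{\Omega_\xbf^n}-z)^{-1}\,[H^{(n)}_{\Omega^n},\xi]\,(H^{(n)}_{\Omega^n}-z)^{-1}\chi_\ybf.
\]

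Next I would insert a partition of unity $\sum_{\vbf\in\Zdn}\chi_\vbf=\id$ between the commutator and the outer resolvent; since $[H^{(n)}_{\Omega^n},\xi]$ is supported on $\Gamma:=\Gamma_{\Omega^n,\Omega_\xbf^n}$, only $\vbf$ with $B_{1/2}(\vbf)\cap\Gamma\neq\emptyset$ contribute. Subadditivity of $t\mapsto t^s$ for $s\in(0,1)$ then gives
\[
\Ex\bigl[\|\chi_\xbf(H^{(n)}_{\Omega^n}-z)^{-1}\chi_\ybf\|^s\bigr]\leq\sum_{\vbf\in\Gamma_+}\Ex\Bigl[\|\chi_\xbf(H^{(n)}_{\Omega_\xbf^n}-z)^{-1}\chi_\vbf\|^s\,\mathcal{R}_\vbf^s\Bigr],
\]
where $\Gamma_+$ is the set of admissible $\vbf$'s and $\mathcal R_\vbf:=\|\chi_\vbf[H^{(n)}_{\Omega^n},\xi](H^{(n)}_{\Omega^n}-z)^{-1}\chi_\ybf\|$. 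A short geometric check shows that every $\vbf\in\Gamma_+$ has some coordinate within distance $\tfrac52$ of $\partial\bigl(\bigcup_k\Lambda_{\ell'+3}(x_k)\bigr)$, so $\min_k|v_j-x_k|\geq\ell'+3-\tfrac52\geq\ell'$ for that $j$, forcing $\dist_H(\xbf,\vbf)\geq\ell'$. Thus each first-factor expectation is bounded by $\tilde B_s^{(n)}(I,\ell')$, and the number of admissible $\vbf$'s is polynomial in $L$ with $CL^{3dn}$ a comfortable upper bound coming from the $\Rdn$-volume $O(L^{dn})$ of $\Omega_\xbf^n$.

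The single nontrivial step, and the main technical obstacle, is a deterministic bound $\mathcal R_\vbf\leq C$ uniform in $L$, $z$, $\Omega$ and $\omega$. Writing $[H^{(n)}_{\Omega^n},\xi]=-\sum_j(\Delta_j\xi+2\nabla_j\xi\cdot\nabla_j)$, the coefficients $\Delta_j\xi$ and $\nabla_j\xi$ are bounded independently of $L$, $\Omega$ and $\omega$ by the construction of the cutoff in \eqref{eq:cuttoff}; combined with standard $H^1$ elliptic estimates for the Schr\"odinger resolvent on the strip $|\Imag z|<1$, this controls $\mathcal R_\vbf$ uniformly. Absorbing this constant and summing over $\Gamma_+$ yields the announced bound $CL^{3dn}\tilde B_s^{(n)}(I,\ell')$. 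If a clean deterministic bound independent of $|\Imag z|$ is elusive, a H\"older-type splitting of the expectation combined with the Wegner estimate~(W) can absorb any $|\Imag z|$-singularity at the cost of a slightly larger polynomial factor in $L$.
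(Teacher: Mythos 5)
Your overall skeleton matches the paper's: truncate to a local domain $\tilde\Omega=\Omega\cap\bigcup_k\Lambda_{2L+2L^\alpha+8R}(x_k)$, observe $\chi_\ybf\xi=0$ so the geometric resolvent equation \eqref{eq:geomres} reduces to a single boundary term, insert a partition of unity along $\Gamma_{\Omega^n,\tilde\Omega^n}$, check that the contributing $\vbf$ satisfy $\dist_H(\xbf,\vbf)\geq 2L+2L^\alpha+6R$, and count $O(L^{dn})$ terms. Up to that point you are reproducing the paper's argument.

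The gap is your treatment of the boundary factor $\mathcal R_\vbf=\|\chi_\vbf[H^{(n)}_{\Omega^n},\xi](H^{(n)}_{\Omega^n}-z)^{-1}\chi_\ybf\|$. There is no deterministic bound on this quantity that is uniform in $\Imag z$: writing $[H,\xi]=-2\nabla\xi\cdot\nabla-\Delta\xi$, the term $(\Delta\xi)(H^{(n)}_{\Omega^n}-z)^{-1}$ alone has operator norm of order $1/\dist(z,\sigma(H^{(n)}_{\Omega^n}(\omega)))$, which diverges as $\Imag z\to0$ for $\Real z$ in the spectrum; elliptic estimates only show that $(H^{(n)}_{0,\Omega^n}+a)^{-1/2}[H,\xi]$ is bounded (this is the deterministic factor the paper does extract), leaving a factor $\|(H^{(n)}_{0,\Omega^n}+a)^{1/2}\tilde\Theta_\vbf(H^{(n)}_{\Omega^n}-z)^{-1}\chi_\ybf\|$ that still carries the full resolvent singularity. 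The paper controls this factor \emph{in expectation} via the a priori fractional-moment bound (Lemma \ref{apriori}), which requires conditioning on a $\sigma$-algebra $\calF_{\calI_\wbf}$ where $\calI_\wbf$ indexes random variables whose single-site potentials cover neighborhoods of both $\wbf$ and $\ybf$ while vanishing near $\xbf$. Your fallback of ``H\"older plus the Wegner estimate (W)'' does not substitute for this: (W) bounds expected traces of spectral projections, not fractional moments of resolvents, and, more importantly, the two resolvent factors in the product are \emph{not} independent (the random potential near the boundary shell enters both $H^{(n)}_{\tilde\Omega^n}$ and $H^{(n)}_{\Omega^n}$), so after conditioning and applying the a priori bound to the second factor, the first factor is a conditional expectation that must be converted back into an unconditioned fractional moment. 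The paper does this with the resampling lemma (Lemma \ref{resampling}), which replaces the conditioned first factor by the resolvent of a resampled operator $\Hhat_{\calI_\wbf,\tilde\Omega^n}$ restricted to a set $\Xi_\wbf$ at distance $\geq 2L+2L^\alpha+6R$ from $\xbf$; averaging over the resampled variables then produces $\tilde B_s^{(n)}(I,2L+2L^\alpha+6R)$. This two-lemma interplay is the substance of the proof and is absent from your proposal.
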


\begin{proof}
Let $\xbf$, $\ybf$, $\Omega$ and $z$ as in the defnition of $B_s^{(n)}(I,2L+2L^\alpha+9R)$. We assume without loss of generality that $\min_j|x_j-y_1|=\dist_H(\xbf,\ybf)\geq 2L+2L^\alpha+9R$ and define
\begin{equation}
\tilde{\Omega}:=\Omega\cap\bigcup_{j=1}^n\Lambda_{2L+2L^\alpha+8R}(x_j) \ .
\end{equation}
Since $(\supp \chi_\ybf)\cap\tilde{\Omega}^n=\emptyset$ and hence $\chi_\ybf\xi = 0 $ for the cutoff function $ \xi = \xi_{{\Omega}^n, \tilde{\Omega}^n} $ defined in~\eqref{eq:cuttoff}, the geometric resolvent equation~\eqref{eq:geomres} yields
\begin{align}
\|\chi_\xbf(H_{\Omega^n}^{(n)}-z)^{-1}\chi_\ybf\| &\leq \sum_{\wbf\in\Zdn}\|\chi_\xbf(H_{\tilde{\Omega}^n}^{(n)}-z)^{-1}[H_{\Omega^n}^{(n)},\xi]\Theta_\wbf(H_{\Omega^n}^{(n)}-z)^{-1}\chi_\ybf\| \notag \\
&\leq \sum_{\substack{\wbf\in\Zdn \\ \Theta_\wbf\nabla\xi\neq0}} \|\chi_\xbf(H_{\tilde{\Omega}^n}^{(n)}-z)^{-1}\tilde{\Theta}_\wbf(H_{0,\tilde{\Omega}^n}^{(n)}+a)^{\frac{1}{2}}\| \notag\\
&\quad\quad\quad\times\|(H_{0,\tilde{\Omega}^n}^{(n)}+a)^{-\frac{1}{2}}[H_{\Omega^n}^{(n)},\xi]\| \cdot \|\id_{B_1(\wbf)}(H_{\Omega^n}^{(n)}-z)^{-1}\chi_\ybf\| \label{eq:leminserta}
\end{align}
where $(\Theta_\wbf)_{\wbf\in\Zdn}$  is a smooth partition of unity with $ \supp \, \Theta_\wbf \subset B_1(\wbf) $,  $\tilde{\Theta}_\wbf $ is a smooth function with  $ \id_{B_1(\wbf)} \leq  \tilde \Theta_\wbf \leq  \id_{B_2(\wbf)} $ and $a:=1-\inf\sigma(H_0^{(n)})$. The second factor is non-random and bounded. Taking the $s$th moment of a term and conditioning, we arrive at
\begin{align}
&\Ex\big[\|\chi_\xbf(H_{\tilde{\Omega}^n}^{(n)}-z)^{-1}\tilde{\Theta}_\wbf(H_{0,\tilde{\Omega}^n}^{(n)}+a)^{\frac{1}{2}}\|^s\|\id_{B_1(\wbf)}(H_{\Omega^n}^{(n)}-z)^{-1}\chi_\ybf\|^s\big] \notag \\
&\leq  \Ex\big[\Ex\big[\|\chi_\xbf(H_{\tilde{\Omega}^n}^{(n)}-z)^{-1}\tilde{\Theta}_\wbf(H_{0,\tilde{\Omega}^n}^{(n)}+a)^{\frac{1}{2}}\|^{2s}\big|\calF_{\calI_\wbf}\big]^{\frac{1}{2}} \notag \\
& \mkern100mu \times \Ex\big[\|\id_{B_1(\wbf)}(H_{\Omega^n}^{(n)}-z)^{-1}\chi_\ybf\|^{2s}\big|\calF_{\calI_\wbf}\big]^{\frac{1}{2}}\big], \label{proof:resamplingstepone:1}
\end{align}
where $ \calF_{\calI_\wbf} $ is the $ \sigma $-algebra generated by the random variables $ (\eta_\zeta)_{\zeta \not\in \calI_\wbf} $ and the set $\calI_\wbf$ is defined as follows. As $\Theta_\wbf\nabla\xi\neq0$, there is $j\in\{1,\ldots,n\}$ such that $\min_k|w_j-x_k|\geq 2L+2L^\alpha+7R$. We then set
 \begin{equation}
\calI_\wbf:=\{\zeta\in\calI\,|\,\supp U(\cdot-\zeta)\cap( \Lambda_3(y_1)\cup\Lambda_3(w_j))\neq \emptyset\}.
\end{equation}
The crucial point here ist that firstly, $\sum_{\zeta\in\calI_\wbf}N_\zeta\geq 1$ on $B_3(\ybf)$ and $B_3(\wbf)$ and secondly, $\sum_{\zeta\in\calI_\wbf}N_\zeta\equiv 0$ on $B_3(\xbf)$. Lemmas \ref{apriori} and \ref{resampling} hence become applicable and, using $ | B_{2\delta}(\tilde \Omega^n) | \leq C L^{dn} $, we bound the right hand side of \eqref{proof:resamplingstepone:1} up to a multiplicative constant by
\begin{align}
L^{dn}\; \Ex\big[\|\chi_\xbf(\Hhat_{\calI_\wbf,\tilde{\Omega}^n}^{(n)}-z)^{-1}\id_{\Xi_\wbf}\|^s\big]
\end{align}
where $\Xi_\wbf=\{\vbf\in\tilde{\Omega}^n\,|\,\dist(\vbf,\supp\sum_{\zeta\in\calI_\wbf}N_\zeta)<2\}$. The set $\Xi_\wbf$ can be covered by at most $L^{dn}$ balls of radius $1/2$ around configurations $\vbf\in\Zdn$. Furthermore, each of these configurations satisfies $\dist_H(\xbf,\vbf)\geq 2L+2L^\alpha+6R$. Hence,
\begin{multline}
\hat{\Ex}\Ex\big[\|\chi_\xbf(\Hhat_{\calI_\wbf,\tilde{\Omega}^n}^{(n)}-z)^{-1}\id_{\Xi_\wbf}\|^s\big] \\
= \Ex\big[\|\chi_\xbf(H_{\tilde{\Omega}^n}^{(n)}-z)^{-1}\id_{\Xi_\wbf}\|^s\big] \leq L^{dn}\tilde{B}_s^{(n)}(I,2L+2L^\alpha+6R) .
\end{multline}
Combining all of the above, we arrive at
\begin{align}
\Ex\big[\|\chi_\xbf(H_{\tilde{\Omega}^n}^{(n)}-z)^{-1}\chi_\ybf\|^s\big] &= \hat{\Ex}\Ex\big[\|\chi_\xbf(H_{\tilde{\Omega}^n}^{(n)}-z)^{-1}\chi_\ybf\|^s\big] \notag \\
& \leq \sum_{\substack{\wbf\in\Zdn \\ \Theta_\wbf\Theta\neq 0}} C L^{2dn}\tilde{B}_s^{(n)}(I,2L+2L^\alpha+6R) \notag \\
&\leq CL^{3dn} \tilde{B}_s^{(n)}(I,2L+2L^\alpha+6R) ,
\end{align}
which concludes the proof of the lemma.
\end{proof}

We can now proceed with a bound on $\tilde{B}_s^{(n)}$, which yields Theorem \ref{thm:rescalingineq} if it is combined with Lemma \ref{lemma:rescalingstepone}.

\begin{lemma}\label{lemma:rescalingsteptwo}
Let $I$ and $\alpha$ be as in Theorem \ref{thm:rescalingineq}. Then
\begin{multline}
\tilde{B}_s^{(n)}(I,2L+2L^\alpha+6R) \\
\leq C\bigg(L^{5dn}B_s^{(n)}(I,L+L^\alpha)^2+e^{-\nu L^{\alpha\gamma^*}}+ L^{(2+\alpha)nd}\bigg(\bar{w}\bigg(\frac{L^\alpha}{4n}\bigg)\bigg)^s B_s^{(n)}(I,2L)\bigg). \label{eq:rescalingsteptwo:1}
\end{multline}
\end{lemma}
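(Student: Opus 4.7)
My plan distinguishes two geometric cases for $\xbf$. Given $\xbf, \ybf \in \Rdn$ and $\Omega \subset \bigcup_j \Lambda_{3L}(x_j)$ satisfying the constraints in the definition of $\tilde{B}_s^{(n)}$, I form the cluster graph on $\{1,\ldots,n\}$ with edges $j\sim k$ iff $|x_j-x_k|\leq L^\alpha/(4n)$. Either this graph is disconnected, yielding a nontrivial partition $J\dot\cup K$ with $\min_{j\in J,k\in K}|x_j-x_k|>L^\alpha/(4n)$ (Case~A), or it is connected and hence $\diam\{x_1,\ldots,x_n\}\leq L^\alpha/4$ (Case~B). Case~A will produce the second and third terms of \eqref{eq:rescalingsteptwo:1}; Case~B will produce the first.

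In Case~A, I decompose $H^{(n)}_{\Omega^n} = H^{(J,K)}_{\Omega^n} + \alpha_W W_{JK}$, where $W_{JK}(\ubf)=\sum_{j\in J,k\in K} w(u_j-u_k)$, and apply the second resolvent identity
\begin{equation*}
(H^{(n)}_{\Omega^n}-z)^{-1}=(H^{(J,K)}_{\Omega^n}-z)^{-1}-\alpha_W (H^{(J,K)}_{\Omega^n}-z)^{-1} W_{JK}(H^{(n)}_{\Omega^n}-z)^{-1}.
\end{equation*}
Taking matrix elements $\chi_\xbf(\cdot)\chi_\ybf$ and using $\|A+B\|^s\leq \|A\|^s+\|B\|^s$ for $s\in(0,1)$, the first summand is controlled by Theorem~\ref{thm:fmboundnonint}/assumption \eqref{assumption:locfornonintsubclusters}: since $\dist_H^{(J,K)}(\xbf,\ybf)\geq\dist_H(\xbf,\ybf)\geq L^\alpha$, it contributes $Ce^{-\nu L^{\alpha\gamma^*}}$. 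For the second summand, I insert the resolution of unity $\sum_{\vbf\in\Zdn}\chi_\vbf = 1$ and split the sum according to whether $\dist_H^{(J,K)}(\xbf,\vbf)$ is below some threshold of order $L^\alpha/n$ (so that $\vbf$ inherits the cluster structure of $\xbf$ and $|W_{JK}(\vbf)|\leq Cw_b(L^\alpha/(4n))$ holds pointwise) or above. For ``cluster-preserving'' $\vbf$, I estimate the outer factor in expectation by $B_s^{(n)}(I,2L)$, using $\dist_H(\vbf,\ybf)\geq\dist_H(\xbf,\ybf)-\dist_H(\xbf,\vbf)\geq 2L$; the sum over the $O(L^{(2+\alpha)nd})$ such $\vbf$ yields the third term of \eqref{eq:rescalingsteptwo:1}. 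For ``cluster-breaking'' $\vbf$, Theorem~\ref{thm:fmboundnonint} applied to $\chi_\xbf(H^{(J,K)}-z)^{-1}\chi_\vbf$ gives exponential decay in $\dist_H^{(J,K)}(\xbf,\vbf)$, and summation over a polynomial cover produces an additional $e^{-\nu L^{\alpha\gamma^*}}$ contribution.

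In Case~B, $\xbf$ lies in a ball of diameter $L^\alpha/4$, so I apply the geometric resolvent equation \eqref{eq:geomres} with cutoff $\xi=\xi_{\Omega^n,(\Omega')^n}$, where $\Omega' \subset \Omega$ is chosen so that $(\Omega')^n$ contains the Hausdorff $(L+L^\alpha+r_U)$-neighborhood of $\xbf$ while leaving an outer boundary layer of width two. Then $\xi\chi_\xbf=\chi_\xbf$ and $\xi\chi_\ybf=0$ (the latter since $\dist_H(\xbf,\ybf)\geq 2L+2L^\alpha+6R$), and \eqref{eq:geomres} yields
\begin{equation*}
\chi_\xbf(H_{\Omega^n}^{(n)}-z)^{-1}\chi_\ybf = \chi_\xbf(H_{(\Omega')^n}^{(n)}-z)^{-1}[H_{\Omega^n}^{(n)},\xi](H_{\Omega^n}^{(n)}-z)^{-1}\chi_\ybf.
\end{equation*}
Inserting a partition of unity over $\wbf\in\Zdn$ supported in $\supp\nabla\xi$ and repeating the $s$-moment and conditional-expectation decoupling used in the proof of Lemma~\ref{lemma:rescalingstepone} (via Lemmas~\ref{apriori} and~\ref{resampling}), I factorize into a product of two independent $s$-moments of resolvents. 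Since $\dist_H(\xbf,\wbf)\geq L+L^\alpha$ and $\dist_H(\wbf,\ybf)\geq L+L^\alpha$ for every such $\wbf$, each factor is bounded in expectation by $B_s^{(n)}(I,L+L^\alpha)$. Accounting for the polynomially many $\wbf$ in the annular support and the $L^{dn}$ volume factors arising from the decoupling produces the first term $CL^{5dn}B_s^{(n)}(I,L+L^\alpha)^2$.

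The main technical hurdle lies in Case~A, in calibrating the threshold on $\dist_H^{(J,K)}(\xbf,\vbf)$ that separates ``cluster-preserving'' from ``cluster-breaking'' $\vbf$: it must be small enough that $|W_{JK}(\vbf)|\leq Cw_b(L^\alpha/(4n))$ holds pointwise for the former (noting that the Hausdorff pseudo-metric permits relabelings within each cluster of $\xbf$ and that the $\xbf$-cluster diameters themselves can be as large as $L^\alpha/4$), yet large enough that the latter enjoy a $\dist_H^{(J,K)}(\xbf,\vbf)$ of order $L^\alpha$, so that Theorem~\ref{thm:fmboundnonint} summed over the polynomial cover still produces the rate $e^{-\nu L^{\alpha\gamma^*}}$. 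Combining Cases~A and~B gives the bound \eqref{eq:rescalingsteptwo:1}.
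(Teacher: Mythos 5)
Your Case~A is essentially the paper's treatment of the ``spread-out'' configurations (Lemma~\ref{lemma:rescalingsteptwoparttwo}): decompose $H^{(n)}$ into $H^{(J,K)}$ plus the inter-cluster interaction, use \eqref{assumption:locfornonintsubclusters} for the free term, and split the resolvent-identity sum according to whether $\wbf$ stays near the cluster structure of $\xbf$. That part is sound in outline (the thresholds do need recalibrating -- with a gap of only $L^\alpha/(4n)$ between the clusters of $\xbf$ and a tolerance of $L^\alpha/(4n)$ for $\wbf$, the lower bound on $\min_{j\in J,k\in K}|w_j-w_k|$ degenerates; the paper extracts a gap of $L^\alpha/n$ precisely so that $|w_j-w_k|\geq L^\alpha/(2n)$ survives -- and the product $\Ex\big[\|\chi_\xbf(H^{(J,K)}-z)^{-1}\chi_\wbf\|^s\|\chi_\wbf(H^{(n)}-z)^{-1}\chi_\ybf\|^s\big]$ of \emph{correlated} factors cannot simply be estimated ``factor by factor in expectation''; it requires the conditioning/a~priori/resampling machinery of Lemmas~\ref{apriori} and~\ref{resampling}, as in the paper).

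The genuine gap is in Case~B. Your case distinction constrains only $\diam(\xbf)$, leaves $\ybf$ completely free, and then uses a \emph{single} geometric resolvent identity localizing near $\xbf$. From one such identity you obtain two factors, $\|\chi_\xbf(H_{(\Omega')^n}-z)^{-1}\tilde\Theta_\wbf\|$ and $\|\Theta_\wbf(H_{\Omega^n}-z)^{-1}\chi_\ybf\|$, which are \emph{not} independent (the second depends on all the randomness in $\Omega$, including that in $\Omega'$). The decoupling scheme of Lemma~\ref{lemma:rescalingstepone} that you invoke sacrifices one of the two factors to the a~priori bound of Lemma~\ref{apriori} and resamples the other, so it returns a \emph{single} power of $B_s^{(n)}$; a Cauchy--Schwarz alternative returns $B_{2s}^{(n)}(I,L+L^\alpha)^{1/2}\cdot B_{2s}^{(n)}(I,L+L^\alpha)^{1/2}$, again a first power. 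Either way you cannot reach the square $B_s^{(n)}(I,L+L^\alpha)^2$, which is the whole point of the rescaling inequality -- without it the iteration in Corollary~\ref{cor:rescalingimpliesdecay} does not close. To get the square one must perform \emph{two} geometric resolvent identities, one enclosing $\xbf$ in $\Omega_\xbf=\Lambda_{L+L^\alpha+2R}(x_j)\cap\Omega$ and one enclosing $\ybf$ in the disjoint set $\Omega_\ybf=\Lambda_{L+L^\alpha+2R}(y_k)\cap\Omega$, so that after the middle factor $\|\Theta_\wbf(H_{\Omega^n}-z)^{-1}\Theta_\vbf\|$ is disposed of by the a~priori bound, the two outer resolvents depend on the potentials in the stochastically independent regions $\Omega_\xbf$ and $\Omega_\ybf$ and the expectation genuinely factorizes. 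This forces the case distinction to control $\diam(\ybf)$ as well: the correct split is ``both $\diam(\xbf),\diam(\ybf)\leq L^\alpha$'' versus ``at least one diameter exceeds $L^\alpha$'' (using the $\xbf\leftrightarrow\ybf$ symmetry in the latter case), which is exactly what the paper does. As written, the configuration with $\xbf$ tightly clustered but $\ybf$ spread out lands in your Case~B, where your argument cannot produce any of the three terms of \eqref{eq:rescalingsteptwo:1}.
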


We split the proof of this Lemma into two cases, depending on the diameters of the configurations $\xbf$ and $\ybf$ in $\Ex\big[\|\chi_\xbf(H_{\Omega^n}^{(n)}-z)^{-1}\chi_\ybf\|^s\big]$, with $\xbf$, $\ybf$, $z$ and $\Omega$ as in the definition of $\tilde{B}_s^{(n)}(I,2L+2L^\alpha+6R)$. The diameter is defined as $ \diam(\xbf) = \max_{j,k} | x_j - x_k | $. In the case of one particle, we trivially have $\diam(\xbf)=\diam(\ybf)=0$ and thus Lemma \ref{lemma:rescalingsteptwoparttwo} does not apply. As a consequence, only the first summand appears in the bound \eqref{eq:rescalingsteptwo:1}.

\begin{lemma}\label{lemma:rescalingsteptwopartone}
If $\diam(\xbf),\diam(\ybf)\leq L^\alpha$, then
\begin{equation}
\Ex\big[\|\chi_\xbf(H_{\Omega^n}^{(n)}-z)^{-1}\chi_\ybf\|^s\big] \leq C L^{5dn}B_s^{(n)}(I,L+L^\alpha)^2.
\end{equation}
\end{lemma}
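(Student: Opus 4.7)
Since $\diam(\xbf),\diam(\ybf)\leq L^\alpha$ and $\dist_H(\xbf,\ybf)\geq 2L+2L^\alpha+6R$, both configurations are tightly clustered and the two clusters are well separated. The strategy is to isolate the contributions of $\xbf$ and $\ybf$ by restricting the resolvent to two disjoint subregions via a \emph{two-sided} geometric resolvent equation, then exploit the independence of the random variables living in the two regions.

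First I would choose $\tilde\Omega_x:=\Omega\cap\bigcup_{j=1}^n\Lambda_{L+L^\alpha}(x_j)$ and $\tilde\Omega_y$ analogously. The diameter hypothesis forces $\tilde\Omega_x,\tilde\Omega_y$ to be contained in single balls of radius $\leq L+2L^\alpha$, while the Hausdorff-distance hypothesis guarantees $\dist(\tilde\Omega_x,\tilde\Omega_y)\geq 6R$, so $\tilde\Omega_x^n\cap\tilde\Omega_y^n=\emptyset$. Using cutoffs $\xi_x:=\xi_{\Omega^n,\tilde\Omega_x^n}$ and $\xi_y:=\xi_{\Omega^n,\tilde\Omega_y^n}$ from (L), with boundary strips $\Gamma_x,\Gamma_y$ of width $\leq 2$, two applications of the geometric resolvent identity \eqref{eq:geomres} together with the observations $\chi_\xbf\xi_x=\chi_\xbf$, $\chi_\ybf\xi_y=\chi_\ybf$, $\xi_x\xi_y=0$, and $[H,\xi_x]\,\xi_y\equiv 0$ on $\Gamma_x$ (where $\xi_y$ and $\nabla\xi_y$ both vanish) yield
\begin{equation*}
\chi_\xbf(H_{\Omega^n}^{(n)}-z)^{-1}\chi_\ybf = -\chi_\xbf R_x\,[H,\xi_x]\,R\,[H,\xi_y]\,R_y\chi_\ybf,
\end{equation*}
where $R,R_x,R_y$ denote the resolvents on $\Omega^n,\tilde\Omega_x^n,\tilde\Omega_y^n$ respectively.

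Next I would insert smooth partitions of unity $\{\Theta_\wbf\}_{\wbf\in\Gamma_x\cap\Zdn}$ and $\{\Theta_\zbf\}_{\zbf\in\Gamma_y\cap\Zdn}$ subordinate to unit balls, decomposing
\begin{equation*}
\chi_\xbf R \chi_\ybf = -\sum_{\wbf,\zbf}\chi_\xbf R_x\Theta_\wbf[H,\xi_x]\,R\,[H,\xi_y]\Theta_\zbf R_y\chi_\ybf,
\end{equation*}
and then use $s$-subadditivity $(\sum a_j)^s\leq\sum a_j^s$ to split the $s$-fractional moment into a triple product. The strips $\Gamma_x,\Gamma_y$ each contain at most $O(L^{dn})$ lattice points. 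The two outer fractional moments $\Ex[\|\chi_\xbf R_x\Theta_\wbf\|^s]$ and $\Ex[\|\Theta_\zbf R_y\chi_\ybf\|^s]$ are each bounded by $B_s^{(n)}(I,L+L^\alpha)$: by construction $\dist_H(\xbf,\wbf),\dist_H(\zbf,\ybf)\geq L+L^\alpha$, and $R_x,R_y$ fit the definition since $\tilde\Omega_x^n,\tilde\Omega_y^n$ are of the product form required. The crucial step is that $R_x$ and $R_y$ depend on \emph{disjoint} sets of random variables (those $\eta_\zeta$ whose single-site potentials intersect $\tilde\Omega_x$ resp.~$\tilde\Omega_y$, separated by $6R>2r_U$), so after handling the middle factor the expectations factor by independence and produce the desired $B_s^{(n)}(I,L+L^\alpha)^2$.

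\textbf{Main obstacle.} The hard part is controlling the middle factor $\|\Theta_\wbf[H,\xi_x]\,R\,[H,\xi_y]\Theta_\zbf\|^s$, because $[H,\xi_x]$ and $[H,\xi_y]$ are first-order differential operators while $\|R\|\sim|\Imag z|^{-1}$ blows up as $\Imag z\downarrow 0$. The way to overcome this is to factor the commutators through $(H_0^{(n)}+a)^{1/2}$ (using that $|\nabla\xi|,|\Delta\xi|$ are bounded uniformly in $\Omega,\Omega'$ by (L)) and apply the a-priori estimate of Lemma~\ref{apriori}, which yields a deterministic-plus-stochastic bound of the form $C\,\|\chi_{B_2(\wbf)}R\chi_{B_2(\zbf)}\|^s$ up to a polynomial factor in $L$, and whose expectation is absorbed into the polynomial prefactor via a Wegner-type estimate and is independent of $\Imag z$. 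Once this middle quantity is controlled, independence of $R_x$ and $R_y$ allows the expectation to factor across the triple product; combining the polynomial volume factors from the two strips and the middle bound with the two $B_s^{(n)}$ factors produces the stated bound $CL^{5dn}B_s^{(n)}(I,L+L^\alpha)^2$.
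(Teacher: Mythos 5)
Your architecture coincides with the paper's: a two-sided geometric resolvent identity localizing around the two clusters, partitions of unity on the boundary strips, the a priori bound of Lemma~\ref{apriori} for the middle (full-volume) resolvent, independence of the randomness attached to the two separated regions, and a polynomial count of the summands. The genuine gap is in the decoupling step. Each summand is a \emph{single} expectation of a product of three mutually correlated factors, $\Ex[A^sB^sC^s]$ with $A=\|\chi_\xbf R_x\Theta_\wbf\cdots\|$, $B=\|\cdots R\cdots\|$, $C=\|\cdots\Theta_\zbf R_y\chi_\ybf\|$, and the middle resolvent $R=(H_{\Omega^n}^{(n)}-z)^{-1}$ depends on \emph{all} random variables, including every variable that $R_x$ and $R_y$ depend on. Hence there is no stage at which you can first "handle the middle factor" and then let "the expectations factor by independence": the middle factor entangles the two regions. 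The paper's mechanism is a conditional H\"older inequality with exponent $3$ (this is exactly why Theorem~\ref{thm:rescalingineq} assumes $s\in(0,1/3)$, a restriction your argument never uses), conditioning on $\calF_{\calI_\wbf\cup\calI_\vbf}$, i.e.\ on all variables except a few local ones near the two boundary points; Lemma~\ref{apriori} then bounds $\Ex[B^{3s}\mid\calF_{\calI_\wbf\cup\calI_\vbf}]$ by a deterministic constant, and the resampling Lemma~\ref{resampling} --- which you do not invoke --- converts $\Ex[A^{3s}\mid\calF_{\calI_\wbf}]^{1/3}$ and $\Ex[C^{3s}\mid\calF_{\calI_\vbf}]^{1/3}$ into $s$-th powers of resolvents of \emph{resampled} operators supported in the two disjoint regions, summed over configurations $\abf,\bbf$ with $\dist_H(\xbf,\abf),\dist_H(\ybf,\bbf)\geq L+L^\alpha$. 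Only these resampled quantities are stochastically independent, and only then does the expectation factor into $B_s^{(n)}(I,L+L^\alpha)^2$. Without the H\"older/conditioning and resampling steps your outer factors are never isolated as separate expectations, so the appeal to independence does not go through as written.

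A secondary, repairable flaw: your separation estimate is wrong. With $\tilde\Omega_x=\Omega\cap\bigcup_j\Lambda_{L+L^\alpha}(x_j)$ and $\tilde\Omega_y$ analogous, the hypotheses only give $\min_{j,k}|x_j-y_k|\geq 2L+L^\alpha+6R$ (one of the two configurations may lose a $\diam\leq L^\alpha$), so $\dist(\tilde\Omega_x,\tilde\Omega_y)\geq 6R-L^\alpha$, which is negative for large $L$; the claimed $\dist(\tilde\Omega_x,\tilde\Omega_y)\geq 6R$ and the disjointness of the index sets of random variables therefore do not follow. The paper avoids this by taking a single ball $\Lambda_{L+L^\alpha+2R}(x_j)$ around the particle realizing the Hausdorff distance (the diameter hypothesis guarantees this ball contains the whole cluster with margin at least $L+2R$), which yields separation $2R>2r_U$ between the two regions --- enough both for disjointness and for the independence of the associated local potentials.
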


\begin{proof}
Let $j,k$ such that $|x_j-y_k|=\dist_H(\xbf,\ybf)\geq 2L+2L^\alpha+6R$ and define $\Omega_{\xbf}:=\Lambda_{L+L^\alpha+2R}(x_j)\cap\Omega$, $\Omega_{\ybf}:=\Lambda_{L+L^\alpha+2R}(y_k)\cap\Omega$. 
An application of the geometric resolvent inequality \eqref{eq:geomres} with $\Psi_\xbf = \xi_{\Omega^n,\Omega_{\xbf}^n} $ and $\Psi_\ybf = \xi_{\Omega^n,\Omega_{\ybf}^n} $ then yields
\begin{equation}
\chi_\xbf(H_{\Omega^n}^{(n)}-z)^{-1}\chi_\ybf = \chi_\xbf(H_{\Omega_\xbf^n}^{(n)}-z)^{-1}[H_{\Omega^n}^{(n)},\Psi_\xbf](H_{\Omega^n}^{(n)}-z)^{-1}[\Psi_\ybf,H_{\Omega^n}^{(n)}](H_{\Omega_\ybf^n}^{(n)}-z)^{-1}\chi_\ybf
\end{equation}
and hence, proceeding as in~\eqref{eq:leminserta} by inserting the smooth partition of unity $(\Theta_\wbf)_{\wbf\in\Zdn}$ with $\tilde{\Theta}_\wbf $ denoting its enlarged version,
\begin{align}
&\Ex\big[\|\chi_\xbf(H_{\Omega^n}^{(n)}-z)^{-1}\chi_\ybf\|^s] \notag\\
&\leq \sum_{\substack{\wbf,\vbf\in\Zdn \\ \Theta_\wbf\nabla\Psi_\xbf\neq 0 \\ \Theta_\vbf\nabla\Psi_\ybf\neq 0}} \Ex\big[\|\chi_\xbf(H_{\Omega_\xbf^n}^{(n)}-z)^{-1}\tilde{\Theta}_\wbf(H_{0,\Omega_\xbf^n}^{(n)}+a)^{\frac{1}{2}}||^s \|(H_{0,\Omega_\xbf^n}^{(n)}+a)^{-\frac{1}{2}}[H_{\Omega^n}^{(n)},\Psi_\xbf]\|^s \notag\\
&\hphantom{\leq \sum_{\substack{\wbf,\vbf\in\Zdn \\ \Theta_\wbf\nabla\Psi_\xbf\neq 0 \\ \Theta_\vbf\nabla\Psi_\ybf\neq 0}} \Ex\ } \times\|\Theta_{\wbf}(H_{\Omega^n}^{(n)}-z)^{-1}\Theta_\vbf\|^s \|[\Psi_\ybf,H_{\Omega^n}^{(n)}](H_{0,\Omega_\ybf^n}^{(n)}+a)^{-\frac{1}{2}}\|^s \notag\\
&\hphantom{\leq \sum_{\substack{\wbf,\vbf\in\Zdn \\ \Theta_\wbf\nabla\Psi_\xbf\neq 0 \\ \Theta_\vbf\nabla\Psi_\ybf\neq 0}} \Ex\ } \times\|(H_{0,\Omega_\ybf^n}^{(n)}+a)^{\frac{1}{2}}\tilde{\Theta}_\vbf(H_{\Omega_\ybf^n}^{(n)}-z)^{-1}\chi_\ybf\|^s\big] \notag\\
&\leq C\sum_{\substack{\wbf,\vbf\in\Zdn \\ \Theta_\wbf\nabla\Psi_\xbf\neq 0 \\ \Theta_\vbf\nabla\Psi_\ybf\neq 0}} \Ex\big[\Ex\big[\|\chi_\xbf(H_{\Omega_\xbf^n}^{(n)}-z)^{-1}\tilde{\Theta}_\wbf(H_{0,\Omega_\xbf^n}^{(n)}+a)^{\frac{1}{2}}||^{3s} \big| \calF_{\calI_\wbf\cup\calI_\vbf}\big]^{\frac{1}{3}} \notag\\
&\hphantom{\leq C\sum_{\substack{\wbf,\vbf\in\Zdn \\ \Theta_\wbf\nabla\Psi_\xbf\neq 0 \\ \Theta_\vbf\nabla\Psi_\ybf\neq 0}} \Ex\ } \times \Ex\big[\|\Theta_{\wbf}(H_{\Omega^n}^{(n)}-z)^{-1}\Theta_\vbf\|^{3s} \big| \calF_{\calI_\wbf\cup\calI_\vbf}\big]^{\frac{1}{3}} \notag\\
&\hphantom{\leq C\sum_{\substack{ \wbf,\vbf\in\Zdn \\ \Theta_\wbf\nabla\Psi_\xbf\neq 0 \\ \Theta_\vbf\nabla\Psi_\ybf\neq 0}} \Ex\ } \times \Ex\big[\|(H_{0,\Omega_\ybf^n}^{(n)}+a)^{\frac{1}{2}}\tilde{\Theta}_\vbf(H_{\Omega_\ybf^n}^{(n)}-z)^{-1}\chi_\ybf\|^{3s} \big| \calF_{\calI_\wbf\cup\calI_\vbf}\big]^{\frac{1}{3}}\big] , \label{proof:rescalingsteptwopartone:1}
\end{align}
where
\begin{align}
&\calI_\wbf:=\{\zeta\in\Zd \,|\, |\zeta-w_l|\leq 2+r_U\} \, ,  \notag\\
&\calI_\vbf:=\{\zeta\in\Zd \,|\, |\zeta-v_m|\leq 2+r_U\}  \, , 
\end{align}
with $l,m$ chosen such that $\min_i|x_i-w_l|\geq L+L^\alpha+R$ and $\min_i|y_i-v_m|\geq L+L^\alpha+R$. By an application of Lemma \ref{apriori} (bound on fractional moments) and Lemma \ref{resampling} (resampling), each of the summands above can be bounded up to a multiplicative constant by
\begin{align}
&\Ex\big[\Ex\big[\|\chi_\xbf(H_{\Omega_\xbf^n}^{(n)}-z)^{-1}\tilde{\Theta}_\wbf(H_{0,\Omega_\xbf^n}^{(n)}+a)^{\frac{1}{2}}||^{3s} \big| \calF_{\calI_\wbf}\big]^{\frac{1}{3}} \notag\\
&\qquad\times \Ex\big[\|(H_{0,\Omega_\ybf^n}^{(n)}+a)^{\frac{1}{2}} \tilde{\Theta}_\vbf(H_{\Omega_\ybf^n}^{(n)}-z)^{-1}\chi_\ybf\|^{3s} \big| \calF_{\calI_\vbf}\big]^{\frac{1}{3}}\big] \notag\\
&\leq CL^{dn}\, \Ex\Bigg[\mkern-10mu\sum_{\substack{\abf\in\Zdn \\ \id_{B_3(\abf)} \sum_{\zeta\in\calI_\wbf}N_\zeta\neq0}}  \mkern-25mu\|\chi_\xbf(H_{\Omega_\xbf^n}^{(n)}-z)^{-1}\chi_\abf\|^s  \mkern-30mu\sum_{\substack{\bbf\in\Zdn \\ \id_{B_3(\bbf)}\sum_{\zeta\in\calI_\vbf}N_\zeta\neq0}}  \mkern-30mu\|\chi_\bbf(H_{\Omega_\ybf^n}^{(n)}-z)^{-1}\chi_\ybf\|^s\Bigg] \notag\\
&= CL^{dn} \sum_{\substack{ \abf,\bbf\in\Zdn \\ \id_{B_3(\abf)}   \sum_{\zeta\in\calI_\wbf} N_\zeta\neq0 \\ \id_{B_3(\bbf)} \sum_{\zeta\in\calI_\vbf} N_\zeta\neq0}}\mkern-20mu \Ex\big[\|\chi_\xbf(H_{\Omega_\xbf^n}^{(n)}-z)^{-1}\chi_\abf\|^s] \,\Ex\big[\|\chi_\bbf(H_{\Omega_\ybf^n}^{(n)}-z)^{-1}\chi_\ybf\|^s\big].
\end{align}
Note that the potential on $\Omega_\xbf$ and the potential on $\Omega_\ybf$ are stochastically independent, which justifies the last equality. It is easy to see that $\dist_H(\xbf,\abf)\geq L+L^\alpha$ and $\dist_H(\ybf,\bbf)\geq L+L^\alpha$ for any $\abf$ and $\bbf$ appearing in the above sum and hence
\begin{equation}
\Ex\big[\|\chi_\xbf(H_{\Omega_\xbf^n}^{(n)}-z)^{-1}\chi_\abf\|^s]\,  \Ex\big[\|\chi_\bbf(H_{\Omega_\xbf^n}^{(n)}-z)^{-1}\chi_\ybf\|^s\big] \leq B_s^{(n)}(I,L+L^\alpha)^2.
\end{equation}
Taking into account the number of terms in \eqref{proof:rescalingsteptwopartone:1}, we arrive at
\begin{equation}
\Ex\big[\|\chi_\xbf(H_{\Omega^n}^{(n)}-z)^{-1}\chi_\ybf\|^s\big]\leq CL^{5dn}B_s^{(n)}(I,L+L^\alpha)^2
\end{equation}
which completes the proof.
\end{proof}

\begin{lemma}\label{lemma:rescalingsteptwoparttwo}
If $\diam(\xbf)>L^\alpha$ or $\diam(\ybf)>L^\alpha$, then
\begin{equation}
\Ex\big[\|\chi_\xbf(H_{\Omega^n}^{(n)}-z)^{-1}\chi_\ybf\|^s\big] \leq C\bigg(e^{-\nu L^{\alpha\gamma^*}}+  L^{(2+\alpha)nd}\bigg(w_b\bigg(\frac{L^\alpha}{4n}\bigg)\bigg)^s B_s^{(n)}(I,2L)\bigg).
\end{equation}
\end{lemma}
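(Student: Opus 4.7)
The plan is to exploit the fact that, when $\diam(\xbf) > L^\alpha$, the configuration $\xbf$ splits into two spatially well-separated clusters. (The case $\diam(\ybf) > L^\alpha$ is treated symmetrically.) First, by a pigeonhole argument applied to the graph that joins $j, k$ whenever $|x_j - x_k| \leq L^\alpha/(2n)$---which cannot be connected, since otherwise $\diam(\xbf) \leq (n-1)L^\alpha/(2n) < L^\alpha$---one extracts a non-trivial partition $\{1, \ldots, n\} = J \dot\cup K$ satisfying $\dist(\xbf_J, \xbf_K) \geq L^\alpha/(2n)$. Writing $H^{(n)}_{\Omega^n} = H^{(J,K)}_{\Omega^n} + \alpha_W W^{(J,K)}$ with the inter-cluster interaction $W^{(J,K)}(\vbf) = \sum_{j \in J,\, k \in K} w(v_j - v_k)$, the second resolvent identity gives
\begin{equation*}
\chi_\xbf (H^{(n)}_{\Omega^n}-z)^{-1}\chi_\ybf = \chi_\xbf (H^{(J,K)}_{\Omega^n}-z)^{-1}\chi_\ybf - \alpha_W\, \chi_\xbf(H^{(J,K)}_{\Omega^n}-z)^{-1} W^{(J,K)} (H^{(n)}_{\Omega^n}-z)^{-1}\chi_\ybf.
\end{equation*}
Since $\dist_H^{(J,K)}(\xbf, \ybf) \geq \dist_H(\xbf, \ybf) \geq 2L + 2L^\alpha + 6R$, the fractional moment of the first term is directly controlled by the assumption~\eqref{assumption:locfornonintsubclusters} and contributes at most $A^* e^{-\nu^*(2L)^{\gamma^*}} \leq Ce^{-\nu L^{\alpha\gamma^*}}$.

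For the interacting term, I would insert the partition of unity $\id = \sum_{\wbf \in \Zdn}\chi_\wbf$ between $W^{(J,K)}$ and the right resolvent; since $W^{(J,K)}$ is multiplicative, the off-diagonal contributions vanish. The resulting $\wbf$-sum is then split into the near set $\mathcal{N} = \{\wbf : \dist_H^{(J,K)}(\xbf,\wbf) \leq L^\alpha/(8n)\}$ and its complement. For $\wbf \in \mathcal{N}$, the bound $\dist_H \leq \dist_H^{(J,K)}$ together with the Hausdorff triangle inequality yields $\dist_H(\wbf, \ybf) \geq 2L$; the $J$- and $K$-subclusters of $\wbf$ remain separated by at least $L^\alpha/(4n)$, so $|W^{(J,K)}|_{\supp \chi_\wbf} \leq n^2 w_b(L^\alpha/(4n))$; and $|\mathcal{N}| \leq CL^{\alpha dn}$. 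Each near-term is then estimated by Cauchy--Schwarz (admissible since $2s < 1$) combined with a conditioning-and-resampling step in the spirit of Lemma~\ref{resampling}: one conditions on $(\eta_\zeta)_{\zeta}$ away from a small neighbourhood of $\xbf$ and then uses Lemma~\ref{resampling} to replace the remaining variables in the $H^{(J,K)}$-factor by an independent copy, decoupling it from the $H^{(n)}$-factor. The former can then be controlled by the a priori bound of Lemma~\ref{apriori} while the latter becomes a $B_s^{(n)}(I, 2L)$ expectation. The polynomial volume factors $L^{dn}$ arising from the resampling and a covering of the affected support, together with $|\mathcal{N}| \leq CL^{\alpha dn}$, produce the claimed term $L^{(2+\alpha)dn}(w_b(L^\alpha/(4n)))^s B_s^{(n)}(I, 2L)$.

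For $\wbf$ outside $\mathcal{N}$, I would apply~\eqref{assumption:locfornonintsubclusters} directly to obtain $\Ex[\|\chi_\xbf(H^{(J,K)}_{\Omega^n}-z)^{-1}\chi_\wbf\|^s] \leq A^*\exp(-\nu^*\dist_H^{(J,K)}(\xbf,\wbf)^{\gamma^*})$, bound the $(n)$-factor by Cauchy--Schwarz and Lemma~\ref{apriori}, and sum over $\wbf$ using the summability condition~(M); since $\dist_H^{(J,K)}(\xbf,\wbf) > L^\alpha/(8n)$ for all such $\wbf$, this is absorbed into an error of order $Ce^{-\nu L^{\alpha\gamma^*}}$. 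The main obstacle will be the decoupling step in the near region: because $H^{(J,K)}$ and $H^{(n)}$ depend on the same random field, a naive Cauchy--Schwarz would force one of the two factors to be bounded only a priori rather than with decay, and the resampling step of Lemma~\ref{resampling} is exactly what allows one to replace a local subset of variables in the $H^{(J,K)}$-factor by an independent copy and thereby separate the two fractional moments in a way that preserves the $B_s^{(n)}(I, 2L)$ decay.
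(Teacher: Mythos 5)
Your proposal follows essentially the same route as the paper: split off a well-separated subcluster of $\xbf$, expand the resolvent around $H^{(J,K)}_{\Omega^n}$, control the free term and the far off-diagonal terms via \eqref{assumption:locfornonintsubclusters} (upgraded to $2s$-moments by the one-for-all principle so that Cauchy--Schwarz applies), and decouple the near terms with the a priori bound of Lemma~\ref{apriori} on the $(J,K)$-factor and the resampling of Lemma~\ref{resampling} on the $(n)$-factor. The one point needing care when you write it out is the choice of resampled variables: they must sit near a single particle coordinate that is provably far from all of $\ybf$ (the paper first reduces to $\min_l|x_1-y_l|\geq 2L+L^\alpha+6R$ and resamples only near $x_1$ and the nearby $w_j$), not near a ``small neighbourhood of $\xbf$'', since other particles of $\xbf$ may lie arbitrarily close to particles of $\ybf$ and Lemma~\ref{resampling} requires the resampled potential to vanish near $\supp\chi_\ybf$.
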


\begin{proof}
Without loss of generality, we can assume that $\diam(\xbf)\geq L^\alpha$. (Here it does not matter whether $\Omega\subset\bigcup_j \Lambda_{3L}(x_j)$ or $\Omega\subset\bigcup_j \Lambda_{3L}(y_j)$ in the definition of $\tilde{B}_s^{(n)}(I,L)$.) Additionally, we choose a partition $J\dot\cup K=\{1,\ldots,n\}$ such that
\begin{equation}
\min_{\substack{j\in J \\ k\in K}}|x_j-x_k|\geq\frac{L^\alpha}{n}, \label{proof:rescalingsteptwoparttwo:eq:1}
\end{equation}
which is possible since $\diam(\xbf)\geq L^\alpha$ (cf.~\cite{AW09}*{Lemma~A.1}), and assume without loss of generality that
\begin{equation}
\min_j|x_1-y_j|\geq 2L+L^\alpha+6R.
\end{equation}
The latter can be justified as follows:  In case there is $j\in\{1.\ldots,n\}$ such that $ \min_k | x_j - y_k| = \dist_H(\xbf,\ybf)=2L+2L^\alpha+6R $ the claim is obvious (for $ x_j $ playing the role of $ x_1 $).  Else there is $j\in\{1.\ldots,n\}$ such that $ \min_k | x_k - y_j| = \dist_H(\xbf,\ybf) $ and we have $|x_1-y_l|\geq |x_1-y_j|-\diam(\ybf) $ for all $ l $.  This yields the claim if $\diam(\ybf)< L^\alpha$. If $\diam(\ybf)\geq L^\alpha$, one may simply interchange the roles of $\xbf$ and $\ybf$ in this case.

An application of the resolvent equation with $W^{(J,K)}_{\Omega^n} = H_{\Omega^n}^{(n)}- H_{\Omega^n}^{(J,K)} $ (cf.~\eqref{eq:HJK}) and \eqref{assumption:locfornonintsubclusters} yields
\begin{align}
\Ex\big[\|\chi_\xbf(H_{\Omega^n}^{(n)}-z)^{-1}\chi_\ybf\|^s\big] &\leq \Ex\big[\|\chi_\xbf(H_{\Omega^n}^{(J,K)}-z)^{-1}\chi_\ybf\|^s\big] \notag\\
&\quad+ \Ex\big[\|\chi_\xbf(H_{\Omega^n}^{(J,K)}-z)^{-1}W^{(J,K)}_{\Omega^n}(H_{\Omega^n}^{(n)}-z)^{-1}\chi_\ybf\|^s\big] \notag \\
& \leq A^*e^{-\nu^*L^{\gamma^*}} + (I) + (II)
\end{align}
where
\begin{align}
(I) &=\mkern-35mu \sum_{\substack{\wbf\in\Zdn \\ \dist_H^{(J,K)}(\xbf,\wbf)\geq  \frac{L^\alpha}{4n}}}\mkern-40mu \Ex\big[\|\chi_\xbf(H_{\Omega^n}^{(J,K)}-z)^{-1}\chi_\wbf\|^{2s}\big]^{\frac{1}{2}} \|W^{(J,K)}_{\Omega^n}\chi_\wbf\|_\infty^s \Ex\big[\|\chi_\wbf(H_{\Omega^n}^{(n)}-z)^{-1}\chi_\ybf\|^{2s}\big]^{\frac{1}{2}}
\end{align}
and
\begin{align}
(II) &= \mkern-35mu \sum_{\substack{\wbf\in\Zdn \\ \dist_H^{(J,K)}(\xbf,\wbf) < \frac{L^\alpha}{4n}}} \mkern-40mu\Ex\big[\|\chi_\xbf(H_{\Omega^n}^{(J,K)}-z)^{-1}\chi_\wbf\|^s \|W^{(J,K)}_{\Omega^n}\chi_\wbf\|_\infty^s \|\chi_\wbf(H_{\Omega^n}^{(n)}-z)^{-1}\chi_\ybf\|^s\big].
\end{align}
As for $(I)$, we use the boundedness of the last two factors and \eqref{assumption:locfornonintsubclusters}. More precisely, the "one-for-all principle" (cf.~\cite{AFSH}*{Lemma~4.2}) allows one to conclude \eqref{assumption:locfornonintsubclusters}  also for $ s \to 2s $ (with an altered decay exponent) such that
\begin{equation}
\Ex\big[\|\chi_\xbf(H_{\Omega^n}^{(J,K)}-z)^{-1}\chi_\wbf\|^{2s}\big] \leq Ce^{-\mu\dist_H^{(J,K)}(\xbf,\wbf)^{\gamma*}} \leq Ce^{-\tilde{\mu}L^{\alpha\gamma*}}  \, . 
\end{equation}
The last step follows from $\dist_H^{(J,K)}(\xbf,\wbf)\geq L^\alpha/{4n}$ and is used 
for  part of the decay only. The other part  controls the sum, and we arrive at
\begin{equation}
(I) \leq Ce^{-\tilde{\nu}L^{\alpha\gamma*}}.
\end{equation}

As for $(II)$, we proceed differently. Let $\wbf\in\Zdn$ such that $\dist_H^{(J,K)}(\xbf,\wbf)<L^\alpha/4n$. Assuming without loss of generality that $1\in J$, we can infer that there exists $j\in J$ such that $|x_1-w_j|< L^\alpha/4n$. As a consequence,
\begin{equation}
\min_l|w_j-y_l| \geq \min_l|x_1-y_l|-|x_1-w_j| \geq 2L+L^\alpha+6R-\frac{L^\alpha}{4n} \geq 2L+6R.
\end{equation}
Now define
\begin{equation}
\calI_{x_1,w_j}=\{\zeta\in\calI\,|\,|x_1-\zeta|\leq r_U+4\text{ or }|w_j-\zeta|\leq r_U+4\}.
\end{equation}
Due to the a priori bound on fractional moments in Lemma \ref{apriori} and the resampling estimate in Lemma \ref{resampling}, we infer that
\begin{align}
&\Ex\big[\|\chi_\xbf(H_{\Omega^n}^{(J,K)}-z)^{-1}\chi_\wbf\|^s\|\chi_\wbf(H_{\Omega^n}^{(n)}-z)^{-1}\chi_\ybf\|^s\big] \notag\\
&\leq \Ex\big[\Ex\big[\|\chi_\xbf(H_{\Omega^n}^{(J,K)}-z)^{-1}\chi_\wbf\|^{2s}\big|\calF_{\calI_{x_1,w_j}}\big]^{\frac{1}{2}}\Ex\big[\|\chi_\wbf(H_{\Omega^n}^{(n)}-z)^{-1}\chi_\ybf\|^{2s}\big|\calF_{\calI_{x_1,w_j}}\big]^{\frac{1}{2}}\big] \notag \\
&\leq C\, \Ex\big[\Ex\big[\|\chi_\wbf(H_{\Omega^n}^{(n)}-z)^{-1}\chi_\ybf\|^{2s}\big|\calF_{\calI_{x_1,w_j}}\big]^{\frac{1}{2}}\big] \notag \\
&\leq CL^{dn}\sum_{\substack{\ubf\in\Zdn \\ \id_{B_3(\ubf)}\sum_{\zeta\in\calI_{x_1,w_j}}N_\zeta\neq0}} \Ex\big[\|\chi_\ubf(H_{\Omega^n}^{(n)}-z)^{-1}\chi_\ybf\|^s\big] \notag  \\
&\leq CL^{2nd} B_s^{(n)}(I,2L).
\end{align}
In the last step we used that the number of contributing summands is of order $L^{dn}$ as $\Omega$ is a subset of $\bigcup_{j=1}^n\Lambda_{3L}(x_j)$. We conclude that
\begin{equation}
(II) \leq \sum_{\substack{\wbf\in\Zdn \\ \chi_\wbf\id_{\Omega^n}\neq 0 \\ \dist_H^{(J,K)}(\xbf,\wbf)<\frac{L^\alpha}{4n}}} CL^{2nd} \, \|W^{(J,K)}\chi_\wbf\|_\infty^s \, B_s^{(n)}(I,2L) \, . 
\end{equation}
For any $\wbf$ in the sum above, there exist $l\in J$ and $m\in K$ such that $\min_{j\in J,k\in K}|w_j-w_k|=|w_l-w_m|$. Additionally, from $\dist_H^{(J,K)}(\xbf,\wbf)\leq L^\alpha/4n$, it follows that there exist $l'\in J$ and $m'\in K$ such that $|w_l-x_{l'}|,|w_m-x_{m'}|\leq L^\alpha/4n$. As a consequence, using 
\begin{equation}
\min_{\substack{j\in J \\ k\in K}}|w_j-w_k| \geq |x_{l'}-x_{m'}| - |x_{l'}-w_l|-|w_m-x_{m'}| \geq \frac{L^\alpha}{2n}
\end{equation}
and hence
\begin{equation}
\|W^{(J,K)}\chi_\wbf\|_\infty^s \leq C \bigg(w_b\bigg(\frac{L^\alpha}{4n}\bigg)\bigg)^s,
\end{equation}
we conclude 
\begin{equation}
(II) \leq C L^{2nd} \, L^{\alpha nd}\bigg(w_b\bigg(\frac{L^\alpha}{4n}\bigg)\bigg)^s B_s^{(n)}(I,2L).
\end{equation}
This finishes the proof.
\end{proof}

Combining the lemmas in this section finally yields the claim of Theorem \ref{thm:rescalingineq}.

\subsection{Initial length scale estimates}\label{section:initialestimates}

We proceed with the proof of the initial length scale estimate that is assumed in Corollary \ref{cor:rescalingimpliesdecay}, i.e., the estimate $B_s^{(n)}(I,L)\leq C'L^{-q'}$. We state two possible situations in the following theorem:

\begin{theorem}\label{thm:initialestimates}
\begin{enumerate}
\item[\textnormal{(i)}] Let $C'>0$ and $q'>8dn$. Suppose $I$ is an interval such that for sufficiently large $L$
\begin{equation}
B_{s,\alpha_W=0}^{(n)}(I,L) \leq \frac{1}{2}C'L^{-q'} \label{eq:thm:initialestimates:1}
\end{equation}
for the non-interacting $n$-particle system, i.e., $\alpha_W=0$. 
Then for any sufficiently large $L_1$ there exists $\alpha_0>0$ such that, if $\alpha_W\in[0,\alpha_0]$, then
\begin{equation}
B_s^{(n)}(I,L)\leq C'L^{-q'}
\end{equation}
for all $L\in[L_1,4L_1+9R]$.
\item[\textnormal{(ii)}] Let $C'>0$ and $q'>8dn$ and suppose that for some $\xi>2(q'+3dn)$ there exists $L^{(n)}>0$ such that
\begin{equation}
\Prob\big(E_0(H_{B_L(\xbf)}^{(n)}) \leq E_0^{(n)}+ L^{-1}\big) \leq L^{-\xi} \label{eq:thm:initalestimates:2}
\end{equation}
for $L\geq L^{(n)}$ and any $\xbf\in\Rdn$. Then for any sufficiently large $L_1$ there exists $\eta^{(n)}>0$ such that
\begin{equation}
B_s^{(n)}(I,L)\leq C'L^{-q'}
\end{equation}
for all $L\in[L_1,4L_1+9R]$, where $I=[E_0^{(n)},E_0^{(n)}+\eta^{(n)}]$.
\end{enumerate}
\end{theorem}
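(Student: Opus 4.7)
I would prove the two parts separately: Part~(i) by a perturbative expansion around the non-interacting Hamiltonian, and Part~(ii) by combining the Lifshitz estimate with Combes--Thomas and a priori fractional moment bounds in a good-event/bad-event decomposition.

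For Part~(i), the starting point is the second resolvent identity
\begin{equation*}
(H^{(n)} - z)^{-1} = (H^{(n)}_{\alpha_W=0} - z)^{-1} - \alpha_W (H^{(n)}_{\alpha_W=0} - z)^{-1} W^{(n)} (H^{(n)} - z)^{-1}.
\end{equation*}
Sandwiching by $\chi_\xbf$ and $\chi_\ybf$ and taking $s$-th moments, the $s$-subadditivity $|a+b|^s \leq |a|^s + |b|^s$ together with hypothesis~\eqref{eq:thm:initialestimates:1} bounds the first contribution by $\tfrac{1}{2}C'L^{-q'}$. For the correction I would insert a smooth partition of unity on $\mathbb{R}^{dn}$ (as in Lemma~\ref{lemma:rescalingstepone}) and apply H\"older's inequality to decouple the non-interacting resolvent from the full one, using Lemma~\ref{apriori} to control the latter. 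Assumption~(IV) ensures that $\|W^{(n)}\|_\infty$ is finite, and since $L \in [L_1, 4L_1+9R]$ is bounded the remaining volume prefactor depends only on $L_1$. Choosing $\alpha_W$ below a threshold $\alpha_0 = \alpha_0(L_1)$ makes the correction at most $\tfrac{1}{2}C'L^{-q'}$, which together with the first piece gives the claim.

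For Part~(ii), I would choose $\eta^{(n)} \leq (8L_1+18R)^{-1}$ so that $\eta^{(n)} \leq L^{-1}/2$ for every $L$ in the range. On the good event $\mathcal{G} := \{E_0(H^{(n)}_{B_L(\xbf)}) > E_0^{(n)} + L^{-1}\}$, whose probability is at least $1 - L^{-\xi}$ by the Lifshitz hypothesis, Dirichlet monotonicity applied to a product box $\tilde\Omega^n \subset B_L(\xbf)$ adapted to $\xbf$ (e.g.\ $\tilde\Omega = \bigcap_j \Lambda_L(x_j)$ when $\xbf$ is tight) yields the spectral gap $\dist(z,\sigma(H^{(n)}_{\tilde\Omega^n})) \geq L^{-1}/2$ for $\Real z \in I$. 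The geometric resolvent equation~\eqref{eq:geomres} with cutoff $\xi$ equal to $1$ near $\xbf$ and $0$ outside $\tilde\Omega^n$ then gives
\begin{equation*}
\chi_\xbf (H_{\Omega^n}^{(n)} - z)^{-1} \chi_\ybf = \chi_\xbf (H_{\tilde\Omega^n}^{(n)} - z)^{-1} [H_{\Omega^n}^{(n)},\xi] (H_{\Omega^n}^{(n)} - z)^{-1} \chi_\ybf,
\end{equation*}
the diagonal term vanishing because $\dist_H(\xbf,\ybf) \geq L$ forces $\chi_\ybf \xi = 0$. On $\mathcal{G}$ the Combes--Thomas estimate~(CT) controls the left factor deterministically; on $\mathcal{G}^c$ the a priori bound (Lemma~\ref{apriori}) controls all factors. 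Summing a partition-of-unity expansion near $\partial \tilde\Omega^n$ (producing at most $O(L^{3dn})$ contributing configurations, as in Lemma~\ref{lemma:rescalingstepone}) and combining via H\"older's inequality with $2s<1$, the assumption $\xi > 2(q'+3dn)$ ensures that the bad-event piece is at most $C L^{-\xi/2} \leq C L^{-q'-3dn}$, which absorbs the volume factor and leaves the required margin $L^{-q'}$.

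The hard part is Part~(ii): the Combes--Thomas bound with a spectral gap of only $L^{-1}$ over a spatial distance $L$ produces only a constant (the exponential factor degenerates), so the polynomial decay $L^{-q'}$ on the good event cannot be produced by Combes--Thomas alone and must be extracted essentially from the large-deviation tail $L^{-\xi}$ via the H\"older step. This is precisely what the quantitative condition $\xi > 2(q'+3dn)$ encodes: the factor $2$ pays for the Cauchy--Schwarz loss and $3dn$ for the configuration-space volume from the partition of unity. Spread-out configurations $\xbf$ for which $\bigcap_j \Lambda_L(x_j) = \emptyset$ require a separate treatment, presumably by first decomposing into non-interacting subclusters in the spirit of Section~\ref{subsection:locfornonintsubclusters}.
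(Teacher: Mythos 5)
Part~(i) of your proposal is essentially the paper's proof: resolvent identity, $s$-subadditivity, the hypothesis for the unperturbed term, a uniform a priori bound times $\alpha_W^s$ for the correction, and a threshold $\alpha_0$ depending on $L_1$. No objection there.

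Part~(ii) contains a genuine gap that you half-diagnose and then do not repair. Your good-event contribution is controlled only by the Combes--Thomas estimate in the form (CT), and, as you observe, with a spectral gap of order $L^{-1}$ over a distance $L$ that bound degenerates -- in fact it is $O(L)$, because of the prefactor $\dist(z,\sigma)^{-1}$. Your proposed escape, that the decay ``must be extracted from the large-deviation tail via the H\"older step,'' is incoherent: the tail probability $L^{-\xi}$ multiplies only the bad-event piece, so if the good-event piece is $O(L)$ the total is $O(L)$ and no decay at all is obtained. The missing ingredient is the sharpened Combes--Thomas estimate for energies strictly \emph{below} the bottom of the spectrum (the paper invokes \cite{GK03}*{Theorem~1}), in which the decay rate is proportional to $\sqrt{\dist(z,\sigma)}$ rather than to $\dist(z,\sigma)/(1+|z|+\dist(z,\sigma))$. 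With a gap of order $(nL)^{-1}$ and $\dist_H(\xbf,\ybf)\geq L$ this gives $\|\chi_\xbf(H_{\Omega^n}^{(n)}-z)^{-1}\chi_\ybf\|\leq CLe^{-\mu\sqrt{L}}$ on the good event, which is superpolynomially small; this is where the decay on the good event actually comes from. The tail only pays for the bad event, and there your accounting of $\xi>2(q'+3dn)$ (a factor $2$ for Cauchy--Schwarz, $3dn$ for the volume factor in passing from $\tilde{B}_s^{(n)}$ to $B_s^{(n)}$) is correct.

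A secondary but real defect is that you formulate the good event via $E_0(H^{(n)}_{B_L(\xbf)})$ and treat only ``tight'' $\xbf$, deferring spread-out configurations to an unspecified subcluster argument. The paper avoids this dichotomy, and avoids any geometric resolvent equation on the good event, as follows: restricting first to $\Omega\subset\bigcup_j\Lambda_{3L}(x_j)$ (i.e.\ to $\tilde{B}_s^{(n)}$), one writes $\Omega^n$ as a disjoint union of products of the connected components of $\bigcup_j\Lambda_{3L}(x_j)$, each contained in a ball of radius $6nL$; domain monotonicity of the ground-state energy and the hypothesis \eqref{eq:thm:initalestimates:2} at scale $6nL$ then give $\Prob\big(E_0(H_{\Omega^n}^{(n)})\leq E_0^{(n)}+(6nL)^{-1}\big)\leq n^n(6nL)^{-\xi}$ for \emph{every} such $\Omega$ and every $\xbf$. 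On the complementary event $\Real z$ lies below $\sigma(H_{\Omega^n}^{(n)})$ itself, so the sharpened Combes--Thomas bound applies directly to the full finite-volume resolvent, with no cutoff function and no case distinction on $\diam(\xbf)$. This also makes transparent why $\eta^{(n)}$ must be taken of order $L_1^{-1}$ (the paper works on $I=[E_0^{(n)},E_0^{(n)}+(12nL)^{-1}]$), consistent with your choice of $\eta^{(n)}$.
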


\begin{proof}
The proof of (i) is a perturbation argument. We apply the resolvent equation, use \eqref{eq:thm:initialestimates:1} and obtain
\begin{align}
\Ex\big[\|\chi_\xbf(H_{\Omega^n}^{(n)}-z)^{-1}\chi_\ybf\|^s\big] &\leq \Ex\big[\|\chi_\xbf(H_{\Omega^n,\alpha_W=0}^{(n)}-z)^{-1}\chi_\ybf\|^s\big] \notag\\
&\quad+ \Ex\big[\|\chi_\xbf(H_{\Omega^n,\alpha_W=0}^{(n)}-z)^{-1}\alpha_W  W(H_{\Omega^n}^{(n)}-z)^{-1}\chi_\ybf\|^s\big] \notag\\
&\leq \frac{1}{2}C'L^{-q'} + \alpha_W^s C
\end{align}
for $\dist_H(\xbf,\ybf)\geq L$  with $L $ sufficiently large and $\Omega\subset\Rd$ open, bounded, $\Real z\in I$, $0<|\Imag z|<1$. Given $ L_1 $ sufficiently large  and $\alpha_W^s\leq C'(4L_1+9R)^{-q'}/(2C)$ we can hence conclude
\begin{equation}
B_s^{(n)}(I,L) \leq \frac{1}{2}C'L^{-q'} + \frac{1}{2}C'L^{-q'} = C'L^{-q'}
\end{equation}
for all $L\in[L_1,4L_1+9R]$.

We proceed with the proof of (ii). Let $\xbf\in\Rdn$ and $\Omega_k$, $k\in\{1,\ldots K\}$, be the connected components of $\bigcup_{j=1}^n \Lambda_{3L}(x_j)$. Clearly, $\Omega_k\subset \Lambda_{6nL}(u_k)$ for some $u_k\in\Rd$. Then for any open set $\Omega\subset\bigcup_{j=1}^n \Lambda_{3L}(x_j)$, we have
\begin{align}
& \Prob\big(E_0(H_{\Omega^n}^{(n)}) \leq E_0^{(n)}+(6nL)^{-1}\big) = \Prob\big(\min_{k_1,\ldots,k_n=1}^K E_0(H_{\prod_{j=1}^n\Omega_{k_j}}^{(n)})\leq E_0^{(n)}+(6nL)^{-1}\big) \notag\\
&\leq \sum_{k_1,\ldots,k_n=1}^K\Prob\big( E_0(H_{B_{6nL}(u_{k_1},\ldots,u_{k_n})}^{(n)})\leq E_0^{(n)}+(6nL)^{-1}\big) \notag\\
&\leq n^n(6nL)^{-\xi},
\end{align}
where the first inequality follows from the monotonicity of the ground state energy in the domain and the last inequality follows from \eqref{eq:thm:initalestimates:2} for sufficiently large $L$.
Pick $\ybf\in\Rdn$ such that $\dist_H(\xbf,\ybf)\geq L$ and $z\in\C$ with $\Real z\in [E_0^{(n)},E_0^{(n)}+(12nL)^{-1}]$, $0<|\Imag z|<1$. Then
\begin{align}
& \Ex\big[\|\chi_\xbf(H_{\Omega^n}^{(n)}-z)^{-1}\chi_\ybf\|^s\big] \notag \\
&\leq \Ex\big[\|\chi_\xbf(H_{\Omega^n}^{(n)}-z)^{-1}\chi_\ybf\|^s \id_{\mathbf{\Omega}_B}\big] + \Ex\big[\|\chi_\xbf(H_{\Omega^n}^{(n)}-z)^{-1}\chi_\ybf\|^s \id_{\mathbf{\Omega_B^c}}\big] \notag\\
&\leq \Ex\big[\|\chi_\xbf(H_{\Omega^n}^{(n)}-z)^{-1}\chi_\ybf\|^{2s}\big]^{\frac{1}{2}} \Prob\big(\mathbf{\Omega}_B\big)^{\frac{1}{2}} + \Ex\big[\|\chi_\xbf(H_{\Omega^n}^{(n)}-z)^{-1}\chi_\ybf\|^s \id_{\mathbf{\Omega}_B^c}\big],
\end{align}
where $\mathbf{\Omega}_B:=\{\omega\,|\,E_0(H_{\Omega^n}^{(n)}(\omega))\leq E_0^{(n)}+(6nL)^{-1}\}$. The first term is bounded by $CL^{-\xi/2}$, whereas the second term can be bounded with the help of the Combes-Thomas estimate (cf. \cite{GK03}*{Theorem~1}), since for $\omega\in\mathbf{\Omega}_B^c$ we have $\Real z<E_0(H_{\Omega^n}^{(n)})-(12nL)^{-1}$ and hence
\begin{equation}
\|\chi_\xbf(H_{\Omega^n}^{(n)}(\omega)-z)^{-1}\chi_\ybf\| \leq C Le^{-\mu\sqrt{L}}.
\end{equation}
As a consequence, we have
\begin{equation}
\tilde{B}_s^{(n)}([E_0^{(n)},E_0^{(n)}+(12nL)^{-1}],L) \leq \tilde{c}L^{-\frac{\xi}{2}}
\end{equation}
for sufficiently large $L$. Similarly to Lemma \ref{lemma:rescalingstepone}, we can conclude
\begin{equation}
B_s^{(n)}([E_0^{(n)},E_0^{(n)}+(12 n L)^{-1}],L) \leq cL^{3dn-\frac{\xi}{2}}
\end{equation}
for large $L$. As $\xi>2(q'+3dn)$, this yields the claim of (ii).
\end{proof}

\subsection{Proof of Theorems \ref{thm:main:case1} and \ref{thm:main:case2}}\label{section:proofofmainresults}

\begin{proof}[Proof of Theorem \ref{thm:main:case1}]
\ 
\begin{enumerate}
  \item[(i)] The assumption of Theorem \ref{thm:main:case1}(i) yields the basis of an induction on $n$, i.e.,
\begin{equation}
B_s^{(1)}([E_0^{(1)},E_0^{(1)}+\eta^{(1)}],L) \leq C^{(1)}e^{-\mu^{(1)} L^{\gamma_w}}
\end{equation}
for all $L\geq 0$. Given any $\eta^{(n)}\in(0,\eta^{(1)})$, the induction step proceeds as follows:
Assuming that we have
\begin{equation}
B_s^{(m)}([E_0^{(n-1)},E_0^{(n-1)}+\eta^{(n-1)}],L) \leq C^{(m)}e^{-\mu^{(m)}L^{\gamma_w}}
\end{equation}
for some $\eta^{(n-1)}\in(\eta^{(n)},\eta^{(1)})$ and all $m\leq n-1$, Theorem \ref{thm:fmboundnonint} guarantees the applicability of  Theorem \ref{thm:rescalingineq} with $I=I^{(n)}=[E_0^{(n)},E_0^{(n)}+\eta^{(n)}]$. 
Theorem~\ref{thm:fmboundnonint} also implies the (sub)exponential decay of the fractional moments of the resolvent of the non-interacting system. In particular, the assumption of  Theorem~\ref{thm:initialestimates}(i) is satisfied and we apply Corollary \ref{cor:rescalingimpliesdecay}(ii) to conclude that
\begin{equation}
B_s^{(n)}([E_0^{(n)},E_0^{(n)}+\eta^{(n)}],L) \leq C^{(n)}e^{-\mu^{(n)}L^{\gamma_w}}
\end{equation}
if $\alpha_W$ is sufficiently small. This yields the claim of Theorem \ref{thm:main:case1}(i).
  \item[(ii)] In the case $n=1$ and as a basis of the induction, the assertion follows by  combining  Theorem \ref{thm:initialestimates}(ii) and Corollary \ref{cor:rescalingimpliesdecay}(ii), where we note that the assumption of Theorem \ref{thm:rescalingineq} is trivially satisfied for $ n = 1 $. As in the proof of (i), we proceed with the induction step and assume a bound on $B_s^{(m)}([E_0^{(m)},E_0^{(m)}+\eta^{(n-1)}],L)$ for some $\eta^{(n-1)}\in(0,\eta^{(1)})$ and all $m\leq n-1$. As a consequence, we infer the condition of Theorem \ref{thm:rescalingineq} with $I=[E_0^{(n)},E_0^{(n)}+\eta^*]$ with some $\eta^*<\eta^{(n-1)}$. If we choose $C'=1$ and $q'>8dn$ such that $\xi=\xi^{(n)}>2(q'+3dn)$ (which is possible due to the assumption on $\xi^{(n)}$), Theorem \ref{thm:initialestimates}(ii) ensures the existence of an $\eta^{(n)}\leq \eta^*$ such that Corollary \ref{cor:rescalingimpliesdecay}(ii) can be applied with $I=I^{(n)}=[E_0^{(n)},E_0^{(n)}+\eta^{(n)}]$, which concludes the proof.
\end{enumerate}
\end{proof}

\begin{proof}[Proof of Theorem \ref{thm:main:case2}]
The proof proceeds in the same fashion as the proof of Theorem \ref{thm:main:case1}. Therefore, we focus on the differences only.

We conduct the induction step as long as $n \leq p_w s/12= p_w /48$  with $s=1/4$. The $\alpha=\alpha^{(n)}$ in the rescaling inequality in Theorem \ref{thm:rescalingineq} is chosen such that the resulting $\beta=\beta^{(n)} = \min\{\alpha\gamma^*  ,  1-\alpha  \} $ with $\gamma^* = \beta^{(n-1)}$  in Corollary \ref{cor:rescalingimpliesdecay}(i) is maximal. This is the case if $\alpha\gamma^* = 1-\alpha$, i.e., $\alpha^{(n)}=(1+\beta^{(n-1)})^{-1}$ and hence $\beta^{(n)}=\beta^{(n-1)}/(1+\beta^{(n-1)}) $. The induction basis holds with $\beta^{(1)}\in(0,1]$ and consequently, for $ n \geq 2 $ we have $\alpha^{(n)}=(1+(n-2)\beta^{(1)})/(1+(n-1)\beta^{(1)})$ and $\beta^{(n)}=\beta^{(1)}/(1+(n-1)\beta^{(1)})$. In particular, $\alpha^{(n)}\geq 1/2$, so our condition on $n$ ensures the applicability of Corollary~\ref{cor:rescalingimpliesdecay}(i). Apart from these considerations, the proof proceeds analogously to the proof of Theorem \ref{thm:main:case1}.
\end{proof}

\appendix

\section{A priori estimates}

In this section, we present some auxiliary results that are essential for our analysis of the multiparticle system. Most of these are familiar from the one-particle case (cf. \cite{AENSS}). We state the results in a more general setting:
\begin{itemize}
\item The random operator $H(\omega)=H_0+W+V(\omega)=-\Delta+W+V(\omega)$ acts on $L^2(\RD)$, where $D\in\N$ is arbitrary.
\item The potential $W$ is not necessarily an interaction potential, but can be an arbitrary bounded background potential.
\item The random potential $V(\omega)$ takes the form
\begin{equation}
V(\omega)=\sum_{\zeta\in\calI}\eta_\zeta(\omega) N_\zeta,
\end{equation}
where $\calI$ is an arbitrary countable index set, the functions $N_\zeta$ are measurable, non-negative and satisfy
\begin{equation}
\sup_{\xbf\in\RD}\sum_{\zeta\in\calI}N_\zeta<\infty,
\end{equation}
and the random variables $\eta_\zeta$, $\zeta\in\calI$, are independent and identically distributed. The distribution of $\eta_\zeta$ has a bounded density $\rho$ with a compact support that is a subset of the non-negative half-line $\R_0^+$.
\end{itemize}
In addition, we introduce some notation. For any set $\calI'\subset\calI$, $\calF_{\calI'}$ is the $\sigma$-algebra generated by the random variables $(\eta_\zeta)_{\zeta\in\calI\backslash\calI'}$. The infimum of the spectrum of $H_w=-\Delta+W$ is denoted by $E_0$.

The first lemma is basically a reformulation of \cite{AENSS}*{Lemma 3.3} in the above setting. Its message is the following: Suppose $\Lambda_1',\Lambda_2'\subset\RD$ are bounded open sets and $\calI'\subset\calI$ such that the sum $\sum_{\zeta\in\calI'}N_\zeta$ is bounded from below by a positive number on a neighborhood of these sets. Then it suffices to average over the ``local'' random variables $\eta_\zeta$, $\zeta\in\calI'$, in order that a fractional moment of $\|\id_{\Lambda_1'}(H-z)^{-1}\id_{\Lambda_2'}\|$, $z\in\C\setminus\R$, to be bounded uniformly with respect to $\Imag z$.

\begin{lemma}\label{apriori}
Let $\Lambda_1,\Lambda_2\subset\RD$ be open and bounded sets and let $\calI_1,\calI_2\subset\calI$ be finite sets and $c_0\in(0,\infty)$ such that $\inf_{\xbf\in\Lambda_j}\sum_{\zeta\in\calI_j}N_\zeta\geq c_0$ for both $j=1$ and $j=2$. Suppose $\Lambda_j'\subset\Lambda_j$ such that $\dist(\partial\Lambda_j,\Lambda_j')\geq\delta>0$. Then there exists $C>0$ such that for any open set $\Omega\subset\RD$, $s\in(0,1)$ and $z\in\C\backslash\R$
\begin{multline}
\Ex\big[\|\id_{\Lambda_1'}(H_\Omega-z)^{-1}\id_{\Lambda_2'}\|^s\big|\calF_{\calI_1\cup\calI_2}\big] \\
\leq \frac{C^s}{1-s}(1+|\Lambda_1|^{\frac{1}{2}}|\Lambda_2|^{\frac{1}{2}})^s (\#(\calI_1\cup\calI_2))^{(2D+8)s}
\bigg(1+|z-E_0|+\delta^{-2}\bigg)^{(D+3)s}.
\end{multline}
\end{lemma}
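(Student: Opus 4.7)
The plan is to adapt the fractional-moment resolvent estimate of~\cite{AENSS}*{Lemma~3.3} from the one-particle Schrödinger setup to the somewhat more general framework considered here, in which the random potential is a sum $V=\sum_{\zeta\in\calI}\eta_\zeta N_\zeta$ and the covering hypothesis is stated in terms of a finite subfamily $\{N_\zeta\}_{\zeta\in\calI_j}$. The argument splits into two complementary steps: an \emph{a priori deterministic} reduction of the resolvent's operator norm to a Schatten-class quantity, and a \emph{spectral averaging} bound extracting the fractional-moment estimate from the randomness.

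For the deterministic step, I would use elliptic regularity to observe that $\id_{\Lambda_j}(H_\Omega+a)^{-k}$ lies in the Hilbert--Schmidt class $\mathfrak{S}_2$ for $k>D/4$, with norm bounded by $C\,|\Lambda_j|^{1/2}$, where $a=1+\|W\|_\infty-E_0$ (the random potential is non-negative, so this lower bound on $H_\Omega+a$ is uniform in $\omega$). A $C^2$ cutoff $\varphi_j$ equal to $1$ on $\Lambda_j'$, supported in $\Lambda_j$, and with $\|\nabla^m\varphi_j\|_\infty=O(\delta^{-m})$ for $m=1,2$ allows one to commute powers of $H_\Omega+a$ through the sharp indicators without losing control near $\partial\Lambda_j$. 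A standard iterated resolvent identity then bounds $\|\id_{\Lambda_1'}(H_\Omega-z)^{-1}\id_{\Lambda_2'}\|$ by a product of two Hilbert--Schmidt factors and a polynomial prefactor in $1+|z-E_0|+\delta^{-2}$ of total degree $\sim 2k$, which (with the minimal admissible $k$) is what ultimately accounts for the $(D+3)s$ exponent in the final bound.

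For the spectral averaging, I would invoke the covering condition $\sum_{\zeta\in\calI_j}N_\zeta\geq c_0\id_{\Lambda_j}$. Since each $N_\zeta$ is non-negative, this operator inequality together with the Birman--Solomyak/Aizenman--Molchanov rank-one spectral-averaging lemma used in~\cite{AENSS} yields, after conditioning on $\calF_{\calI\setminus\{\zeta_0\}}$ and averaging a single variable $\eta_{\zeta_0}$,
\begin{equation}
\Ex_{\eta_{\zeta_0}}\bigl[\|N_{\zeta_0}^{1/2}(H_\Omega-z)^{-1}N_{\zeta_0}^{1/2}\|^s\bigr] \leq \frac{C\|\rho\|_\infty^s}{1-s},
\end{equation}
uniformly in $\Imag z\neq 0$ and in the remaining variables. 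The polynomial factor $(\#(\calI_1\cup\calI_2))^{(2D+8)s}$ then arises by combining: the covering reduction of $\id_{\Lambda_j}$ to $c_0^{-1}\sum_{\zeta\in\calI_j}N_\zeta$ (which costs one power of $\#\calI_j$ per side after Cauchy--Schwarz), a Jensen-type inequality that distributes the resulting sum inside the fractional moment so that each rank-one average above can be applied individually, and the additional $\sim 2k$ powers introduced by the first-stage Schatten reduction.

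The main technical obstacle I anticipate is the exponent bookkeeping. The naive reduction from the covering inequality to a pointwise lower bound by a single $N_{\zeta_0}$ would cost an exponential factor in $\#(\calI_1\cup\calI_2)$, so one must instead decompose the operator inequality additively and apply Jensen with $0<s<1$ together with the positivity structure in order to keep the dependence polynomial. Aligning these combinatorial estimates with the Schatten index $k\sim D/4$ chosen in the deterministic step is exactly what produces the stated exponents $(2D+8)s$ and $(D+3)s$.
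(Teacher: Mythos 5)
Your overall architecture matches the paper's: a deterministic commutator/cutoff expansion reducing the sharp indicators to Hilbert--Schmidt factors with the polynomial prefactor in $1+|z-E_0|+\delta^{-2}$, followed by insertion of the covering condition $F_j=\sum_{\zeta\in\calI_j}N_\zeta\geq c_0$ and a weak-$L^1$ spectral-averaging bound over the local variables, with the elementary implication $\Prob(|X|>t\,|\,\cdots)\leq C/t\Rightarrow\Ex[|X|^s\,|\,\cdots]\leq C^s/(1-s)$ at the end. However, there is a genuine gap in your averaging step. After the covering reduction you are left with terms of the form $T_{1,\alpha}\,\Theta^2\frac{N_\zeta}{F_1}(H_\Omega-z)^{-1}\frac{N_\gamma}{F_2}\Psi^2\,T_{2,\beta}$ with $\zeta\in\calI_1$ and $\gamma\in\calI_2$ generically \emph{distinct}, because $\Lambda_1$ and $\Lambda_2$ are covered by different index families. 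The single-variable, same-site estimate you invoke, $\Ex_{\eta_{\zeta_0}}\big[\|N_{\zeta_0}^{1/2}(H_\Omega-z)^{-1}N_{\zeta_0}^{1/2}\|^s\big]\leq C/(1-s)$, only handles the diagonal case $\zeta=\gamma$; for $\zeta\neq\gamma$ the resolvent depends on two independent coupling constants and one needs the genuinely two-parameter weak-$L^1$ bound for $\|M_1U_1^{1/2}(A-v_1U_1-v_2U_2)^{-1}U_2^{1/2}M_2\|_{\mathrm{HS}}$ (the paper's Proposition on weak $L^1$ estimates, taken from \cite{AENSS}*{Prop.~3.2}). No Jensen or Cauchy--Schwarz rearrangement reduces the off-diagonal terms to diagonal ones, since $(H_\Omega-z)^{-1}$ is not a positive operator; this two-variable averaging is the essential input and cannot be replaced by rank-one averaging.

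A secondary point you should address: your deterministic reduction uses $(H_\Omega+a)^{-k}$, which still contains the very random variables $(\eta_\zeta)_{\zeta\in\calI_1\cup\calI_2}$ you subsequently want to average, so the Hilbert--Schmidt ``weights'' sandwiching the resolvent would not be measurable with respect to $\calF_{\calI_1\cup\calI_2}$ and the averaging lemma would not apply as stated. The paper avoids this by running the commutator expansion with $H'=H-\sum_{\zeta\in\calI_1\cup\calI_2}\eta_\zeta N_\zeta$, so that the resulting operators are polynomials $\sum_\alpha T_{j,\alpha}\prod_\zeta\eta_\zeta^{\alpha_\zeta}$ with $\calF_{\calI_1\cup\calI_2}$-measurable Hilbert--Schmidt coefficients $T_{j,\alpha}$, and then averages term by term using the compact support of $\rho$. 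Incorporating both fixes would bring your argument in line with the paper's proof.
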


\begin{proof}
We only give a sketch of the proof, as it does not differ much from the one found in \cite{AENSS}.
We pick the least integer $m>D/2$ and choose a family of smooth cutoff functions 
$(\Theta_l)_{l\in\{1,\ldots,m+1\}}$ satisfying $0\leq\Theta_l\leq 1$, $\Theta_{l+1}\equiv 1$ on $\supp\Theta_l$ for $l\leq m$, $\Theta_1\equiv 1$ on $\Lambda_1'$ and $\supp\Theta_{m+1}\subset\Lambda_1$. Define the operator $H'=H-\sum_{\zeta\in\calI_1\cup\calI_2}\eta_\zeta N_\zeta$ and let $a=1-E_0$. By a repeated application of the identity
\begin{align}
\Theta_j^2(H-z)^{-1} &= \Theta_j(H'+a)^{-1}\Theta_j \notag\\
&\quad -\sum_{\zeta\in\calI_1\cup\calI_2}\eta_\zeta\Theta_j(H'+a)^{-1}\Theta_j N_\zeta\Theta_2^2(H-z)^{-1} \notag\\
&\quad+\Theta_j(H'+a)^{-1}(\Theta_j(a+z)+[\Theta_j,\Delta])\Theta_{j+1}^2(H-z)^{-1} ,
\end{align}
one can see that
\begin{equation}
\Theta_1^2(H-z)^{-1} = B_1 + T_1\Theta_{m+1}^2(H-z)^{-1}, \label{thetaonesqres}
\end{equation}
where $B_1$ is a bounded operator and $T_1$ is a Hilbert-Schmidt operator.

Proceeding similarly to the above, we can show that
\begin{equation}
(H-z)^{-1}\Psi_1^2 = B_2 + (H-z)^{-1}\Psi_{m+1}^2T_2  \label{eq:respsionesq}
\end{equation}
where $\Psi_1$, $\Psi_{m+1}$, $B_2$ and $T_2$ are analogous to $\Theta_1$, $\Theta_{m+1}$, $B_1$ and $T_1$, respectively, but with $\Lambda_2'$ and $\Lambda_2$ playing the roles of $\Lambda_1'$ and $\Lambda_1$, respectively. Combining this with \eqref{thetaonesqres} yields
\begin{equation}
\Theta_1^2(H-z)^{-1}\Psi_1^2 = \bar{B} +T_1\Theta_{m+1}^2(H-z)^{-1}\Psi_{m+1}^2T_2 \label{t1resolvt2}
\end{equation}
with $\bar{B}=B_1\Psi_1^2+T_1\Theta_{m+1}^2B_2$.

The operators $T_j$, $j\in\{1,2\}$ are polynomials in the random variables $(\eta_\zeta)_{\zeta\in\calI_1\cup\calI_2}$. They can be written as
\begin{equation}
T_j= \sum_{\substack{\alpha\in\N_0^{\calI_1\cup\calI_2} \\ |\alpha|\leq m}}T_{j,\alpha}\prod_{\zeta\in\calI_1\cup\calI_2}\eta_\zeta^{\alpha_\zeta} \label{formoftj}
\end{equation}
where $|\alpha|=\sum_{\zeta}\alpha_\zeta$ is the absolute value of the multiindex $\alpha$ and each operator $T_{j,\alpha}$ is Hilbert-Schmidt and independent of $(\eta_\zeta)_{\zeta\in\calI_1\cup\calI_2}$.

Define $F_j=\sum_{\zeta\in\calI_j}N_\zeta$ for $j\in\{1,2\}$. As $F_j$ is strictly positve on $\Lambda_j$, we can insert $F_1/F_1$ and $F_2/F_2$ next to $\Theta_{m+1}$ and $\Psi_{m+1}$, respectively, into \eqref{t1resolvt2} and obtain
\begin{equation}
\Theta_1^2(H-z)^{-1}\Psi_1^2 = \bar{B} +\sum_{\substack{\zeta\in\calI_1 \\ \gamma\in\calI_2}}T_1\Theta_{m+1}^2\frac{N_\zeta}{F_2}(H-z)^{-1}\frac{N_\gamma}{F_2}\Psi_{m+1}^2T_2 \,.
\end{equation}
Using this and \eqref{formoftj}, we estimate
\begin{align}
&\Prob\big(\|\id_{\Lambda_1'}(H-z)^{-1}\id_{\Lambda_2'}\|>t\big|\calF_{\calI_1\cup\calI_2}\big) \leq \Prob\big(\|\Theta_1^2(H-z)^{-1}\Theta_2^2\|>t\big|\calF_{\calI_1\cup\calI_2}\big)  \notag\\
&\leq \Prob\bigg(\|\bar{B}\|>\frac{t}{M}\bigg|\calF_{\calI_1\cup\calI_2}\bigg) \notag\\
&\quad+c\sum_{\substack{\zeta\in\calI_1 \\ \gamma\in\calI_2}}\sum_{|\alpha|,|\beta|\leq m}\Prob\bigg(\bigg\|T_{1,\alpha}\Theta_{m+1}^2\frac{N_\zeta}{F_2}(H-z)^{-1}\frac{N_\gamma}{F_2}\Psi_{m+1}^2T_{2,\beta}\bigg\|>\frac{t}{M}\bigg|\calF_{\calI_1\cup\calI_2}\bigg) \,,  \label{weakl1bbarandresolvent}
\end{align}
where $M$ is the total number of terms in \eqref{weakl1bbarandresolvent}. As $\bar{B}$ is uniformly bounded, the first term can be bounded by $C_B/t$ for a suitable $C_B>0$.
As for the other terms in \eqref{weakl1bbarandresolvent}, we employ Proposition \ref{weakl1estimate} below. In order to estimate a term from above, we first integrate over the random variables $\eta_\zeta$ and $\eta_\gamma$. For this purpose, we employ \eqref{weakl1estimate1} or \eqref{weakl1estimate2}, depending on whether $\zeta$ is equal to $\gamma$. This yields a weak $L^1$-bound as above.

We can therefore conclude
\begin{equation}
\Prob\big(\|\id_{\Lambda_1'}(H-z)^{-1}\id_{\Lambda_2'}\|>t\big|\calF_{\calI_1\cup\calI_2}\big) \leq \frac{C_B+C_T}{t}
\end{equation}
for suitable constants $C_B,C_T< \infty $. This implies the claim of the lemma, since for a random variable $X$ the weak $L^1$-estimate $\Prob(|X|>t|...)\leq C/t$ implies $\Ex[|X|^s|...] \leq C^s/(1-s)$. The specific form of the claimed bound can be obtained by a careful analysis of the proof. We omit this analysis as it is mostly just a repetition of arguments in the proof of \cite{AENSS}*{Lemma 3.3}.
\end{proof}

The proof above used the following proposition, which was proved in \cite{AENSS}. (Actually, only the second inequality of the proposition was stated in \cite{AENSS}. However, the proof of the first one is more or less contained in the proof of the second one.)
\begin{proposition}[Proposition 3.2 in \cite{AENSS}]\label{weakl1estimate}
Suppose $\mathcal{H}$ is a separable Hilbert space and $A$ a maximally dissipative operator on $\mathcal{H}$ with strictly positive imaginary part. If $M_1,M_2$ are Hilbert-Schmidt operators and $U_1,U_2$ non-negative operators, then for all $\lambda,t>0$
\begin{equation}
\bigg|\bigg\{v\in[0,\lambda]\bigg|\,\bigg\|M_1U_1^{\frac{1}{2}}(A-vU_1)^{-1}U_1^{\frac{1}{2}}M_2\bigg\|_{\mathrm{HS}}>t\bigg\}\bigg| \leq C_w\|M_1\|_{\mathrm{HS}}\|M_2\|_{\mathrm{HS}}\frac{1}{t} \label{weakl1estimate1}
\end{equation}
and
\begin{multline}
\bigg|\bigg\{(v_1,v_2)\in[0,\lambda]^2\bigg|\,\bigg\|M_1U_1^{\frac{1}{2}}(A-v_1U_1-v_2U_2)^{-1}U_2^{\frac{1}{2}}M_2\bigg\|_{\mathrm{HS}}>t\bigg\}\bigg| \\
\leq 2C_w\lambda\|M_1\|_{\mathrm{HS}}\|M_2\|_{\mathrm{HS}}\frac{1}{t}. \label{weakl1estimate2}
\end{multline}
\end{proposition}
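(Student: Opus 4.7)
The plan is to reduce both inequalities to an operator-valued Boole-type weak-$L^1$ bound for Cauchy transforms of non-negative operator-valued measures. For the first inequality, I would introduce the Krein function $K(v):=U_1^{1/2}(A-vU_1)^{-1}U_1^{1/2}$ and apply the second resolvent identity to the pair $(A,A-vU_1)$, which yields the compact representation $K(v)=K_0(I-vK_0)^{-1}$ with $K_0:=U_1^{1/2}A^{-1}U_1^{1/2}$. Since $A$ has strictly positive imaginary part, a direct computation gives $\mathrm{Im}(-K_0)=U_1^{1/2}(A^{-1})^*\mathrm{Im}(A)A^{-1}U_1^{1/2}\geq 0$, so $-K_0$ is accretive. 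Via its minimal self-adjoint dilation, $(I-vK_0)^{-1}$ can then be realised as an operator-valued Cauchy transform of a non-negative operator-valued measure on $\mathbb{R}$, which places $v\mapsto K(v)$ in the framework of operator-valued Herglotz functions.

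The technical core is the following operator-valued weak-$L^1$ bound for such Cauchy transforms $F(v)$:
\[
\bigl|\{v\in\mathbb{R}:\|M_1 F(v)M_2\|_{\mathrm{HS}}>t\}\bigr|\ \leq\ C\,\|M_1\|_{\mathrm{HS}}\|M_2\|_{\mathrm{HS}}\,t^{-1}.
\]
I would prove this by first writing $M_1$ and $M_2$ in singular value decomposition, so that $\|M_1 F(v)M_2\|_{\mathrm{HS}}^2=\sum_{j,k}s_j^2 t_k^2\,|\langle\psi_j,F(v)\eta_k\rangle|^2$. Polarisation reduces each off-diagonal matrix element to a linear combination of diagonal entries $\langle\phi,F(v)\phi\rangle$, which are Cauchy transforms of honest non-negative scalar spectral measures, to which the classical scalar Boole identity applies directly and supplies a weak-$L^1$ bound. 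A layer-cake integration in $t$, combined with Cauchy--Schwarz in the $(j,k)$-indices, then produces the HS constants through $\sum_j s_j^2=\|M_1\|_{\mathrm{HS}}^2$ and analogously for $M_2$.

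For the second inequality, the asymmetry between $U_1^{1/2}$ on the left and $U_2^{1/2}$ on the right precludes a direct Krein reduction. My plan is to condition on $v_2$: the operator $A'':=A-v_2U_2$ remains maximally dissipative with strictly positive imaginary part, so the scheme of the first step applies to the mixed Krein-type function $U_1^{1/2}(A''-v_1U_1)^{-1}U_2^{1/2}$ once the factors $U_1^{1/2}$ and $U_2^{1/2}$ are absorbed into the Hilbert--Schmidt operators $M_1,M_2$ via a slight variant of the polar argument. This yields a weak-$L^1$ bound in $v_1$ whose constant is independent of $v_2$; a trivial integration of the layer-cake estimate over $v_2\in[0,\lambda]$ then produces the extra factor of $\lambda$ in the final bound.

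The main obstacle is the operator-valued Boole inequality itself. The scalar case is elementary, but the HS version requires careful bookkeeping in the SVD expansion to ensure that only $\|M_1\|_{\mathrm{HS}}$ and $\|M_2\|_{\mathrm{HS}}$ appear on the right-hand side and that the constant is independent of the dimension of $\mathcal{H}$. This is precisely what distinguishes the sharp $1/t$ bound from the weaker $1/t^2$ bound one would obtain by a naive Chebyshev inequality applied to the HS norm squared.
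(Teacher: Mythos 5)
You should first note that the paper does not prove this proposition at all: it is quoted verbatim from \cite{AENSS} (Proposition 3.2 there), so the comparison below is with the proof in that reference. Your overall architecture (Krein/Birman--Schwinger reduction $U_1^{1/2}(A-vU_1)^{-1}U_1^{1/2}=K_0(I-vK_0)^{-1}$, Herglotz representation, weak-$L^1$ bound for the boundary values) is the right skeleton, but there are two genuine gaps. The first is the ``operator-valued Boole inequality'' itself, which you correctly identify as the core but do not actually prove. Expanding $\|M_1F(v)M_2\|_{\mathrm{HS}}^2=\sum_{j,k}s_j^2t_k^2|\langle\psi_j,F(v)\eta_k\rangle|^2$ and applying the scalar Boole/Kolmogorov bound to each matrix element cannot be summed back up: each $\langle\psi_j,F(\cdot)\eta_k\rangle$ lies only in weak-$L^1$, not in $L^1$, so there is nothing to ``layer-cake integrate in $t$'', and weak-$L^1$ is not stable under infinite $\ell^2$-aggregation (already for finite sums the quasi-norm constants do not add). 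The mechanism that actually works is different: one writes $d\mu(\lambda)=T(\lambda)\,d\nu(\lambda)$, bounds the total variation of the $\mathcal{J}_2$-valued measure $M_1\,d\mu\,M_2$ by $\int\|M_1T^{1/2}\|_{\mathrm{HS}}\|T^{1/2}M_2\|_{\mathrm{HS}}\,d\nu\leq\|M_1\|_{\mathrm{HS}}\|M_2\|_{\mathrm{HS}}\|\mu(\R)\|$ via Cauchy--Schwarz in $\lambda$, and then invokes the weak-$(1,1)$ bound for the Cauchy transform of a Hilbert-space-valued measure of finite variation (equivalently, Naboko's theorem on boundary values of operator $R$-functions, which is what \cite{AENSS} cite). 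This is a global factorization of the density, not a matrix-element-by-matrix-element polarization.

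The second gap is in the two-parameter inequality. Conditioning on $v_2$ and varying $v_1$ destroys the Krein structure: the kernel $U_1^{1/2}(A-v_2U_2-v_1U_1)^{-1}U_2^{1/2}$ has different weights on the two sides, and absorbing $U_2^{1/2}$ into $M_2$ leaves a one-sided sandwich $U_1^{1/2}(A''-v_1U_1)^{-1}$ that is not of the form $K_0(I-v_1K_0)^{-1}$; any attempt to restore it (e.g.\ via $U_1^{1/2}(A'')^{-1}U_2^{1/2}$) introduces constants depending on $\|(A-v_2U_2)^{-1}\|$ and hence on $v_2$, so the slicewise bound is not uniform. The proof in \cite{AENSS} avoids this by setting $W=U_1+U_2$, writing $U_j^{1/2}=V_jW^{1/2}$ with contractions $V_j$ (so that $\|M_1V_1\|_{\mathrm{HS}}\leq\|M_1\|_{\mathrm{HS}}$), and changing variables to $r=(v_1+v_2)/2$, $q=(v_1-v_2)/2$, so that $v_1U_1+v_2U_2=rW+q(U_1-U_2)$. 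The perturbation $q(U_1-U_2)$ is self-adjoint and therefore preserves the dissipativity of $A$, the one-parameter bound applies in $r$ along each line $q=\mathrm{const}$ with the \emph{same} weight $W$ on both sides, and Fubini over $q$ (range of length $\lambda$, Jacobian $2$) produces exactly the factor $2C_w\lambda$ in \eqref{weakl1estimate2}. Your Fubini step is fine; it is the slice estimate that your reduction does not deliver.
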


The next lemma is a reformulation of \cite{AENSS}*{Lemma 3.4}. The setting is similar to Lemma \ref{apriori}. A resolvent $(H-z)^{-1}$ is again bracketed between two functions supported on bounded sets. However, in this case, we require that $\sum_{\zeta\in\calI'}N_\zeta$ covers only a neighborhood of one of these sets. As a result, averaging over the random variables $\eta_\zeta$, $\zeta\in\calI'$, yields a bound that includes the resolvent of an operator $\Hhat$, which is the Hamiltonian $H$ with ``resampled'' random variables $\hat{\eta}_\zeta$, $\zeta\in\calI'$. In order to state this more rigorously, we introduce some new notation: Let $\hat{\eta}_\zeta$, $\zeta\in\calI$, be a collection of random variables with the same distribution as (but independent of) the random variables $\eta_\zeta$, $\zeta\in\calI$. For any $\calI'\subset\calI$, we denote by $\Hhat_{\calI'}$ the resampled operator $\Hhat_{\calI'}=H-\sum_{\zeta\in\calI'}\eta_\zeta N_\zeta+\sum_{\zeta\in\calI'}\hat{\eta}_\zeta N_\zeta$.

\begin{lemma}\label{resampling}
Let $\Lambda_1,\Lambda_2\subset\RD$ be disjoint, open and bounded sets, and let $\calI'\subset\calI$ be a finite set and $c_0\in(0,\infty)$ such that $\sum_{\zeta\in\calI'}N_\zeta\geq c_0$ on $\Lambda_2$ and $\sum_{\zeta\in\calI'}N_\zeta \equiv 0$ on $B_{2\delta}(\Lambda_1)$. Suppose $0\leq\Psi\leq1$ is a smooth function supported in an open set $\Lambda_2'\subset\Lambda_2$ satisfying $\dist(\partial\Lambda_2,\Lambda_2')\geq 2\delta$ for some $\delta>0$. Then for all $s\in(0,1)$, there exists $C>0$ such that for all open and bounded sets $\Omega\subset\RD$
\begin{align}
&\max\left\{\Ex\big[\|\id_{\Lambda_1}(H_\Omega-z)^{-1}\id_{\Lambda_2'}\|^s\big|\calF_{\calI'}\big],\Ex\big[\|\id_{\Lambda_1}(H_\Omega-z)^{-1}\Psi(H_{w,\Omega}+a)^{1/2}\|^s\big|\calF_{\calI'}\big]\right\} \notag\\
&\leq \frac{1}{1-s}C^s(\#\calI')^{(3D+10)s}(1+|B_{2\delta}(\Omega\cap\supp{\textstyle \sum_{\zeta\in\calI'}N_\zeta})|)^s(1+|z-E_0|+\delta^{-2})^{(D+3)s} \notag\\ &\qquad\times\|\id_{\Lambda_1}(\Hhat_{\calI',\Omega}-z)^{-1}\id_\Xi\|^s, \label{eq:resampling}
\end{align}
where $a=1-E_0$ and $\Xi=\{\xbf\in\RD\,|\,\dist(\xbf,\supp\sum_{\zeta\in\calI'}N_\zeta)<2\delta\}$.
\end{lemma}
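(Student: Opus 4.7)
The proof adapts the strategy of \cite{AENSS}*{Lemma~3.4} to the present setting with $W$ a bounded background potential and arbitrary index set $\calI$. The plan is to reduce, via an iterated cutoff--resolvent identity on the $\Lambda_2$-side, to an expression with explicit rank-one structure in each variable $\eta_\zeta$, $\zeta\in\calI'$, and then apply the weak-$L^1$ estimates of Proposition~\ref{weakl1estimate} to average these variables out of the outer factors. The resampled resolvent $(\Hhat_{\calI',\Omega}-z)^{-1}$ appears on the right because, after the averages, the residual operator factor may be expressed in terms of any auxiliary realization of the $\calI'$-variables---in particular, the independent resampled values $\hat\eta_{\calI'}$.

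I would start by fixing a chain of smooth cutoffs $\Psi=\Psi_1\leq\Psi_2\leq\cdots\leq\Psi_{m+1}$ supported in $\Lambda_2$ with $\Psi_{j+1}\equiv 1$ on $\supp\Psi_j$, where $m$ is the least integer exceeding $D/2$, and with uniform derivative bounds $\|\nabla\Psi_j\|_\infty\leq C\delta^{-1}$, $\|\Delta\Psi_j\|_\infty\leq C\delta^{-2}$ (possible since $\dist(\partial\Lambda_2,\Lambda_2')\geq 2\delta$). Iterating the resolvent identity from the proof of Lemma~\ref{apriori}, with $H_\Omega'=H_\Omega-\sum_{\zeta\in\calI'}\eta_\zeta N_\zeta$, produces the decomposition
\[
(H_\Omega-z)^{-1}\Psi^2 = B + (H_\Omega-z)^{-1}\Psi_{m+1}^2 T,
\]
where $B$ is bounded by $C(1+|z-E_0|+\delta^{-2})^{D+3}$ and $T$ is Hilbert--Schmidt; both are polynomials of degree at most $m+1$ in $\{\eta_\zeta\}_{\zeta\in\calI'}$, with operator coefficients independent of these variables. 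Left-multiplying by $\id_{\Lambda_1}$, the bounded term $\id_{\Lambda_1}B$ contributes a piece whose $s$-th moment is uniformly controlled by $C^s(1+|z-E_0|+\delta^{-2})^{(D+3)s}$, which matches the corresponding factor on the right-hand side of the lemma.

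For the main term I would exploit the covering condition $F:=\sum_{\zeta\in\calI'}N_\zeta\geq c_0$ on $\supp\Psi_{m+1}\subset\Lambda_2$ to insert $F^{-1}F$ and split $F$ into summands, obtaining
\[
\id_{\Lambda_1}(H_\Omega-z)^{-1}\Psi_{m+1}^2 T = \sum_{\zeta\in\calI'}\id_{\Lambda_1}(H_\Omega-z)^{-1}N_\zeta^{1/2}\bigl(N_\zeta^{1/2}F^{-1}\Psi_{m+1}^2 T\bigr).
\]
Expanding $T$ into its monomials $\prod_\zeta\eta_\zeta^{\alpha_\zeta}$ and writing each $\eta_\gamma$-factor in rank-one form $\eta_\gamma N_\gamma^{1/2}\cdot N_\gamma^{1/2}$ exposes a double rank-one structure. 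For each pair $(\zeta,\gamma)$ the resulting operator has the form $M_1 N_\zeta^{1/2}(A-\eta_\zeta N_\zeta-\eta_\gamma N_\gamma)^{-1}N_\gamma^{1/2} M_2$ with $M_1,M_2$ Hilbert--Schmidt and $A$ independent of $(\eta_\zeta,\eta_\gamma)$. Applying Proposition~\ref{weakl1estimate} to average out $(\eta_\zeta,\eta_\gamma)$, followed by the passage from weak-$L^1$ to fractional moments via $\Ex[|X|^s\,|\,\cdots]\leq\|X\|_{L^{1,w}}^s/(1-s)$, yields a bound uniform in the remaining $\eta$-variables. Summing over $\zeta,\gamma\in\calI'$ and over the monomials of $T$, a careful bookkeeping produces the combinatorial factor $(\#\calI')^{(3D+10)s}$.

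The most delicate step---and the main obstacle---is identifying the resulting bound with $\|\id_{\Lambda_1}(\Hhat_{\calI',\Omega}-z)^{-1}\id_\Xi\|^s$. The point is that the Hilbert--Schmidt norms $\|M_1\|_{\mathrm{HS}}$, $\|M_2\|_{\mathrm{HS}}$ generated by Proposition~\ref{weakl1estimate} carry factors of $(H_\Omega'+a)^{-1}$ localized to $\Xi$, and via $\|M\|_{\mathrm{HS}}\leq|\Xi|^{1/2}\|M\|$ these are in turn controlled by the operator norm of $\id_{\Lambda_1}(\Hhat_{\calI',\Omega}-z)^{-1}\id_\Xi$; the weak-$L^1$ averaging being uniform in the $\eta_{\calI'}$-values still appearing in that resolvent, the bound is valid in particular when those values are chosen as the resampled $\hat\eta_{\calI'}$. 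The volume factor $|B_{2\delta}(\Omega\cap\supp F)|^s$ and the shape of $\Xi$ (a $2\delta$-enlargement of $\supp F\cap\Omega$) arise from this trace-over-support estimate. The second inequality, involving $\Psi(H_{w,\Omega}+a)^{1/2}$ on the right, is handled by an adjoint version of the same decomposition, the factor $(H_{w,\Omega}+a)^{1/2}$ being absorbed into the Hilbert--Schmidt norm of $T$ through the same derivative-bound mechanism that produces $\delta^{-2}$ in the $B$-estimate.
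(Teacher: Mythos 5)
Your proposal assembles the right ingredients (the chain of cutoffs, the iterated commutator decomposition from Lemma~\ref{apriori}, the insertion of $F^{-1}F$, and Proposition~\ref{weakl1estimate}), but it is missing the one step that makes this a \emph{resampling} estimate rather than an a~priori bound: the resolvent identity
\[
(H_\Omega-z)^{-1}=(\Hhat_{\calI',\Omega}-z)^{-1}+(\Hhat_{\calI',\Omega}-z)^{-1}(\Hhat_{\calI',\Omega}-H_\Omega)(H_\Omega-z)^{-1},
\]
applied \emph{before} any averaging. Since $\Hhat_{\calI',\Omega}-H_\Omega=\sum_{\zeta\in\calI'}(\hat\eta_\zeta-\eta_\zeta)N_\zeta$ is supported in $\supp\sum_{\zeta\in\calI'}N_\zeta$, and since $\Hhat_{\calI',\Omega}$ does not depend on the variables $\eta_\zeta$, $\zeta\in\calI'$, being integrated, this identity (combined with two further commutator decompositions, one around $\Lambda_2$ and one around $\supp\sum_{\zeta\in\calI'}N_\zeta$) puts every term of the expansion into the form $\id_{\Lambda_1}(\Hhat_{\calI',\Omega}-z)^{-1}\id_\Xi\cdot(\text{remainder})$, where the first factor is constant under $\Ex[\,\cdot\,|\calF_{\calI'}]$ and only the remainder is averaged via Proposition~\ref{weakl1estimate}. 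In your argument the resampled operator never enters: you average $(\eta_\zeta,\eta_\gamma)$ directly out of $\id_{\Lambda_1}(H_\Omega-z)^{-1}\Psi_{m+1}^2T$, and the resulting weak-$L^1$ bound $C\|M_1\|_{\mathrm{HS}}\|M_2\|_{\mathrm{HS}}/t$ is a deterministic constant in which the propagation from $\Lambda_1$ to $\Lambda_2$ has been integrated away; there is no mechanism by which $\|\id_{\Lambda_1}(\Hhat_{\calI',\Omega}-z)^{-1}\id_\Xi\|^s$ can reappear as a multiplicative factor. The closing remark that the bound is ``valid in particular when those values are chosen as the resampled $\hat\eta_{\calI'}$'' is not an argument---one cannot substitute resampled values into one factor of a resolvent without an explicit algebraic identity. (Note also that $M_1=\id_{\Lambda_1}$ is not Hilbert--Schmidt, so Proposition~\ref{weakl1estimate} does not apply as you have set it up.)

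Two further symptoms that the route fails. First, the right-hand side of \eqref{eq:resampling} has no additive term, so your claim that the bounded piece $\id_{\Lambda_1}B$ contributes a constant ``matching the corresponding factor on the right-hand side'' cannot be correct: a purely constant contribution would destroy the multiplicative structure that the lemma is used for in Lemmas~\ref{lemma:rescalingstepone} and~\ref{lemma:rescalingsteptwopartone}, where the resampled resolvent is subsequently re-averaged via $\hat{\Ex}\,\Ex=\Ex$ to produce decay; in the paper's argument that piece either vanishes (because $\id_{\Lambda_1}$ meets a cutoff supported in $\Lambda_2$) or appears with $\id_{\Lambda_1}(\Hhat_{\calI',\Omega}-z)^{-1}$ in front. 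Second, and more minor, the factor $\Psi(H_{w,\Omega}+a)^{1/2}$ is not ``absorbed into the Hilbert--Schmidt norm of $T$''; the paper removes it first by a single commutator identity, using $\id_{\Lambda_1}\Psi=0$, at the cost of a factor $C(1+|z-E_0|+\delta^{-2})$, thereby reducing the second quantity to the $\tilde\Psi^2$ case before the expansion begins.
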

The proof follows closely that of \cite{AENSS}*{Lemma 3.4}.

\begin{proof}
Due to the assumption that $\Lambda_1\cap\Lambda_2=\emptyset$, we have $\id_{\Lambda_1}\Psi=0$ and thus
\begin{equation}
\id_{\Lambda_1}(H-z)^{-1}\Psi(H_w+a)^{1/2} = \id_{\Lambda_1}(H-z)^{-1}\tilde{\Psi}^2((a+z-V)\Psi+[\Psi,H_w])(H_w+a)^{-1/2} \ ,
\end{equation}
where $\tilde{\Psi}$ is a suitable smooth function with $0\leq\tilde{\Psi}\leq 1$, $\tilde{\Psi}\equiv 1$ on $\Lambda_2'$ and $\supp\tilde{\Psi}\subset B_\delta(\Lambda_2')$. This yields the estimate
\begin{equation}
\|\id_{\Lambda_1}(H-z)^{-1}\Psi(H_w+a)^{1/2}\|\leq C(1+|z-E_0|+\delta^{-2})\|\id_{\Lambda_1}(H-z)^{-1}\tilde{\Psi}^2\|
\end{equation}
with a suitable constant $C$. In order to prove \eqref{eq:resampling}, we thus need to find a bound on $\Ex[\|\id_{\Lambda_1}(H-z)^{-1}\tilde{\Psi}^2\|^s|\calF_{\calI'}]$. An application of the resolvent equality yields
\begin{align}
\id_{\Lambda_1}(H-z)^{-1}\tilde{\Psi}^2 
  &=\id_{\Lambda_1}(\Hhat_{\calI'}-z)^{-1}\tilde{\Psi}^2 + \id_{\Lambda_1}(\Hhat_{\calI'}-z)^{-1}(\Hhat_{\calI'}-H)(H-z)^{-1}\tilde{\Psi}^2 \ .
\end{align}
Applying an identity of the same form as \eqref{eq:respsionesq}, we get
\begin{equation}
(H-z)^{-1}\tilde{\Psi}^2=\widehat{\Psi}^2\widehat{B}+(H-z)^{-1}\widehat{\Psi}^2\widehat{T} \ ,
\end{equation}
where $\widehat{B}$ is bounded, $\widehat{T}$ is Hilbert-Schmidt and $\widehat{\Psi}$ is a smooth function such that $0\leq\widehat{\Psi}\leq 1$, $\widehat{\Psi}\equiv 1$ on $\supp\tilde{\Psi}$ and $\supp\widehat{\Psi}\subset \Lambda_2$. Therefore,
\begin{align}
\id_{\Lambda_1}(H-z)^{-1}\tilde{\Psi}^2
  &=\id_{\Lambda_1}(\Hhat_{\calI'}-z)^{-1}\tilde{\Psi}^2 + \id_{\Lambda_1}(\Hhat_{\calI'}-z)^{-1}\widehat{\Psi}^2(\Hhat_{\calI'}-H)\widehat{B} \notag\\
  &\quad+ \id_{\Lambda_1}(\Hhat_{\calI'}-z)^{-1}\Theta^2(\Hhat_{\calI'}-H)(H-z)^{-1}\widehat{\Psi}^2\widehat{T} ,
\end{align}
where $\Theta$ is a smooth cutoff function satisfying $\Theta\equiv1$ on $\supp\sum_{\zeta\in\calI'}N_\zeta$ and $\Theta(\xbf)=0$ if $\dist(\xbf,\supp\sum_{\zeta\in\calI'}N_\zeta)>\delta$.
Next we apply the commutator argument to  $(\Hhat_{\calI'}-z)^{-1}\Theta^2$ and obtain
\begin{equation}
(\Hhat_{\calI'}-z)^{-1}\Theta^2=\tilde{\Theta}^2 B+(\Hhat_{\calI'}-z)^{-1}\tilde{\Theta}^2 T
\end{equation}
with a Hilbert-Schmidt operator $T$, a smooth function $\tilde{\Theta}$ with $\supp\tilde{\Theta}\subset B_\delta(\supp\Theta)$ and a bounded operator $B$.  Combining these identities and using that $\Lambda_1\cap\supp\tilde{\Theta}=\emptyset$, we conclude
\begin{align}
\id_{\Lambda_1}(H-z)^{-1}\tilde{\Psi}^2 &= \id_{\Lambda_1}(\Hhat_{\calI'}-z)^{-1}\tilde{\Psi}^2 + \id_{\Lambda_1}(\Hhat_{\calI'}-z)^{-1}\widehat{\Psi}^2(\Hhat_{\calI'}-H)\widehat{B} \notag\\
  &\quad +\id_{\Lambda_1}(\Hhat_{\calI'}-z)^{-1}\tilde{\Theta}^2 T(\Hhat_{\calI'}-H)(H-z)^{-1}\widehat{\Psi}^2\widehat{T}
\end{align}
and hence
\begin{align}
\|\id_{\Lambda_1}(H-z)^{-1}\tilde{\Psi}^2\| &\leq C_0(\#\calI')^{D/2+1}(1+|z-E_0|+\delta^{-2})^{D/2+1}\| \id_{\Lambda_1}(\Hhat_{\calI'}-z)^{-1}\widehat{\Psi}^2\| \notag\\
  &\quad +\|\id_{\Lambda_1}(\Hhat_{\calI'}-z)^{-1}\tilde{\Theta}\|\cdot\| T(\Hhat_{\calI'}-H)(H-z)^{-1}\widehat{\Psi}^2\widehat{T}\|  \label{resamp1} \notag\\
  &\leq C_0(\#\calI')^{D/2+1}(1+|z-E_0|+\delta^{-2})^{D/2+1}\| \id_{\Lambda_1}(\Hhat_{\calI'}-z)^{-1}\widehat{\Psi}^2\| \notag\\
  &\quad +\|\id_{\Lambda_1}(\Hhat_{\calI'}-z)^{-1}\tilde{\Theta}^2\|\sum_{\zeta,\beta\in\calI'}|\hat{\eta}_\zeta-\eta_\zeta|\| TN_\zeta(H-z)^{-1}\frac{N_\beta}{F}\widehat{\Psi}^2\widehat{T}\| ,
\end{align}
where $F=\sum_{\beta\in\calI'}N_\beta$. Taking the $s^{\text{th}}$ power and averaging over the random variables $\eta_\zeta$, $\zeta\in\calI'$, as in the proof of Lemma \ref{apriori} now yields
\begin{align}
&\Ex\big[\|\id_{\Lambda_1}(H-z)^{-1}\tilde{\Psi}^2\|^s\big|\calF_{\calI'}\big] \notag\\
&\leq C_0^s(\#\calI')^{(D/2+1)s}(1+|z-E_0|+\delta^{-2})^{(D/2+1)s}\| \id_{\Lambda_1}(\Hhat_{\calI'}-z)^{-1}\widehat{\Psi}^2\|^s \notag\\
&\quad+\frac{1}{1-s}C_1^s(\#\calI')^{(3D+10)s}(1+|B_{2\delta}(\Omega\cap\supp{\textstyle \sum_{\zeta\in\calI'}N_\zeta})|)^s(1+|z-E_0|+\delta^{-2})^{(D+2)s} \notag\\ &\qquad\times\|\id_{\Lambda_1}(\Hhat_{\calI'}-z)^{-1}\tilde{\Theta}\|^s
\end{align}
\end{proof}

In our analysis in Section \ref{section:fmec} the assumptions (W) and (GW)(iii) take the form of a Wegner estimate. The following lemma ensures that these assumptions are indeed valid in our model. In comparison to the Wegner estimates in \cites{BCSS,HK13,Ki08,KlZe09}, the important feature of the lemma is that it requires only a ``local'' average, similarly to the apriori bound on fractional moments of the resolvent in Lemma \ref{apriori}. More specifically, given a bounded set $\Lambda'\subset\RD$, we average only over a collection of random variables $\eta_\zeta$ with the property that the corresponding potentials $N_\zeta$ cover a neighborhood of $\Lambda'$. The lemma shows that in this case the conditional expectation of the trace of $\id_{\Lambda'}P_I(H)$ is bounded by a constant times the Lebesgue measure of $I$, for any Borel set $I\subset\R$ whose supremum does not exceed a fixed bound.

\begin{lemma}\label{lemma:exptrproj}
Let $\Lambda\subset\RD$ be open and bounded and let $\calI'\subset\calI$ be a finite set such that $\inf_{\xbf\in\Lambda}\sum_{\zeta\in\calI'}N_\zeta(\xbf)\geq c_0$ for some $c_0\in(0,\infty)$. Suppose $\Lambda'\subset\Lambda$ is an open set such that $\dist(\partial\Lambda,\Lambda')\geq\delta>0$. Then there exists $C<\infty$ such that for any open set $\Omega\subset\RD$
\begin{equation}
\Ex\left[\Tr(\id_{\Lambda'} P_I(H_{\Omega}))\big| \calF_{\calI'}\right]\leq C(\#\calI')^{D+4} \sup_{E\in I}|E+a|^{D+2}(1+\delta^{-2})^{D+2}|\Lambda||I|.  \label{exptrprojineq}
\end{equation}
\end{lemma}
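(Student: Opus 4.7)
The plan is to reduce the estimate to single-site spectral averaging (Combes-Hislop-Klopp) applied to each random variable $\eta_\zeta$, $\zeta\in\calI'$, while exploiting the covering hypothesis $\sum_{\zeta\in\calI'}N_\zeta\geq c_0$ on $\Lambda\supset\Lambda'$. The basic observation is the pointwise inequality between multiplication operators
\begin{equation*}
\id_{\Lambda'} \leq c_0^{-1} \sum_{\zeta \in \calI'} N_\zeta,
\end{equation*}
which holds since the right-hand side is at least $1$ on $\Lambda'$ and each $N_\zeta\geq 0$ elsewhere. Cyclicity of the trace together with the non-negativity of $P_I(H_\Omega)$ yields, $\omega$-wise,
\begin{equation*}
\Tr\bigl(\id_{\Lambda'} P_I(H_\Omega)\bigr) \leq c_0^{-1} \sum_{\zeta \in \calI'} \Tr\bigl(N_\zeta^{1/2} P_I(H_\Omega) N_\zeta^{1/2}\bigr).
\end{equation*}
For each $\zeta$, I would condition on the $\sigma$-algebra generated by $(\eta_\beta)_{\beta \neq \zeta}$ and apply spectral averaging in $\eta_\zeta$. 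Writing $H_\Omega(\omega)=A_\zeta(\omega)+\eta_\zeta N_\zeta$ with $A_\zeta$ independent of $\eta_\zeta$, the Hellmann--Feynman identity $\tfrac{d}{dv}E_n(v)=\langle\psi_n(v),N_\zeta\psi_n(v)\rangle$ and a change-of-variables argument give
\begin{equation*}
\int \rho(v)\, \Tr\bigl(N_\zeta^{1/2} P_I(A_\zeta + v N_\zeta) N_\zeta^{1/2}\bigr)\, dv \leq \|\rho\|_\infty |I| \cdot \mathcal{N}_\zeta(I),
\end{equation*}
where $\mathcal{N}_\zeta(I)$ counts the eigenvalue curves $v\mapsto E_n(v)$ that traverse $I$ as $v$ ranges over $\supp\rho$.

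The remaining task is to control $\mathcal{N}_\zeta(I)$ by the stated prefactor, in particular by $|\Lambda|$ rather than the possibly much larger $|\Omega|$. For this I would introduce a smooth cutoff $\Theta\in C^\infty(\RD)$ with $\Theta\equiv 1$ on $\Lambda'$, $\supp\Theta\subset\Lambda$, and $\|\nabla\Theta\|_\infty=O(\delta^{-1})$, $\|\Delta\Theta\|_\infty=O(\delta^{-2})$, use the pointwise bound $\id_{\Lambda'}\leq\Theta^2$, and reduce the trace to one of the form $\Tr(\Theta^2 (H_0+a)^{-p})$ for some $p>D/2$ through a chain of resolvent and commutator identities analogous to those in the proof of Lemma~\ref{apriori}. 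Standard kinetic-energy bounds then give $\Tr(\Theta^2 (H_0+a)^{-p})\leq C(1+\delta^{-2})^p |\Lambda|$. The exponents $(D+2)$ on $|E+a|$ and on $1+\delta^{-2}$, together with the combinatorial $(\#\calI')^{D+4}$ factor, arise precisely from iterating these manipulations and combine with the factor $\|\rho\|_\infty|I|$ from spectral averaging to give the claimed bound, in close parallel with the bookkeeping in the proof of Lemma~\ref{apriori}.

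I expect the main obstacle to be synchronising this localisation with the single-site spectral averaging: the commutator identities reducing traces to $\Lambda$-supported form must leave each residual term in a shape still compatible with averaging a single variable $\eta_\zeta$, so that the linear-in-$|I|$ dependence -- the defining feature of a Wegner estimate -- is preserved while the $|\Omega|$-dependence is eliminated in favour of $|\Lambda|$. This is the continuum counterpart of the lattice Wegner estimate for rank-one perturbations, with the rank-one structure replaced by the local covering structure furnished by the assumption on $\sum_{\zeta\in\calI'}N_\zeta$.
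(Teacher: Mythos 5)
There is a genuine gap, and it sits exactly where you flag it at the end: the trace-level spectral averaging in your second step and the spatial localisation in your third step operate on different objects and never get connected. After the reduction to $c_0^{-1}\sum_{\zeta\in\calI'}\Tr\bigl(N_\zeta^{1/2}P_I(H_\Omega)N_\zeta^{1/2}\bigr)$, the Hellmann--Feynman/change-of-variables bound produces the factor $\mathcal{N}_\zeta(I)$, the number of eigenvalue branches of the \emph{full} operator $H_\Omega$ that meet $I$ for some value of $\eta_\zeta$. Every eigenvalue of $H_\Omega$ that happens to lie in $I$ is counted whether or not it moves, so by Weyl asymptotics $\mathcal{N}_\zeta(I)$ is of order $|\Omega|$, not $|\Lambda|$; and the branches with small but nonzero overlap $\langle\psi_n,N_\zeta\psi_n\rangle$ cannot be discarded without bounding precisely the trace you started from. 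No subsequent manipulation with cutoffs $\Theta$ and bounds on $\Tr(\Theta^2(H_0+a)^{-p})$ can repair this, because $\mathcal{N}_\zeta(I)$ is a global spectral count and is not reduced by multiplying the trace by spatially localised factors. (Your route could in principle be salvaged by replacing the crude branch count with a spectral-shift-function bound in the spirit of Birman--Solomyak and Combes--Hislop--Nakamura, but that is a substantially heavier and different toolkit than anything used here.)

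The paper closes this gap by reversing the order of operations and adding one further idea. One first runs the commutator chain on $P_I(H_\Omega)\Theta_1^2$ to produce Hilbert--Schmidt factors $T^{(l)}$ supported in $\Lambda$ (this is where $|\Lambda|$, the powers of $\delta^{-2}$ and of $\#\calI'$ enter, through $\|T^{(l)}\|_{\mathrm{HS}}^2$), inserts the covering $\sum_{\zeta\in\calI'}N_\zeta/F$, and then applies Cauchy--Schwarz for the trace norm to obtain a self-referential inequality of the form $X\leq c\sqrt{X}\sqrt{Y}$ with $X=\Ex[\Tr(\id_{\Lambda'}P_I(H_\Omega))\,|\,\calF_{\calI'}]$. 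Only at that point is spectral averaging invoked, and only for the individual vector states $\psi_j$ appearing in the singular value decomposition of $T^{(l)*}\Theta_{m+1}^2\sqrt{N_\zeta}/F$: each term $\Ex[\langle\psi_j,\sqrt{N_\zeta}P_I\sqrt{N_\zeta}\psi_j\rangle\,|\,\calF_{\calI'}]\leq\|\rho\|_\infty|I|$ is summed against the singular values, so the finite Hilbert--Schmidt norm plays the role of the ``effective rank'' that your $\mathcal{N}_\zeta(I)$ was meant to play, and it is of size $|\Lambda|$ rather than $|\Omega|$. Solving $X\leq c\sqrt{X}\sqrt{Y}$ for $X$ then yields the claim. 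This Cauchy--Schwarz bootstrap combined with vector-wise (rather than trace-wise) spectral averaging is the missing mechanism in your proposal.
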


\begin{proof}
Choose a sequence of functions $\Theta_j$ as in the proof of Lemma \ref{apriori}, satisfying $\Theta_1\equiv1$ on $\Lambda'$ and $\supp\Theta_{m+1}\subset\Lambda$. Let $a=1-E_0$ and define $H'=H-\sum_{\zeta\in\calI'}\eta_\zeta N_\zeta$. Then for all $j$, we have
\begin{align}
P_I(H)\Theta_j^2 &= P_I(H)(H'+a)\Theta_j\frac{1}{H'+a}\Theta_j +P_I(H)[\Theta_j,H']\frac{1}{H'+a}\Theta_j \notag\\
&= (H+a)P_I(H)\Theta_{j+1}^2\Theta_j\frac{1}{H'+a}\Theta_j -\sum_{\zeta\in\calI'} P_I(H)\Theta_{j+1}^2 \eta_\zeta N_\zeta\Theta_j\frac{1}{H'+a}\Theta_j \notag\\
&\quad +P_I(H)\Theta_{j+1}^2[\Theta_j,H']\frac{1}{H'+a}\Theta_j .
\end{align}
By induction, we obtain
\begin{equation}
P_I(H)\Theta_1^2= \sum_{l} \big(\prod_{\zeta\in\mathcal{I}'}\eta_\zeta^{K_\zeta^{(l)}}\big)  (H+a)^{k^{(l)}}P_I(H)\Theta_{m+1}^2 T^{(l)}  \ ,
\end{equation}
where $K^{(l)}\in \N_0^{\mathcal{I}'}$, $k^{(l)}\in\N_0$, $|K^{(l)}|,k^{(l)}\leq m$, and $T^{(l)}$ is a product of $m$ terms of the form
\begin{equation}
\Theta_j\frac{1}{H'+a}\Theta_j,\ -N_\zeta\Theta_j\frac{1}{H'+a}\Theta_j\text{ or } [\Theta_j,H']\frac{1}{H'+a}\Theta_j \ .
\end{equation}
Each of these operators is in the Schatten class $\mathcal{J}_p$ for any $p>D$. Since $m>D/2$, $T^{(l)}$ is a Hilbert-Schmidt operator. As a consequence, we have
\begin{align}
\Tr(\id_{\Lambda'} P_I(H)) &\leq \|\id_{\Lambda'} P_I(H)\Theta_1^2\|_1 \notag\\
&\leq c\sum_l \|\id_{\Lambda'} P_I(H)(H+a)^{k^{(l)}}P_I(H)\Theta_{m+1}^2T^{(l)} \|_1
\end{align}
Due to the assumption on $\calI'$, we can insert  $\sum_{\zeta\in\calI'}N_\zeta/F$ with $F:=\sum_{\zeta\in\calI'} N_\zeta \geq c_0$ next to $\Theta_{m+1}^2$ and obtain
\begin{align}
&\Tr(\id_{\Lambda'} P_I(H)) \leq c \sum_l\sum_{\zeta\in\calI'} \big\|\id_{\Lambda} P_I(H)(H+a)^{k^{(l)}}P_I(H) \frac{N_\zeta}{F}\Theta_{m+1}^2T^{(l)}\big\|_1 \notag\\
&\leq c \sum_l\sum_{\zeta\in\calI'} \sqrt{\Tr(\id_{\Lambda'} P_I(H)(H+a)^{2k^{(l)}}\id_{\Lambda'})} \sqrt{\Tr\big(T^{(l)^*}\Theta_{m+1}^2\frac{N_\zeta}{F}P_I(H)\frac{N_\zeta}{F}\Theta_{m+1}^2 T^{(l)}\big)}  \notag\\
&\leq c\sup_{E\in I}|E+a|^m\sqrt{\Tr(\id_{\Lambda'} P_I(H))}\sum_l\sum_{\zeta\in\calI'} 
\sqrt{\Tr\big(T^{(l)^*}\Theta_{m+1}^2\frac{N_\zeta}{F}P_I(H)\frac{N_\zeta}{F}\Theta_{m+1}^2 T^{(l)}\big)} \ .
\end{align}
Taking the expected value, conditioned on $\calF_{\calI'}$, and using H\"older's inequality, we get
\begin{multline}
\Ex\big[\Tr(\id_{\Lambda'} P_I(H))\big| \calF_{\calI'}\big] \leq c\sup_{E\in I}|E+a|^m \sum_l\sum_{\zeta\in\calI'}\sqrt{\Ex\big[\Tr(\id_{\Lambda'} P_I(H))\big| \calF_{\calI'}\big]}  \\ \times\sqrt{\Ex\big[\Tr\big(T^{(l)^*}\Theta_{m+1}^2\frac{N_\zeta}{F}P_I(H)\frac{N_\zeta}{F}\Theta_{m+1}^2 T^{(l)}\big)\big| \calF_{\calI'}\big]} \ .
\end{multline}
Let $\sum_j \kappa_j \langle\psi_j,\cdot\rangle\phi_j$ be the singular value decomposition of $T^{(l)^*}\Theta_{m+1}^2\sqrt{N_\zeta}/F$. Then it follows that
\begin{multline}
\Ex\big[\Tr\big(T^{(l)^*}\Theta_{m+1}^2\frac{N_\zeta}{F}P_I(H)\frac{N_\zeta}{F}\Theta_{m+1}^2 T^{(l)}\big)\big|\calF_{\calI'}\big] \\
= \sum_j \kappa_j^2 \Ex\big[ \langle\psi_j,\sqrt{N_\zeta}P_I(H)\sqrt{N_\zeta}\psi_j\rangle \big|\calF_{\calI'}\big] 
\leq \|\rho\|_\infty\|T^{(l)}\|_\mathrm{HS}^2|I| \ .
\end{multline}
The last step is obtained by first integrating out the random variable $\eta_\zeta$, where we apply the spectral averaging lemma from \cite{CH94}. The estimates above now allow us to conclude the claim of the lemma. The specific form of the bound is obtained similarly to Lemmas \ref{apriori} and \ref{resampling} by explicit estimates on the Hilbert-Schmidt norms.
\end{proof}

\section{Decay estimates for an energetically restricted resolvent}\label{appendix:hsct}

In this appendix, we study the decay of ``operator kernels'' of a certain type of functions of a self-adjoint operator $H$ that satisfies a Combes-Thomas estimate. More specifically, we analyze how fast a bound on a term of the form
\begin{equation*}
\|\chi_\xbf f(H)\chi_\ybf\|
\end{equation*}
decays with respect to the distance of $\xbf$ and $\ybf$. In \cites{GK03,BGK} the Helffer-Sj\"ostrand formula (cf. \cites{HS,Da}) and a Combes-Thomas estimate (cf. \cites{CT73,GK03}) were used to study the decay of operator kernels in the case of  \textit{$L^1$-Gevrey fuctions} $ f $. Among other things, the results obtained there yield sub-exponential decay of $\|\chi_\xbf \zeta(H)(H-z)^{-1}\chi_\ybf\|$ with respect to the distance of $\xbf$ and $\ybf$, where $\zeta$ is a suitably chosen smooth cutoff function which vanishes in a neighborhood of $\Real z$. This, however, is not sufficient for the purposes of this paper, as we need an exponentially decaying bound. We circumvent this problem by choosing a cutoff function that depends on $\xbf$ and $\ybf$, but nevertheless satisfies adequate estimates.

In the following, we assume that $H$ is a self-adjoint operator $H_\Omega$ on $L^2(\Omega)$ ($\Omega\subseteq\RD$ open) that satisfies a Combes-Thomas bound of the following form with respect to a pseudo-metric $\md$ on $\RD$:
\begin{equation}
\|\chi_\xbf(H_\Omega-z)^{-1}\chi_\ybf\| \leq \frac{C_0}{\dist(z,\sigma(H_\Omega))}\exp\left(\frac{-\mu_0\dist(z,\sigma(H_\Omega))}{1+|z|+\dist(z,\sigma(H_\Omega))}\md(\xbf,\ybf)\right)  \label{combesthomas}
\end{equation}
for all $\xbf,\ybf\in\RD$ and all $z\in\C\setminus\sigma(H_\Omega)$. Estimates of this type hold for a very general class of Schr\"odinger operators (cf. \cite{GK03}). In particular, we note that the random Schr\"odinger operator $H_\Omega^{(n)}(\omega)$ as defined in Section \ref{section:model} satisfies such an estimate with constants $C_0,\mu_0>0$ that are independent of the realization $V^{(n)}(\omega)$ of the random potential.

\begin{theorem}\label{rescutoffdecay}
Let $J\subset\R$ be a bounded interval and $r>0$. For any $\xbf,\ybf\in\RD$, there exists a smooth function $\chi:\R\rightarrow[0,1]$ such that $\chi(E)=0$ if $\dist(E,J)\leq r$, $\chi(E)=1$ if $\dist(E,J)\geq 2r$ and such that for all $z\in\C\backslash\R$ with $\Real z\in J$ the inequality
\begin{equation}
\bigg\|\chi_\xbf\frac{\chi(H_\Omega)}{H_\Omega-z}\chi_\ybf\bigg\| \leq Ce^{-\mu\md(\xbf,\ybf)} \label{result2} \ ,
\end{equation}
holds with positive constants $C$ and $\mu$ that depend only on $J$, $r$ and the constants in the Combes-Thomas estimate \eqref{combesthomas}.
\end{theorem}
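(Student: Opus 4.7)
The plan is to decompose $\chi = 1 - \phi$ with $\phi = \phi_{\xbf,\ybf}$ a compactly supported smooth bump whose smoothness scale is tuned to $L := \md(\xbf,\ybf)$, apply the Helffer--Sj\"ostrand functional calculus to $\phi(H_\Omega)(H_\Omega-z)^{-1}$, and invoke \eqref{combesthomas} on each resolvent $(H_\Omega - w)^{-1}$ arising at a complex argument $w$. The freedom to let $\chi$ depend on $(\xbf,\ybf)$ will be exploited by tying the number of derivatives of $\phi$ controlled in the almost-analytic extension to $L$, which is what promotes the sub-exponential decay obtained in the Gevrey-based approach of \cites{BGK,GK03} to genuine exponential decay.

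First, for each $N \in \N$ I would construct $\phi \in C_c^\infty(\R;[0,1])$ with $\phi \equiv 1$ on $\{\dist(E,J)\leq r\}$, $\supp \phi \subset \{\dist(E,J)\leq 2r\}$, and the Paley--Wiener-type derivative bound $\|\phi^{(k)}\|_\infty \leq (Bk/r)^k$ for $0 \leq k \leq N+1$, with $B$ a universal constant; this is standard, e.g.\ by iterated convolution of the indicator of $\{\dist(E,J)\leq 3r/2\}$ against rescaled bumps. Setting $\chi := 1 - \phi$ meets the prescribed support conditions. Fix $\psi \in C_c^\infty(\R)$ with $\psi \equiv 1$ on $[-1,1]$ and $\supp \psi \subset [-2,2]$, and form the almost-analytic extension
\[
\tilde\phi(x+iy) := \psi(y)\sum_{k=0}^N \phi^{(k)}(x)\frac{(iy)^k}{k!},
\]
whose $\bar\partial$ splits into a tail piece $\tfrac{1}{2}\psi(y)\phi^{(N+1)}(x)(iy)^N/N!$ and a boundary piece $\tfrac{i}{2}\psi'(y)\sum_{k=0}^N \phi^{(k)}(x)(iy)^k/k!$, both supported in the compact set $\{\dist(x,J)\leq 2r\}\times[-2,2]$. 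Since $\phi$ is identically $1$ on a neighborhood of $J$, one has $\phi^{(k)}(\Real z) = 0$ for $k \geq 1$ whenever $\Real z \in J$, and hence $\tilde\phi(z) = \psi(\Imag z)$ on this set.

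Next I would combine the Helffer--Sj\"ostrand formula $\phi(H_\Omega) = \pi^{-1}\iint \bar\partial\tilde\phi(w)(H_\Omega - w)^{-1}\,dm(w)$ with the first resolvent identity and the Cauchy--Pompeiu identity $\tilde\phi(z) = -\pi^{-1}\iint \bar\partial\tilde\phi(w)/(w-z)\,dm(w)$ to derive, for $\Real z \in J$,
\[
\chi(H_\Omega)(H_\Omega - z)^{-1} = (1 - \psi(\Imag z))(H_\Omega - z)^{-1} - \frac{1}{\pi}\iint_\C \frac{\bar\partial \tilde\phi(w)}{w - z}(H_\Omega - w)^{-1}\,dm(w).
\]
After sandwiching with $\chi_\xbf$ and $\chi_\ybf$, the first term vanishes for $|\Imag z| \leq 1$ by the choice of $\psi$, while for $|\Imag z| > 1$ the bound \eqref{combesthomas}---using that $|\Imag z|/(1 + |z| + |\Imag z|) \geq c_J > 0$ uniformly on $\{\Real z \in J,\, |\Imag z| \geq 1\}$ since $J$ is bounded---yields $\|\chi_\xbf(H_\Omega - z)^{-1}\chi_\ybf\| \leq C e^{-\mu_1 L}$. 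For the integral, \eqref{combesthomas} on the compact support of $\bar\partial \tilde\phi$ (where $|w|$ stays bounded) reduces to $\|\chi_\xbf(H_\Omega - w)^{-1}\chi_\ybf\| \leq (C_0/|\Imag w|) e^{-c|\Imag w|L}$. The boundary piece is supported in $\{|\Imag w| \in [1,2]\}$ and therefore inherits direct exponential decay of size $C e^{-\mu_2 L}$ against a finite integral (using that $1/|w-z|$ is locally $L^p$-integrable for $p<2$). The tail piece, supported where $|w - z| \geq r$, is bounded via
\[
\iint \frac{|y|^N \|\phi^{(N+1)}\|_\infty}{r \, N!}\cdot \frac{C_0}{|y|}\, e^{-c|y|L}\,dm(w) \leq C \left(\frac{B(N+1)}{rcL}\right)^{N+1},
\]
using $\int_0^\infty y^{N-1} e^{-cyL}\,dy = (N-1)!/(cL)^N$. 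Choosing $N = \lfloor \alpha L \rfloor$ with $\alpha > 0$ sufficiently small and applying Stirling's formula then gives this contribution a bound $\leq e^{-\mu_3 L}$, which combined with the previous estimates yields the claim.

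The chief technical obstacle is twofold. First, the non-compact support of the prescribed $\chi$ (equal to $1$ at infinity) forces us to work with the compactly supported complement $\phi$ instead, at the cost of carrying along the extra term $(1 - \psi(\Imag z))(H_\Omega-z)^{-1}$; this is harmless once $\psi \equiv 1$ on $[-1,1]$ is imposed. Second, the Paley--Wiener bound $\|\phi^{(k)}\|_\infty \leq (Bk/r)^k$ is known to be optimal up to the constant $B$ for smooth non-analytic bumps (one cannot achieve the analytic rate $k!$), and it must be matched against $N \sim L$ to extract exponential rather than merely sub-exponential decay; this is precisely why a fixed Gevrey-class cutoff as in \cites{BGK,GK03} does not suffice here.
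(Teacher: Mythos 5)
Your proposal is correct and rests on the same core mechanism as the paper's proof --- a cutoff with near-Gevrey derivative bounds $\|\phi^{(k)}\|_\infty\lesssim (Ck/r)^k$, an almost-analytic extension of finite order $N$, the Combes--Thomas bound \eqref{combesthomas} on the auxiliary resolvents, and the crucial choice $N\sim\md(\xbf,\ybf)$ that upgrades the sub-exponential decay of \cite{BGK} to exponential decay --- but the technical organization is genuinely different. The paper treats $f(E)=\chi(E)(E-z)^{-1}$ as a single function, verifies an $L^1$-Gevrey-up-to-order-$N$ condition for it (Lemma~\ref{lemma:l1gevreyuptoN}, with the extra tolerance factor $e^N$ absorbing $N^k\leq e^N k!$), and invokes the three-term bound from the analysis of \cite{BGK} as a black box, optimizing over the free parameters $n$ and $\epsilon$ at the end. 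You instead write $\chi=1-\phi$ with $\phi$ compactly supported, apply Helffer--Sj\"ostrand to $\phi(H_\Omega)$ alone, and peel off the factor $(H_\Omega-z)^{-1}$ via the first resolvent identity and Cauchy--Pompeiu, which produces the explicit boundary term $(1-\psi(\Imag z))(H_\Omega-z)^{-1}$ that you then annihilate for $|\Imag z|\leq 1$ by the choice of $\psi$. Your route is more self-contained (no appeal to the structure of the \cite{BGK} estimate) at the cost of the extra identity manipulation; the paper's route reuses existing machinery and yields a reusable general lemma.

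One bookkeeping point you should make explicit: the ``boundary piece'' $\tfrac{i}{2}\psi'(y)\sum_{k\leq N}\phi^{(k)}(x)(iy)^k/k!$ has sup-norm of order $\sum_{k\leq N}(2Bk/r)^k/k!\sim (2Be/r)^N$, which is exponentially large in $N$ whenever $2Be>r$. Its contribution is therefore not ``$Ce^{-\mu_2\md(\xbf,\ybf)}$ with $C$ fixed'' as written, but $e^{cN}e^{-\mu_2\md(\xbf,\ybf)}$; this is still fine because your final choice $N=\lfloor\alpha\,\md(\xbf,\ybf)\rfloor$ with $\alpha$ small controls this term by exactly the same mechanism as the tail piece, but the smallness of $\alpha$ must be imposed for this term as well, not only for the $\phi^{(N+1)}$ term. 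With that adjustment the argument is complete and the constants depend only on $J$, $r$, $C_0$ and $\mu_0$, as required.
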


As mentioned before, the choice of the cutoff function $\chi$ depends on the distance of $\xbf$ and $\ybf$. In the following, we present a lemma that specifies the conditions the cutoff function (multiplied by $(\cdot-z)^{-1}$) needs to satisfy. Here and in the following, we define $\langle E\rangle:=\sqrt{1+E^2}$ for any $E\in\R$.

\begin{lemma}\label{lemma:l1gevreyuptoN}
Suppose $I\supset\sigma(H_\Omega)$ is an open interval, $N\in\N_{\geq2}$ and $f\in C^\infty(I)$ such that
\begin{equation}
\int_I |f^{(k)}(E)|\langle E\rangle^{k-1}\dE \leq e^NC_{f,I}(C_{f,I}(k+1))^k \label{ineq:l1gevreyuptoN}
\end{equation}
for some $C_{f,I}\geq 1$ and all $k\in\{0,\ldots,N\}$. There exist $C_I,c_I>0$ depending only on $I$ and on the constants in the Combes-Thomas estimate such that
\begin{equation}
\|\chi_\xbf f(H_\Omega) \chi_\ybf\| \leq C_IC_{f,I}^3(\delta\md(\xbf,\ybf)+1)e^{-\delta\md(\xbf,\ybf)}
\end{equation}
for all $\xbf,\ybf\in\RD$ satisfying $\delta\md(\xbf,\ybf)\in[N-2,N-1)$, where $\delta=c_I/(C_{f,I}\ln(e^2C_{f,I}))$.
\end{lemma}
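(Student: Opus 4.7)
The plan is to apply the Helffer--Sj\"ostrand formula to a quasi-analytic extension of $f$ of order $N$. Fix $\tau\in C_c^\infty(\R)$ with $\tau\equiv 1$ on $[-1,1]$ and $\supp\tau\subset[-2,2]$, and define
\[
\tilde f(x+iy)\;=\;\tau\!\Bigl(\tfrac{y}{\langle x\rangle}\Bigr)\sum_{k=0}^{N-1}\frac{(iy)^k}{k!}\,f^{(k)}(x).
\]
A direct calculation of $\partial_{\bar z}\tilde f=\tfrac12(\partial_x+i\partial_y)\tilde f$ produces two terms: the \emph{quasi-analytic remainder} $\tau(y/\langle x\rangle)(iy)^{N-1}f^{(N)}(x)/(2(N-1)!)$, which uses only $f^{(N)}$ and hence can be controlled via \eqref{ineq:l1gevreyuptoN} at $k=N$, and a \emph{cutoff error} supported where $|y|\in[\langle x\rangle,2\langle x\rangle]$, i.e.\ away from the real axis. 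Inserting $f(H_\Omega)=\pi^{-1}\int_\C\partial_{\bar z}\tilde f(z)(z-H_\Omega)^{-1}\,\mathrm{d}m(z)$ between $\chi_\xbf$ and $\chi_\ybf$ and invoking the Combes--Thomas bound \eqref{combesthomas} (using $\dist(z,\sigma(H_\Omega))\ge|\Imag z|$) reduces the task to estimating
\[
\|\chi_\xbf f(H_\Omega)\chi_\ybf\|\;\le\;\frac{C_0}{\pi}\int_\C\frac{|\partial_{\bar z}\tilde f(z)|}{|\Imag z|}\exp\!\Bigl(-\tfrac{\mu_0|\Imag z|}{1+|z|+|\Imag z|}\md(\xbf,\ybf)\Bigr)\,\mathrm{d}m(z).
\]

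The cutoff-error piece is straightforward: on its support the Combes--Thomas exponent is uniformly bounded below by $c\md(\xbf,\ybf)$ for some $c>0$, so after summing over $k\le N-1$ with the help of \eqref{ineq:l1gevreyuptoN} one obtains a total bound of the form $C_IC_{f,I}^2(eC_{f,I})^N\,e^{-c\md(\xbf,\ybf)}$, which is absorbed into the desired estimate once $\delta$ is taken small enough. For the quasi-analytic remainder, the substitution $y=t\langle x\rangle$ yields
\[
\int_\R|f^{(N)}(x)|\,\langle x\rangle^{N-1}\,\mathrm{d}x\;\cdot\;\int_0^2\frac{t^{N-1}}{(N-1)!}\exp\!\Bigl(-c_0\,\tfrac{t}{1+t}\md(\xbf,\ybf)\Bigr)\,\mathrm{d}t,
\]
where $c_0>0$ is from \eqref{combesthomas}. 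The $x$-integral is controlled by $e^NC_{f,I}(C_{f,I}(N+1))^N$ via \eqref{ineq:l1gevreyuptoN}; the $t$-integral, after the substitution $s=c_0t\md(\xbf,\ybf)/(1+t)$ and the bound $\gamma(N,\cdot)\le(N-1)!$ on the incomplete Gamma function, is at most $(2/(c_0\md(\xbf,\ybf)))^N$ up to an exponentially small boundary contribution from $t\in[1,2]$. Combining with Stirling's estimate $(N-1)!\ge (N/e)^{N-1}/\sqrt{2\pi N}$, this gives a bound of order $C_{f,I}\,N^{3/2}\bigl(C_1\,e^2C_{f,I}/(c_0\md(\xbf,\ybf))\bigr)^N$.

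The final step is the choice of $\delta$ that converts this $N$th-power estimate into genuine exponential decay. Under the constraint $N-2\le\delta\md(\xbf,\ybf)<N-1$ one has $1/\md(\xbf,\ybf)\lesssim\delta/N$, so the bound becomes $C_IC_{f,I}N^{3/2}\exp\!\bigl(N\log(C_2C_{f,I}\delta/(c_0N))\bigr)$ for a numerical constant $C_2>0$. Choosing $\delta=c_I/(C_{f,I}\log(e^2C_{f,I}))$ with $c_I$ small enough that $C_2c_I/c_0\le e^{-2}$, the logarithm satisfies $\log(C_2C_{f,I}\delta/(c_0N))\le-1-\log(N\log(e^2C_{f,I}))\le-1$, hence $N\log(\cdots)\le-N\le-\delta\md(\xbf,\ybf)$. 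This produces the claimed decay once the $C_{f,I}^3(\delta\md(\xbf,\ybf)+1)$ prefactor is extracted, where the $(\delta\md(\xbf,\ybf)+1)$ factor absorbs the polynomial corrections $N^{3/2}$ and the contributions of the cutoff-error piece and the boundary $t\in[1,2]$. The main obstacle is precisely this balancing act: the growth $(eC_{f,I})^N$ built into the $L^1$-Gevrey hypothesis at order $N$ must be beaten by the combinatorial gain $(N/e)^{-N}$ from Stirling combined with $\md(\xbf,\ybf)^{-N}$ from Combes--Thomas, and achieving this forces the logarithmic correction $1/\log(e^2C_{f,I})$ in $\delta$, rather than the naive scaling $1/C_{f,I}$.
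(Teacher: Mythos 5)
Your argument is correct and follows essentially the same route as the paper: the paper's proof simply quotes the Helffer--Sj\"ostrand/Combes--Thomas analysis of Bouclet--Germinet--Klein for the intermediate three-term bound (with extension order $n=N-1$) and then performs the same final balancing of $N\approx\delta\,\md(\xbf,\ybf)$ with the logarithmic correction in $\delta$, whereas you rederive that intermediate estimate explicitly. The only cosmetic differences are that you handle the near-axis region by a direct Laplace/incomplete-Gamma integral rather than the $\epsilon$-split of the cited reference (your appeal to Stirling is redundant once the $(N-1)!$ has been consumed by the Gamma bound), and the small-$\md$ case $N=2$ should be dispatched by the trivial bound $\|f\|_\infty\leq\int_I|f'|$, which follows from the $k=1$ hypothesis.
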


The assumption \eqref{ineq:l1gevreyuptoN} differs from the definition of \textit{$L^1$-Gevrey functions of class~$1$} in \cite{BGK} in two ways: We require the bound to hold only for $k\in\{1,\ldots,N\}$ and we allow an additional factor of $e^N$. As a consequence, the bound on the norm of $\chi_\xbf(H-z)^{-1}\chi_\ybf$ holds only in case $\md(\xbf,\ybf)$ lies in an $N$-dependent interval. However, the proof, which we postpone to the end of this section, proceeds like the analysis in \cite{BGK} and needs only a minor modifcation.

The lemma above is of use only if for any $N$, there is a cutoff function $\chi$ for which $\chi(\cdot)(\cdot-z)^{-1}$ satisfies \eqref{ineq:l1gevreyuptoN} with a uniform constant $C=C_{f,I}$. The existence of such a family of functions is ensured by the following proposition, which is a direct consequence of \cite{Ro}*{Proposition 1.4.10}.

\begin{proposition}\label{cutoffbound}
Let $N\in\N$, $r>0$ and $I_0\in\R$ be a bounded interval. Then there exists a function $\chi\in C^\infty(\R)$ with $0\leq \chi\leq 1$, $\chi(E)=1$ if $\dist(E,I_0)\geq r$ and $\chi(E)=0$ if $E\in I_0$ such that for all $E\in\R$ and $k\in\N_0$ with $k\leq N$
\begin{equation}
|\chi^{(k)}(E)|\leq c\bigg(\frac{AN}{r}\bigg)^k \ , \label{ineq:cutoffbound}
\end{equation}
where the constants $c,A>0$ are independent of $k$, $N$, $I_0$ and $r$.
\end{proposition}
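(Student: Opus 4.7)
The plan is to construct $\chi$ as an iterated convolution of a piecewise-linear "coarse" cutoff with $N$ copies of a narrow bump, so that the $N$ convolutions "buy" $N$ derivatives, each contributing a factor of $N/r$ to the bound.

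More concretely, fix once and for all a non-negative $\psi \in C_c^\infty(\R)$ with $\supp \psi \subset [-\tfrac12,\tfrac12]$ and $\int \psi = 1$; set $M := \|\psi'\|_{L^1}$. Given $N$ and $r$, define the scaled bump $\psi_N(E) := \frac{1}{r_0}\psi(E/r_0)$ with $r_0 := r/(6N)$, so that $\supp \psi_N \subset [-r/(12N), r/(12N)]$, $\|\psi_N\|_{L^1}=1$ and $\|\psi_N'\|_{L^1} = \|\psi'\|_{L^1}/r_0 = 6MN/r$. Next, take a piecewise-linear "coarse" cutoff $\chi_0:\R\to[0,1]$ equal to $0$ on the closed $r/3$-neighborhood of $I_0$, equal to $1$ on $\{E : \dist(E,I_0) \geq 2r/3\}$, and interpolated linearly in between. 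Finally, define
\begin{equation*}
\chi \;:=\; \chi_0 * \underbrace{\psi_N * \psi_N * \cdots * \psi_N}_{N \text{ factors}}.
\end{equation*}

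Since each $\psi_N$ is non-negative with unit integral, the convolution preserves the bounds $0\le\chi\le 1$. The support of $\psi_N * \cdots * \psi_N$ ($N$ factors) sits in $[-r/12, r/12]$, so convolving with $\chi_0$ still gives $\chi(E)=0$ for $E\in I_0$ (since $I_0 + [-r/12, r/12]$ lies inside the $r/3$-neighborhood where $\chi_0$ vanishes) and $\chi(E)=1$ when $\dist(E,I_0)\ge r$ (since then $E + [-r/12, r/12]$ lies where $\chi_0 \equiv 1$). This verifies the support and boundary-value conditions.

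For the derivative bound, the key observation is that differentiation can be moved onto the bumps: for any $k \le N$,
\begin{equation*}
\chi^{(k)} \;=\; \chi_0 * \underbrace{\psi_N' * \cdots * \psi_N'}_{k \text{ factors}} * \underbrace{\psi_N * \cdots * \psi_N}_{N-k \text{ factors}}.
\end{equation*}
Applying Young's inequality $\|f*g\|_\infty \le \|f\|_\infty \|g\|_1$ repeatedly,
\begin{equation*}
\|\chi^{(k)}\|_\infty \;\le\; \|\chi_0\|_\infty \cdot \|\psi_N'\|_{L^1}^{k} \cdot \|\psi_N\|_{L^1}^{N-k} \;\le\; \Bigl(\frac{6MN}{r}\Bigr)^{k},
\end{equation*}
which is exactly \eqref{ineq:cutoffbound} with $c = 1$ and $A = 6M$, both independent of $k$, $N$, $I_0$ and $r$.

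The only subtle point — the reason the bound scales like $(AN/r)^k$ rather than $(Ak/r)^k$ — is that we must avoid ever applying more than one derivative to a single factor $\psi_N$: taking $k$ derivatives of one bump would bring in $\|\psi^{(k)}\|_{L^1}/r_0^k$, which grows like $k^k$. The iterated convolution with $N$ separate bumps precisely lets us distribute up to $N$ derivatives one at a time, trading the bad $k^k$ growth for the tame linear-in-$N$ bound, and this is the one place where the hypothesis $k \le N$ is essential.
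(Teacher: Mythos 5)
Your proof is correct, and it is essentially the standard iterated-convolution construction of Gevrey-class cutoffs that the paper invokes by citing Rodino's Proposition 1.4.10 (the paper gives no argument of its own beyond that citation). The key points — distributing the $k\le N$ derivatives one per convolution factor so that each costs only $\|\psi_N'\|_{L^1}=O(N/r)$ rather than incurring $k^k$ growth, and checking the support conditions via the $r/12$ spread of the $N$-fold convolution — are all handled correctly.
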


We can now combine Lemma \ref{lemma:l1gevreyuptoN} and Proposition \ref{cutoffbound} in order to prove Theorem \ref{rescutoffdecay}.

\begin{proof}[Proof of Theorem \ref{rescutoffdecay}]
Let $\xbf,\ybf\in\RD$ be given. We define three auxiliary constants
\begin{align}
a_1&=\sup_{E\in\R\setminus I_0,\Real z\in J}\langle E\rangle/|E-z|, \notag\\ a_2&=\sup_{\substack{E\in\R\setminus I_0 \\ \dist(E,I_0)\leq r}}\langle E\rangle\text{ and }\notag\\
a_3&=\max\{2,Aa_2/(r a_1)\},
\end{align}
where $I_0:=[\inf J-r, \sup J+r]$ and $A$ is chosen as in \eqref{ineq:cutoffbound}. Now let $\delta:=c_I/(C_{f,I}\ln(e^2C_{f,I}))$ as in Lemma \ref{lemma:l1gevreyuptoN} with $C_{f,I}:=\max\{2\pi ca_1a_3,1\}$ ($c$ as in \eqref{ineq:cutoffbound}) and $I=\R$ and let $N\in\N$ such that $\delta\md(\xbf,\ybf)\in[N-2,N-1)$. Choose a cutoff function $\chi$ according to Proposition \ref{cutoffbound}. The function $\chi$ thus satisfies the bound
\begin{equation}
|\chi^{(k)}(E)|\leq c\bigg(\frac{AN}{r}\bigg)^k \leq c\bigg(\frac{A}{r}\bigg)^k e^N k!  \label{cutoffineq}
\end{equation}
for all $E\in\R$ and $k\leq N$.
We define $r_z(E):=(E-z)^{-1}$ and $f(E):=\chi(E)r_z(E)$, and we conclude
\begin{equation}
r_z^{(k)}(E)=(-1)^k k! \frac{1}{(E-z)^{k+1}}
\end{equation}
and hence
\begin{equation}
f^{(k)}(E) = \sum_{j=0}^k \binom{k}{j} \chi_J^{(j)}(E) \frac{(-1)^{k-j}(k-j)!}{(E-z)^{k-j+1}}.
\end{equation}
As a consequence, for any $k\in\{0,\ldots,N\}$, we have
\begin{align}
\int_\R |f^{(k)}(E)|\langle E\rangle^{k-1}\dE &\leq ck!\int_{\supp\chi} \frac{\langle E\rangle^{k-1}}{|E-z|^{k+1}}\dE \notag\\
&\quad +ce^Nk!\sum_{j=1}^k\left(\frac{A}{r}\right)^j\int_{\supp\chi^{(j)}}\frac{\langle E\rangle^{k-1}}{|E-z|^{k-j+1}}\dE \notag\\
&\leq ck!a_1^{k+1}\int_\R \langle E\rangle^{-2}\dE \notag\\
&\quad +ce^Nk!\sum_{j=1}^k\left(\frac{A}{r}\right)^ja_1^{k-j+1}a_2^j\int_{\supp\chi'}\langle E\rangle^{-2}\dE \notag\\
&\leq \pi ck!a_1^{k+1} + \pi ce^Nk!a_1^{k+1}\sum_{j=1}^k\left(\frac{Aa_2}{r a_1}\right)^j \notag\\
&\leq 2\pi ce^Nk!a_1^{k+1}a_3^{k+1} \notag\\
&\leq e^NC_{f,I}(C_{f,I}(k+1))^k.
\end{align}
Lemma \ref{lemma:l1gevreyuptoN} now implies
\begin{equation}
\left\|\chi_\xbf\frac{\chi(H_\Omega)}{H_\Omega-z}\chi_\ybf\right\| = \|\chi_\xbf f(H_\Omega) \chi_\ybf\| \leq C_IC_{f,I}^3(\delta\md(\xbf,\ybf)+1)e^{-\delta\md(\xbf,\ybf)} ,
\end{equation}
where $C_I$, $C_{f,I}$ and $\delta$ depend only on $J$, $r$ and the constants in the Combes-Thomas estimate. This implies the claim of the theorem.
\end{proof}
It remains to prove Lemma \ref{lemma:l1gevreyuptoN}:

\begin{proof}[Proof of Lemma \ref{lemma:l1gevreyuptoN}]
The analysis in \cite{BGK} shows that, up to a factor depending only on $I$ and the constants in the Combes-Thomas estimate, the norm of $\chi_\xbf f(H_\Omega)\chi_\ybf$ can be bounded by the sum of three terms:
\begin{align}
&e^Nn(eC_{f,I})^{n+2}e^{n\ln\epsilon},\notag\\
&e^Nn(eC_{f,I})^{n+2}e^{-c\md(\xbf,\ybf)},\notag\\
&e^Nn(eC_{f,I})^{n+2}e^{-c\epsilon\md(\xbf,\ybf)}.
\end{align}
Here $c>0$ depends only on $I$ and the constants in the Combes-Thomas estimate and both $n$ and $\epsilon$ can be chosen arbitrarily, as long as they satisfy $n\in\{1,\ldots,N-1\}$ and $0<\epsilon\leq 1/2$. Now suppose $\delta\md(\xbf,\ybf)\in[N-2,N-1)$ with $\delta=c/(2e^3C_{f,I}\ln(e^2C_{f,I}))$. Then, choosing $n=N-1$ and $\epsilon=1/(e^3C_{f,I})$, we can estimate
\begin{align}
e^Nn(eC_{f,I})^{n+2}e^{n\ln\epsilon} &\leq (N-1)eC_{f,I}e^{N+N\ln(eC_{f,I})-(N-1)\ln(e^3C_{f,I})} \notag\\
&\leq (\delta\md(\xbf,\ybf)+1)e^5C_{f,I}^3e^{\delta\md(\xbf,\ybf)(1+\ln(eC_{f,I})-\ln(e^3C_{f,I}))} \notag\\
&=(\delta\md(\xbf,\ybf)+1)e^5C_{f,I}^3\exp\left(-\frac{c}{2e^3C_{f,I}\ln(e^2C_{f,I})}\md(\xbf,\ybf)\right)
\end{align}
and
\begin{align}
e^Nn(eC_{f,I})^{n+2}e^{-c\md(\xbf,\ybf)} &\leq e^Nn(eC_{f,I})^{n+2}e^{-c\epsilon\md(\xbf,\ybf)} \notag\\
&\leq (\delta\md(\xbf,\ybf)+1)e^5C_{f,I}^3e^{|x-y|(\delta(1+\ln(eC_{f,I}))-c\epsilon)} \notag\\
&= (\delta\md(\xbf,\ybf)+1)e^5C_{f,I}^3\exp\left(-\frac{c}{2e^3C_{f,I}}\md(\xbf,\ybf)\right).
\end{align}
As $\ln(e^2C_{f,I})\geq 1$, this implies the claimed bound.
\end{proof}

\section*{Acknowledgements}
Michael Fauser was partially supported by TopMath, the graduate program of the Elite Network of Bavaria and the graduate center of TUM Graduate School.

\begin{bibdiv}[References]
\begin{biblist}[\normalsize]

\bib{AENSS}{article}{
   author={Aizenman, Michael},
   author={Elgart, Alexander},
   author={Naboko, Serguei},
   author={Schenker, Jeffrey H.},
   author={Stolz, Gunter},
   title={Moment analysis for localization in random Schr\"odinger
   operators},
   journal={Invent. Math.},
   volume={163},
   date={2006},
   pages={343--413},
}

\bib{AM}{article}{
   author={Aizenman, Michael},
   author={Molchanov, Stanislav},
   title={Localization at large disorder and at extreme energies: an
   elementary derivation},
   journal={Comm. Math. Phys.},
   volume={157},
   date={1993},
   pages={245--278},
}

\bib{AFSH}{article}{
   author={Aizenman, Michael},
   author={Schenker, Jeffrey H.},
   author={Friedrich, Roland M.},
   author={Hundertmark, Dirk},
   title={Finite-volume fractional-moment criteria for Anderson
   localization},
   journal={Comm. Math. Phys.},
   volume={224},
   date={2001},
   pages={219--253},
}

\bib{AW09}{article}{
   author={Aizenman, Michael},
   author={Warzel, Simone},
   title={Localization bounds for multiparticle systems},
   journal={Comm. Math. Phys.},
   volume={290},
   date={2009},
   pages={903--934},
}

\bib{BL}{article}{
	author={Blanc, Xavier},
	author={Lewin, Mathieu},
	title={Existence of the thermodynamic limit for disordered quantum Coulomb systems},
	journal={J. Math. Phys.},
	volume={53},
	pages={095209},
	date={2012},
}

\bib{BGK}{article}{
   author={Bouclet, Jean-Marc},
   author={Germinet, Fran{\c{c}}ois},
   author={Klein, Abel},
   title={Sub-exponential decay of operator kernels for functions of
   generalized Schr\"odinger operators},
   journal={Proc. Amer. Math. Soc.},
   volume={132},
   date={2004},
   pages={2703--2712}
}

\bib{BK}{article}{
   author={Bourgain, Jean},
   author={Kenig, Carlos E.},
   title={On localization in the continuous Anderson-Bernoulli model in
   higher dimension},
   journal={Invent. Math.},
   volume={161},
   date={2005},
   pages={389--426},
}

\bib{BNSS}{article}{
   author={Boutet de Monvel, A.},
   author={Naboko, S.},
   author={Stollmann, P.},
   author={Stolz, G.},
   title={Localization near fluctuation boundaries via fractional moments
   and applications},
   journal={J. Anal. Math.},
   volume={100},
   date={2006},
   pages={83--116},
}

\bib{BCSS}{article}{
   author={Boutet de Monvel, A.},
   author={Chulaevsky, V.},
   author={Stollmann, P.},
   author={Suhov, Y.},
   title={Wegner-type bounds for a multi-particle continuous Anderson model
   with an alloy-type external potential},
   journal={J. Stat. Phys.},
   volume={138},
   date={2010},
   number={4-5},
   pages={553--566},
}

\bib{CaLa90}{book}{
   author={Carmona, Ren{\'e}},
   author={Lacroix, Jean},
   title={Spectral theory of random Schr\"odinger operators},
   series={Probability and its Applications},
   publisher={Birkh\"auser Boston Inc.},
   place={Boston, MA},
   date={1990},
   pages={xxvi+587},
}

\bib{Ch12}{article}{
   author={Chulaevsky, Victor},
   title={Direct Scaling Analysis of localization in disordered systems. II. Multi-particle lattice systems},
   journal={Preprint, \texttt{arXiv:1106.2234v2 [math-ph]}},
   date={2012},
}

\bib{CBS11}{article}{
   author={Chulaevsky, Victor},
   author={Boutet de Monvel, Anne},
   author={Suhov, Yuri},
   title={Dynamical localization for a multi-particle model with an
   alloy-type external random potential},
   journal={Nonlinearity},
   volume={24},
   date={2011},
   pages={1451--1472},
}

\bib{CS09-1}{article}{
   author={Chulaevsky, Victor},
   author={Suhov, Yuri},
   title={Eigenfunctions in a two-particle Anderson tight binding model},
   journal={Comm. Math. Phys.},
   volume={289},
   date={2009},
   pages={701--723},
}

\bib{CS09-2}{article}{
   author={Chulaevsky, Victor},
   author={Suhov, Yuri},
   title={Multi-particle Anderson localisation: induction on the number of
   particles},
   journal={Math. Phys. Anal. Geom.},
   volume={12},
   date={2009},
   pages={117--139},
}
	
\bib{CS14}{book}{
   author={Chulaevsky, Victor},
   author={Suhov, Yuri},
   title={Multi-scale analysis for random quantum systems with interaction},
   series={Progress in Mathematical Physics},
   volume={65},
   publisher={Birkh\"auser/Springer, New York},
   date={2014},
   pages={xii+238},
}

\bib{CH94}{article}{
   author={Combes, J.-M.},
   author={Hislop, P. D.},
   title={Localization for some continuous, random Hamiltonians in
   $d$-dimensions},
   journal={J. Funct. Anal.},
   volume={124},
   date={1994},
   pages={149--180},
}

\bib{CT73}{article}{
   author={Combes, J. M.},
   author={Thomas, L.},
   title={Asymptotic behaviour of eigenfunctions for multiparticle
   Schr\"odinger operators},
   journal={Comm. Math. Phys.},
   volume={34},
   date={1973},
   pages={251--270}
}

\bib{Da}{book}{
   author={Davies, E. B.},
   title={Spectral theory and differential operators},
   series={Cambridge Studies in Advanced Mathematics},
   volume={42},
   publisher={Cambridge University Press},
   place={Cambridge},
   date={1995},
   pages={x+182},
}

\bib{Ekanga1}{article}{
   author={Ekanga, Tr{\'e}sor},
   title={Localization at low energies in the multi-particle tight-binding model},
   journal={Preprint, \texttt{arXiv:1201.2339v3 [math-ph]}},
   date={2013},
}

\bib{Ekanga11}{article}{
   author={Ekanga, Tr{\'e}sor},
title = {On two-particle Anderson localization at low energies},
journal = {Comptes Rendus Mathematique},
volume = {349},
pages = {167 - 170},
year = {2011},
}

\bib{Ekanga2}{article}{
   author={Ekanga, Tr{\'e}sor},
   title={Multi-particle localization for weakly interacting Bernouilli-Anderson models },
   journal={Preprint, \texttt{arXiv:1312.4180 [math-ph]}},
   date={2013},
}

\bib{FS}{article}{
   author={Fr{\"o}hlich, J{\"u}rg},
   author={Spencer, Thomas},
   title={Absence of diffusion in the Anderson tight binding model for large
   disorder or low energy},
   journal={Comm. Math. Phys.},
   volume={88},
   date={1983},
   pages={151--184},
}

\bib{GK03}{article}{
   author={Germinet, Fran{\c{c}}ois},
   author={Klein, Abel},
   title={Operator kernel estimates for functions of generalized
   Schr\"odinger operators},
   journal={Proc. Amer. Math. Soc.},
   volume={131},
   date={2003},
   pages={911--920}
}

\bib{GK13}{article}{
   author={Germinet, Fran{\c{c}}ois},
   author={Klein, Abel},
   title={A comprehensive proof of localization for continuous Anderson
   models with singular random potentials},
   journal={J. Eur. Math. Soc. (JEMS)},
   volume={15},
   date={2013},
   pages={53--143},
}

\bib{GT12}{article}{
   author={Germinet, Fran{\c{c}}ois},
   author={Taarabt, Amal},
   title={Spectral properties of dynamical localization for Schr\"odinger operators},
   journal={Rev. Math. Phys.},
   volume={25}
   date={2013},
   pages={1350016}
}

\bib{GMP}{article}{
   author={Gol{\cprime}d{\v{s}}e{\u\i}d, I. Ja.},
   author={Mol{\v{c}}anov, S. A.},
   author={Pastur, L. A.},
   title={A random homogeneous Schr\"odinger operator has a pure point
   spectrum},
   language={Russian},
   journal={Funkcional. Anal. i Prilo\v zen.},
   volume={11},
   date={1977},
   pages={1--10, 96},
}

\bib{HSS10}{article}{
   author={Hamza, Eman},
   author={Sims, Robert},
   author={Stolz, G{\"u}nter},
   title={A note on fractional moments for the one-dimensional continuum
   Anderson model},
   journal={J. Math. Anal. Appl.},
   volume={365},
   date={2010},
   pages={435--446},
}

\bib{HS}{article}{
   author={Helffer, B.},
   author={Sj{\"o}strand, J.},
   title={\'Equation de Schr\"odinger avec champ magn\'etique et \'equation
   de Harper},
   conference={
      title={Schr\"odinger operators},
      address={S\o nderborg},
      date={1988},
   },
   book={
      series={Lecture Notes in Phys.},
      volume={345},
      publisher={Springer},
      place={Berlin},
   },
   date={1989},
   pages={118--197},
}
\bib{HK13}{article}{
	author={Hislop, Peter},
	author={Klopp, Frederic},
	title={Optimal Wegner estimate and the density of states for N-body, interacting Schrodinger operators with random potentials},
	journal={Preprint, \texttt{arXiv:arXiv:1310.6959 [math-ph]}},
   date={2013},
}

\bib{Ki89}{article}{
   author={Kirsch, Werner},
   title={Random Schr\"odinger operators. A course},
   conference={
      title={Schr\"odinger operators},
      address={S\o nderborg},
      date={1988},
   },
   book={
      series={Lecture Notes in Phys.},
      volume={345},
      publisher={Springer},
      place={Berlin},
   },
   date={1989},
   pages={264--370},
}
		
\bib{Ki08}{article}{
   author={Kirsch, Werner},
   title={A Wegner estimate for multi-particle random Hamiltonians},
   journal={Zh. Mat. Fiz. Anal. Geom.},
   volume={4},
   date={2008},
   pages={121--127, 203},
}

\bib{KN13-1}{article}{
   author={Klein, Abel},
   author={Nguyen, Son T.},
   title={The bootstrap multiscale analysis of the multi-particle Anderson
   model},
   journal={J. Stat. Phys.},
   volume={151},
   date={2013},
   pages={938--973},
}

\bib{KN13-2}{article}{
   author={Klein, Abel},
   author={Nguyen, Son T.},
   title={Bootstrap multiscale analysis and localization for multi-particle continuous Anderson Hamiltonians},
   journal={to appear in J. Spectral Theory, \texttt{arXiv:1311.4220 [math-ph]}},
}

\bib{KM11}{article}{
   author={Klopp, Fr{\'e}d{\'e}ric},
   author={Metzger, Bernd},
   title={The Gross-Pitaevskii functional with a random background potential
   and condensation in the single particle ground state},
   journal={Math. Phys. Anal. Geom.},
   volume={14},
   date={2011},
   pages={321--341},
}

\bib{KlZe09}{article}{
   author={Klopp, Fr{\'e}d{\'e}ric},
   author={Zenk, Heribert},
   title={The integrated density of states for an interacting multiparticle
   homogeneous model and applications to the Anderson model},
   journal={Adv. Math. Phys.},
   date={2009},
   pages={Art. ID 679827, 15},
}

\bib{KS}{article}{
   author={Kunz, Herv{\'e}},
   author={Souillard, Bernard},
   title={Sur le spectre des op\'erateurs aux diff\'erences finies
   al\'eatoires},
   journal={Comm. Math. Phys.},
   volume={78},
   date={1980/81},
   pages={201--246},
}

\bib{Ro}{book}{
   author={Rodino, Luigi},
   title={Linear partial differential operators in Gevrey spaces},
   publisher={World Scientific Publishing Co. Inc.},
   place={River Edge, NJ},
   date={1993},
   pages={x+251},
}

\bib{SYZ12}{article}{
  author={Seiringer, Robert},
  author={Yngvason,  Jakob}, 
  author={Zagrebnov,  Valentin A},
  title={Disordered Bose-Einstein condensates with interaction in one dimension},
  journal={Journal of Statistical Mechanics: Theory and Experiment},
  volume={2012},
  pages={P11007},
   year={2012},
}

\bib{Sto01}{book}{
   author={Stollmann, Peter},
   title={Caught by disorder},
   series={Progress in Mathematical Physics},
   volume={20},
   publisher={Birkh\"auser Boston Inc.},
   place={Boston, MA},
   date={2001},
   pages={xviii+166},
}

\bib{Ven}{article}{
 author={Veniaminov, Nikolaj A.},
   title={The existence of the thermodynamic limit for the system of
   interacting quantum particles in random media},
   journal={Ann. Henri Poincar\'e},
   volume={14},
   date={2013},
   pages={63--94},
}
\end{biblist}
\end{bibdiv}

\end{document}